\documentclass[stsy,nonblindrev]{informs4}
\usepackage{eqndefns-left} %
\RequirePackage{tgtermes}
\RequirePackage{newtxtext}
\RequirePackage{newtxmath}
\RequirePackage{bm}
\RequirePackage{endnotes}

\usepackage{xcolor}
\usepackage{hyperref}
\hypersetup{
     colorlinks   = true,
     linkcolor    = red,
     citecolor    = blue,
     hypertexnames = false
}

\OneAndAHalfSpacedXII %

\usepackage{algorithm}
\usepackage{algpseudocode}
\usepackage{tikz}

\usepackage{graphicx}   %
\usepackage[normalem]{ulem}

\usepackage{amsmath,amssymb}

\usepackage{caption}
\usepackage[labelfont=sf]{subcaption}
\AtBeginEnvironment{align}{\vspace{-\parskip}}
\AtBeginEnvironment{align*}{\vspace{-\parskip}}

\AtBeginEnvironment{align}{%
  \setlength{\abovedisplayskip}{0pt}%
  \setlength{\abovedisplayshortskip}{0pt}%
  \setlength{\belowdisplayskip}{0pt}%
  \setlength{\belowdisplayshortskip}{0pt}%
}
\AtBeginEnvironment{align*}{%
  \setlength{\abovedisplayskip}{3pt}%
  \setlength{\abovedisplayshortskip}{3pt}%
  \setlength{\belowdisplayskip}{3pt}%
  \setlength{\belowdisplayshortskip}{3pt}%
}

\newcommand{\vecx}{\mathbf{x}}
\newcommand{\vecu}{\mathbf{u}}
\newcommand{\vecy}{\mathbf{y}}

\newcommand{\vecf}{\mathbf{f}}
\newcommand{\vece}{\mathbf{e}}

\newcommand{\vecgamma}{\boldsymbol{\gamma}}
\newcommand{\vecalpha}{\boldsymbol{\alpha}}
\newcommand{\veczero}{\mathbf{0}}
\newcommand{\vecz}{\mathbf{z}}
\newcommand{\interior}[1]{\operatorname{int}(#1)}
\newcommand{\R}{\mathbb{R}}

\newcommand{\lamda}{\lambda}
 
\newcommand{\tPi}{\tilde{\pi}}

\newcommand{\pd}[2]{\frac{\partial #1}{\partial #2}}
\newcommand{\pdd}[2]{\frac{\partial ^ 2 #1}{\partial #2 ^ 2}}

\newcommand{\mpd}[3]{\frac{\partial ^ 2 #1}{\partial #3 \partial #2}}

\usepackage{natbib}
 \bibpunct[, ]{(}{)}{,}{a}{}{,}%
 \def\bibfont{\small}%

\EquationsNumberedThrough    %

\TheoremsNumberedThrough     %
\ECRepeatTheorems  %

\MANUSCRIPTNO{}

\begin{document}

\RUNAUTHOR{}

\RUNTITLE{Review-Period Sensitivity in Multiclass Queue Scheduling}

\TITLE{Review-Period Sensitivity in Multiclass Queue Scheduling}

\ARTICLEAUTHORS{%

\AUTHOR{Mir Mikdad Talpur}
\AFF{European Molecular Biology Laboratory, Heidelberg, Germany, \EMAIL{mir.talpur@embl.de}}

\AUTHOR{Vahid Sarhangian}
\AFF{Department of Mechanical and Industrial Engineering, University of Toronto, \EMAIL{v.sarhangian@utoronto.ca}}

} %

\ABSTRACT{%
Optimal control of stochastic queueing networks is typically studied under continuous-time control. In many service settings, however, managers can adjust decisions only at discrete and potentially infrequent review epochs. We study a multiclass queue scheduling problem in which server assignments can be changed only at the beginning of discrete review periods, and examine how performance depends on the review-period length. We analyze a family of associated fluid control problems parameterized by the review-period length and characterize the first- and second-order sensitivity of the value function. For the two-class case, we derive explicit expressions for these derivatives and characterize their signs. We find that for short review periods, the value function need not be monotone. Once the review period is sufficiently large, the value function becomes monotone nondecreasing and may exhibit a convex or linear region before eventually becoming concave. We further numerically examine the robustness of these observations for the original stochastic scheduling problem and show that stochasticity smooths the nonmonotonicity at smaller scales, while the qualitative sensitivity patterns remain visible and become more pronounced as the scale of the system increases.

}%

\KEYWORDS{Discrete-time control, scheduling, queues, fluid approximation, sensitivity analysis, frequency of control} 

\maketitle
\section{Introduction}\label{sec:Intro}
 There is a vast literature on queueing control with diverse applications to healthcare \citep{armony2015patient}, service \citep{aksin2007modern}, and telecommunication systems \citep{harchol2013performance}. Queueing control problems are typically studied in continuous-time, either through examining the structure of the optimal policy for a uniformized Markov Decision Process (MDP) formulation (e.g., \citealt{stidham1985optimal,badian2022delay,hu2025optimal}) or through analysis of an associated fluid (e.g., \citealt{meyn1997stability,maglaras2000discrete, bauerle2000asymptotic}) or diffusion (e.g., \citealt{ata2005heavy, gurvich2009scheduling, harrison2013brownian}) control problem. Under Markovian assumptions, continuous-time control is typically equivalent to controlling the system upon arrival of new customers or service completions. In many practical settings, it may however not be feasible to continuously intervene with the system due to operational constraints. Instead,  operational control may only be possible at the beginning of discrete (possibly far apart) points in time. Restriction to discrete-time control can fundamentally change the structure of the optimal policy, but also raises the question of whether control is still effective. 

An important example and the focus of this paper is the multiclass scheduling problem where a single or multiple servers attend to multiple classes of customers. When holding costs are linear, the celebrated $c\mu$ rule is optimal for single-server systems, and asymptotically optimal in the (conventional) heavy-traffic regime for multiserver systems \citep{van1995dynamic}.  In ``distributed" queueing systems, where servers are shared among queues at different locations, continuous-time scheduling is not feasible. \cite{chan2021dynamic,chan2022utilizing} study dynamic assignment of nursing staff to different areas of an emergency department at the beginning of 8-12 hour shifts. Using a fluid model, they investigate the structure of the optimal policy for a two-class system and show that $c\mu$ policy is no longer optimal as it may lead to excessive idleness mid-shift. They propose asymptotically optimal policies as well as heuristics for the discrete-time stochastic problem and show that \emph{partial flexibility} with respect to reassignment at the beginning of shifts can provide substantial reduction in waiting times. A similar application, first introduced in \cite{martonosi2011dynamic}, arises in airports where the staff can be shared between different security checkpoints, but their assignment can be changed only at discrete points in time. More recently, \cite{wu2023server} propose dynamic programming (DP)-based heuristic for the routing-scheduling problem with time-varying demand, stochastic travel time, and queue-length constraints.

In this paper, we consider the multiclass scheduling problem in discrete-time and investigate the impact of the frequency of control. More specifically, we are interested in understanding the value of more frequent control as the time between decision epochs, or \emph{review periods}, increases from zero (continuous-time control) to infinity (static allocation). To this end, we consider a fluid approximation of the system. Fluid models have played a foundational role in the analysis of stochastic processing networks including the seminal work of \cite{dai1995positive}, which linked stability of the fluid model to positive Harris recurrence of the underlying queueing network; see also \cite{dai2020processing}. Several studies have focused on developing approximate policies for stochastic networks using a translation of the fluid control solution. These translations often take the form of discrete-review policies as a mechanism for implementing controls derived from the fluid approximations. Review periods are designed so that the resulting stochastic policy tracks a target continuous-time control and is asymptotically optimal; see, e.g., \cite{harrison1998heavy}, \cite{maglaras2000discrete}. In contrast, we treat the review period as a given operational constraint and study its impact on the value of control. 

We consider a fluid approximation formally justified under the conventional heavy-traffic regime, e.g., by uniformly scaling the arrival and service rates as well as the initial condition (see, e.g., \citealt{mandelbaum1995strong}). The approximation has been used in several studies in the literature to gain insights on the structure of ``good" scheduling policies (see, e.g.,  \citealt{chen1994fluid,hu2022optimal,zychlinski2023applications, chen2025optimal}). For our problem, the conventional heavy-traffic framework results in simpler dynamics compared to the many-server fluid approximation (e.g., \citealt{mandelbaum1998strong, whitt2002stochastic}) considered in \cite{chan2021dynamic}. 

We study a family of discrete-time, finite-horizon fluid control problems, where the cost and transition functions are parameterized by the length of the review period. We then investigate the first-order and second-order sensitivity of the value function with respect to this parameter. To this end, we leverage results from the literature on sensitivity analysis of nonlinear programs (see, e.g., \citealt{pacaud2025sensitivity}) and apply them to our sequential decision problem through a DP formulation. Additional difficulty arises in our problem because the cost and transition functions are not continuously differentiable everywhere, and the optimal policy could also be non-differentiable at points where the active constraints change. While other papers have considered sensitivity of DPs (e.g., \citealt{hopp1988sensitivity}) our work appears to be the first to examine the sensitivity with respect to the length of the review period.   Our main contributions and results can be summarized as follows.  
\begin{itemize}
    \item \textbf{Structure of the optimal policy}: We provide a characterization of the structure of the optimal policy for the discrete-time multiclass fluid control problem. Specifically, we show that it is always optimal to allocate available capacity to a higher-priority class before lower-priority classes, as long as doing so does not cause the higher-priority class to empty before the end of the current period.  %
    \item \textbf{Sensitivity analysis with respect to the review-period length}: We study a family of DPs parameterized by the length of the review period and characterize the first and second derivatives of the value function with respect to the review period. We do so by implicitly differentiating the Bellman equation of the DP formulation with respect to the length of the review period at regular points where the derivatives exist. 
    \item \textbf{Insights from the two-class problem and extensions}: For the two-class problem, we provide explicit expressions for the first- and second-order derivatives of the value function and examine their signs. Interestingly, we find that the value function is not necessarily monotone with respect to the review period: in general, it is not just the frequency, but also the specific timing of review epochs that affects the value of control. Once the review period is sufficiently large, the value function becomes monotone increasing and may exhibit a convex or linear region before eventually becoming concave. We further discuss the extension of these results for the multiclass problem.
    \item \textbf{Numerical study}: We numerically examine the curvature of the value functions and identify parameter regimes where infrequent control can be valuable. We further examine the relevance of our results for the original stochastic problem by numerically solving a MDP formulation of the problem. We find that the non-monotonicity is largely smoothed out for the stochastic system, but otherwise the structure of the value function is similar to that of the fluid problem. 
\end{itemize}

The rest of the paper is organized as follows. Section \ref{sec:prob} presents the problem formulation, and Section \ref{sec:fluid} characterizes the optimal policy and contrasts it with continuous-time control. Section \ref{sec:monotone} develops the first- and second-order sensitivity results, while Section \ref{section: two class fluid problem} specializes the analysis to the two-class problem. Section \ref{sec:numerics} presents the numerical study, and Section \ref{sec:conc} provides a discussion and directions for future work. Main proofs are provided in the main text while longer proofs are relegated to the e-companion.

\subsection{Notation}\label{subsec:notation}
We use bold notation for model vectors and write all such vectors as columns. For a vector-valued function
$F:\R^l\to\R^m$, with $F(z)=\big(F^1(z),\ldots,F^m(z)\big)^\intercal,
$
let \(y=(y^1,\ldots,y^p)^\intercal\) denote a block of variables among the arguments of \(F\), with \(p\le l\). The Jacobian of \(F\) with respect to \(y\), evaluated at \(z_0\), is
\[
    D_yF(z_0)
    :=
    \left[
        \frac{\partial F^a}{\partial y^b}(z_0)
    \right]_{a=1,\ldots,m;\,b=1,\ldots,p}
    \in\R^{m\times p}.
\]
For a scalar-valued function \(\phi:\R^l\to\R\), we use the column-gradient convention
\[
    \nabla_y\phi(z_0)
    :=
    \left(
        \frac{\partial \phi}{\partial y^1}(z_0),
        \ldots,
        \frac{\partial \phi}{\partial y^p}(z_0)
    \right)^\intercal
    \in\R^p.
\]
Equivalently, \(D_y\phi=(\nabla_y\phi)^\intercal\). In finite dimensions, \(D_yF\) is the usual partial Jacobian of \(F\) with respect to \(y\). In our chain-rule calculations it is convenient to multiply derivatives as column gradients. To keep those formulas compact, whenever a vector-valued function is differentiated with respect to a vector, we write
\[
    \pd{F}{y}
    :=
    \big(D_yF\big)^\intercal
    \in\R^{p\times m}.
\]
With this convention, if \(V:\R^m\to\R\) is differentiable, then
$
    \nabla_y(V\circ F)
    =
    \pd{F}{y}\,\nabla_FV.
$
When the function being differentiated is scalar-valued, \(\pd{\phi}{y}\) denotes the column gradient \(\nabla_y\phi\). We suppress the evaluation point of a derivative whenever it is clear from context. %

For a twice continuously differentiable scalar function \(\phi\), the Hessian with respect to \(y\) is
\[
    \nabla^2_{yy}\phi(z_0)
    :=
    \left[
        \frac{\partial^2\phi}{\partial y^a\,\partial y^b}(z_0)
    \right]_{a,b=1,\ldots,p}
    \in\R^{p\times p}.
\]
More generally, for argument blocks \(y\in\R^p\) and \(w\in\R^q\), we write
\[
    \nabla^2_{yw}\phi(z_0)
    :=
    \left[
        \frac{\partial^2\phi}{\partial y^a\,\partial w^b}(z_0)
    \right]_{a=1,\ldots,p;\,b=1,\ldots,q}.
\]
In the sensitivity formulas, we use the equivalent compact notation,
\[
    \pdd{\phi}{y}:=\nabla^2_{yy}\phi,
    \qquad
    \mpd{\phi}{y}{w}:=\nabla^2_{wy}\phi.
\]
For a vector-valued function \(F\), second derivatives are interpreted componentwise, i.e., \(\nabla^2_{yw}F^a\) denotes the mixed Hessian of the \(a\)-th component. %

Let \(v:\R^{l+m+r}\to\R\) be convex. For \(v(\vecx,\vecy,\vecz)\), with \(\vecx\in\R^l\), \(\vecy\in\R^m\), and \(\vecz\in\R^r\), a vector \(\vecalpha_{\vecy}\in\R^m\) is a subgradient of \(v\) with respect to \(\vecy\) at \((\vecx,\vecy^*,\vecz)\) if
\begin{equation}\label{def: subgradient}
    v(\vecx,\vecy,\vecz)-v(\vecx,\vecy^*,\vecz)
    \ge
    \vecalpha_{\vecy}^{\intercal}(\vecy-\vecy^*)
    \qquad
    \text{for all }\vecy\in\R^m.
\end{equation}
The subdifferential of \(v\) with respect to \(\vecy\) at \((\vecx,\vecy^*,\vecz)\) is,
\begin{equation}\label{def: subdifferential}
    \partial_{\vecy}v(\vecx,\vecy^*,\vecz)
    :=
    \left\{
    \vecalpha_{\vecy}\in\R^m:
    v(\vecx,\vecy,\vecz)-v(\vecx,\vecy^*,\vecz)
    \ge
    \vecalpha_{\vecy}^{\intercal}(\vecy-\vecy^*)
    \text{ for all }\vecy\in\R^m
    \right\}.
\end{equation}

Finally, for a set \(A\subset\R^l\), the interior is,
$
    \interior{A}
    :=
    \{a\in A:\exists\epsilon>0\text{ such that }B_\epsilon(a)\subset A\},
$
where $
    B_\epsilon(a):=\{x\in\R^l:\|x-a\|\le\epsilon\}.
$
The boundary of \(A\) is,
\[
    \partial A
    :=
    \left\{
    a\in\R^l:
    B_\epsilon(a)\cap A\ne\emptyset
    \text{ and }
    B_\epsilon(a)\cap A^c\ne\emptyset
    \text{ for all }\epsilon>0
    \right\}.
\]

\section{Problem Description}\label{sec:prob}
We consider a multiclass, multiserver queueing system. There are $K$ customer classes and $N$ identical servers. Customers of class $k \in \{1,\ldots,K\}$ arrive according to independent Poisson processes with rate $\lambda^k$ and have service requirements that are exponentially distributed with rate $\mu^k$. At each decision epoch, the decision maker decides how many servers to assign to each class. Let $\delta>0$ denote the time between decision epochs and assume a finite horizon of length $T$. Hence, there are $n(\delta)=\lceil T/\delta \rceil$ decision epochs over the horizon indexed by $i\in\{1,\ldots,n(\delta)\}$. Once the servers are assigned to a class, they serve the customers according to a non-idling First-Come, First-Served (FCFS) policy. If the new number of servers in a period is smaller than the number of customers in service, the last customers to enter service are preempted and put back in the queue. 

Let $\textbf{X}(t):=(X^1(t),\ldots, X^K(t))$ denote the system state at time $t\geq0$ where $X^k(t)$ is the number of class $k$ customers at time $t$. Customers of class $k$ incur a linear holding cost with rate $h^k$ while in system. We assume that $h^1\mu^1 > \ldots > h^K\mu^K$ to avoid ties in the $c\mu$ indices, which can lead to nonunique optimal allocations in our fluid analysis. This allows us to focus on structural insights without introducing additional tie-breaking rules. The objective is to find an admissible policy that maps the state at the beginning of each review period to a non-negative server assignment vector $\textbf{U}(t)=(U^1(t),\ldots,U^K(t))\in\mathbb Z_+^K$ that satisfies $\sum_{k=1}^K U^k(t)=N$ for all $t\geq 0$ and minimizes the expected finite-horizon cost,
\begin{equation}\label{eq: original stochastic problem}
    \mathbb{E} \left[\sum_{k=1}^K\int_0^{T}  h^k X^k(s)ds | \textbf{X}(0)= \textbf{X}\right]. %
\end{equation}
Denote by $V_i(\textbf{X},\delta)$ the value function for period $i\in\{1,\ldots,n(\delta)\}$ starting with $\textbf{X}$ customers in the system. We are interested in examining how $V_1(\textbf{X}, \delta)$ changes as $\delta$ varies. Because the value function of the stochastic problem is not tractable, we instead focus on its fluid approximation. We revisit the stochastic system in Section \ref{sec:numerics} and numerically examine the validity of the observations based on the fluid model for the stochastic value function. 

\subsection{Associated Fluid Control Problems}
In this section, we present a family of associated fluid control problems, parametrized by the time between decision epochs. %

We consider a family of discrete-time, continuous-state control problems over a finite-horizon of length $T$. Denote by $\delta>0$ the length of a period and let $\vecx_i := (x_i^1, \ldots,  x_i^K)^\intercal$ denote the $K$-dimensional state variable at the beginning of period $i$. Let \(\mathbf{u}_i := (u_i^1,\ldots,u_i^K)^\top\) denote the action in period \(i\), where \(u_i^k\) is the fraction of aggregate service capacity allocated to class \(k\), with feasible action set
\begin{equation}\label{eq: feasible set U}
\mathcal{U}:=\left\{\mathbf{u}\in\mathbb{R}+^K:\mathbf{u}^\top\mathbf{e}=1\right\},
\end{equation}
where \(\mathbf{e}\) denotes the \(K\)-dimensional vector of ones. We denote by $g: \mathbb{R}_+^{2K + 1} \to \mathbb{R}_+$ the single-period cost function which maps the initial state $\vecx_i$ and action $\vecu_i$ in period $i$ to a non-negative cost. Finally, we denote by $\vecf : \mathbb{R}^{2K + 1} \to \mathbb{R}^{K}$ the vector-valued transition function that maps the current state $\vecx_i$, action $\vecu_i$, and the length of the period $\delta$ to the state at the beginning of the next period.

\begin{remark}\label{model_remark}
   Under the conventional heavy-traffic framework, the dynamics of the $N$-server system is equivalent to the system with $N=1$, in which the single server operates $N$ times faster than the original servers \citep{chen1994fluid}. We thus assume hereafter that $N=1$. The resulting fluid model allows arbitrary fractions of the aggregate capacity to be allocated across classes. Since the original stochastic system permits only integer server assignments, the fluid control problem can be viewed as a continuous-allocation relaxation of the corresponding limiting problem as discussed further in Section \ref{sec: stochastic system experiments}.
\end{remark}

Let $\lambda^k$, $\mu^k$, and $h^k$ respectively denote the arrival rate, service rate, and holding cost rate for class $k\in \{1 , \hdots, K \}$ customers and assume $h^1\mu^1 > \ldots > h^K\mu^K$. Let $x_i^k (s)$ denote the amount of class $k$ fluid $s \in  [0,\delta)$ time units into period $i$. Then, $x_i^k (s)$ evolves according to,
\begin{equation}
\frac{d^+x_i^k(s)}{ds}
=
\begin{cases}
\lambda_k-\mu_k u_i^k,
& x_i^k(s)>0,\\[1mm]
(\lambda_k-\mu_k u_i^k)^+,
& x_i^k(s)=0.
\end{cases}
\end{equation} 
where $\frac{d^+}{ds}$ denotes right-derivative; see, e.g., \cite{meyn2008control}. It follows that the $k$th element of the transition function \textbf{f} satisfies, 
\begin{equation}\label{eq:f_component_+} f^k(x_i^k, u_i^k , \delta) = \left( x_i^k + \delta ( \lambda^k - u_i^k \mu^k)\right)^+.       
\end{equation}
For $x_i^k>0$, denote the \emph{clearing time} of class \(k\) under the period-\(i\) allocation by,
$
\sigma_i^k(x_i^k,u_i^k) := \inf\{s\geq 0: x_i^k(s)=0\}.
$
We have,
\begin{equation}\label{eq:sigma_cleartime}
\sigma_i^k(x_i^k, u_i^k) = \begin{cases}
   \frac{x_i^k}{\mu^k u_i^k - \lambda^k}, & \text{ if } \lambda^k -  u_i^k \mu^k < 0, \\
   \infty, & \text{ otherwise}.
\end{cases} 
\end{equation}
Class \(k\) clears during period \(i\) if and only if \(\sigma_i^k(x_i^k,u_i^k)\leq \delta\).
We can therefore rewrite the transition function for class $k$ given in \eqref{eq:f_component_+} as,
\begin{equation}\label{eq:f_component_interval}
    f^k(x_i^k , u_i^k , \delta) = \begin{cases}
        x_i^k + \delta (\lambda^k - u_i^k \mu^k), & \delta < \sigma_i^k, \\
        0, & \text{else}.
    \end{cases}
\end{equation}
When the clearing time is finite, we also define the minimum capacity required to clear class $k$ during period $i$ as,
\begin{equation}\label{eq: minimal clear capacity}
    \bar{u}_i^k = \frac{x_i^k}{\delta \mu^k} + \frac{\lamda^k}{\mu^k}.
\end{equation}
The vector-valued transition function $\vecf$ is then given by,
\begin{equation}\label{eq:f_vec_sched}
\vecf (\vecx_i, \vecu_i, \delta) = \left( f^1 (x_i^1, u_i^1, \delta) ,\  f^2 (x_i^2, u_i^2, \delta) ,\ldots , f^K (x_i^K, u_i^K, \delta) \right),
\end{equation} %
and the single-period cost function is given by,
\begin{equation}\label{eq:g_sched}
g(\vecx_i, \vecu_i, \delta ) = \sum_{k = 1}^{K} \int_0^\delta h^k \left( x_i^k + s (\lambda^k - u_i^k\mu^k ) \right)^+ ds  = \sum_{k = 1}^K \int_0^{\delta \wedge \sigma_i^k} h^k \left( x_i^k + s (\lambda^k - u_i^k \mu^k)\right) ds .
\end{equation}

The following proposition establishes the convexity and differentiability of the transition and single-period cost functions. 

\begin{proposition}\label{prop: convexity of g and f}
The following properties hold:
\begin{enumerate}
    \item[(i)] The vector-valued mapping $\vecf(\vecx_i, \vecu_i, \delta)$
    is componentwise convex in $(\vecx_i,\vecu_i)$. That is, for each $k \in \{1,\dots,K\}$, the scalar function $f^k(x_i^k, u_i^k, \delta)$ is jointly convex in $(x_i^k, u_i^k)$.
    
    \item[(ii)] For fixed \(\delta>0\), the function \(g(\mathbf x_i,\mathbf u_i,\delta)\) is jointly convex in \((\mathbf x_i,\mathbf u_i)\) and continuously differentiable at all points outside \(\left\{
(\mathbf x_i,\mathbf u_i):
x_i^k=0,\;
u_i^k=\lambda^k/\mu^k
\text{ for some } k \in \{1,\dots,K\}
\right\}.\) For fixed \((\mathbf x_i,\mathbf u_i)\), \(g\) is continuously differentiable in \(\delta\).
    
    \item[(iii)] For each $k \in \{1,\dots,K\}$, the functions $f^k(x_i^k,u_i^k,\delta)$ and $g(\vecx_i,\vecu_i,\delta)$ are twice continuously differentiable in $(\vecx_i,\vecu_i,\delta)$ on any open set that does not intersect
         $\bigl\{(\vecx_i,\vecu_i,\delta) : x_i^k + \delta(\lambda^k - \mu^k u_i^k) = 0
        \text{ for some } k \in \{1,\dots,K\} \bigr\}.$
\end{enumerate}
\end{proposition}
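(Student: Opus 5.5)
Everything reduces to scalar building blocks. Fix $k$ and write $z := x_i^k + \delta(\lambda^k - \mu^k u_i^k)$, which is affine in $(x_i^k,u_i^k)$ for fixed $\delta$. For part (i), $f^k = z^+$ is the composition of the convex nondecreasing map $t\mapsto t^+$ with an affine map. It is therefore jointly convex in $(x_i^k,u_i^k)$, which gives componentwise convexity of $\vecf$.

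For part (ii), I would use the first expression in \eqref{eq:g_sched}. For each $s\in[0,\delta]$, the integrand $h^k(x_i^k + s(\lambda^k-\mu^k u_i^k))^+$ is convex in $(x_i^k,u_i^k)$ by the same composition argument, since $h^k\ge 0$. Integrating over $s$ and summing over $k$ preserves convexity.

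For differentiability I would compute $g$ in closed form, class by class. Set $a:=\lambda^k-\mu^k u_i^k$ and $x:=x_i^k\ge 0$, and let $G_k(x,a,\delta):=\int_0^\delta (x+sa)^+ds$. There are two regimes:
\begin{itemize}
\item If $x+\delta a\ge 0$, then $G_k = x\delta + a\delta^2/2$.
\item If $x+\delta a<0$, so that $a<0$ and the class clears at time $-x/a$, then $G_k = -x^2/(2a)$.
\end{itemize}
On the switching surface $x+\delta a=0$ the two formulas agree, and so do their gradients. With $a=-x/\delta$, the first regime gives $\partial_x G_k=\delta$ and $\partial_a G_k=\delta^2/2$. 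The second gives $\partial_x G_k=-x/a=\delta$ and $\partial_a G_k = x^2/(2a^2)=\delta^2/2$. Hence $G_k$ is $C^1$ in $(x,a)$ wherever the second formula is well defined, that is, away from $a=0$ with $x=0$.

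The main obstacle is the degenerate point $x=0$, $a=0$, i.e.\ $u_i^k=\lambda^k/\mu^k$. There the expression $-x^2/(2a)$ is not continuously differentiable: approaching along $x=-\delta a\cdot c$ for different constants $c$ gives limiting gradients that differ. This is exactly the excluded set in the statement.

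Differentiability in $\delta$ for fixed $(\vecx_i,\vecu_i)$ is easier. By the Leibniz rule, $\partial_\delta G_k = (x+\delta a)^+$, which is continuous in $\delta$. So $g$ is $C^1$ in $\delta$ everywhere. The chain rule through the affine map $u\mapsto a$ transfers all of these statements to $(\vecx_i,\vecu_i)$.

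For part (iii), take an open set avoiding $\{x_i^k+\delta(\lambda^k-\mu^k u_i^k)=0\}$ for every $k$. By connectedness and continuity, on each connected component the sign of $z$ is constant for each $k$. Hence $f^k$ is either the polynomial $z$ or identically $0$, and $G_k$ is either the polynomial $x\delta+a\delta^2/2$ or $-x^2/(2a)$. In the latter case $a<0$ strictly, because $x\ge0$ and $x+\delta a<0$. Both forms are $C^\infty$ in $(x,u,\delta)$ there, and summing over $k$ gives the claim for $g$.
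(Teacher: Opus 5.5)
Your proposal is correct and follows essentially the same route as the paper: for (i), the positive part of an affine map; for (ii), convexity by integrating over $s$, $C^1$ by writing each summand in closed form on the non-clearing and clearing regimes and checking that values and gradients agree on the surface $x_i^k+\delta(\lambda^k-\mu^k u_i^k)=0$ (the paper parametrizes this through the clearing time $\sigma_i^k$ rather than your $a=\lambda^k-\mu^k u_i^k$), plus the Leibniz rule in $\delta$; and for (iii), local constancy of the sign of $x_i^k+\delta(\lambda^k-\mu^k u_i^k)$. Your handling of points with $x_i^k=0$, which uses full neighborhoods in the $(x,a)$ plane, is slightly more careful than the paper's, since the paper only evaluates along the slice $x_i^k=0$.
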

For each $k$, the transition function $f^k(x_i^k,u_i^k,\delta)$ is not differentiable at points where $x_i^k+\delta(\lambda^k-\mu^k u_i^k)=0$, corresponding to both points where a positive queue clears exactly at the end of a period and \emph{maintained-empty} points satisfying $x^k_i=0$ and $u^k_i=\lambda_i^k/\mu_i^k$ for some $k \in \{1,\dots,K\}$. When \(x_i^k>0\), the single-period cost function \(g\) remains continuously differentiable when the queue clears exactly at the end of the period, although twice continuous differentiability fails at this boundary.

\textbf{DP Formulation}. Denote by $v_i(\vecx,\delta)$ the optimal cost-to-go for period $i$ starting in state $\vecx$ and with the length of periods fixed at $\delta$.
Define
$n(\delta) := \left\lceil T/\delta\right\rceil$, and 
let $r(\delta) := T-(n(\delta)-1)\delta \in (0,\delta]$
denote the length of the last (possibly partial) period. For
\(i \in\{1,\ldots,n(\delta)\}\), let
\[
\Delta_i(\delta):=
\begin{cases}
\delta, & i<n(\delta),\\
r(\delta), & i=n(\delta),
\end{cases}
\]
denote the length of period \(i\). The last-period value function is then,
\begin{equation}\label{eq:terminal_value}
v_n(\vecx_n,\delta)=\min_{\vecu_n\in\mathcal U} g(\vecx_n,\vecu_n,r(\delta)).
\end{equation}
(Note that we have suppressed the dependence of $n(\delta)$ on $\delta$ for brevity.) For $i \in \{1,2,\dots,n(\delta)-1\}$ the value functions satisfy the Bellman recursion
\begin{equation}\label{eq:fluid_opt}
 v_i(\vecx_i, \delta) = \min_{\vecu_i \in \mathcal{U}}
 \left[g (\vecx_i , \vecu_i, \delta)+ v_{i+1} \bigl(\vecf (\vecx_i, \vecu_i, \delta), \delta \bigr)\right].
\end{equation}
We further define, $q_n(\vecx_n,\vecu_n,\delta):= g(\vecx_n,\vecu_n,r(\delta))$ and
\begin{equation}\label{eq: q_t}
    q_i(\vecx_i , \vecu_i, \delta) := g (\vecx_i , \vecu_i, \delta)+ v_{i+1} (\vecf (\vecx_i, \vecu_i, \delta), \delta ),
\end{equation}
for $i\in \{1,\ldots,n(\delta)-1\}$.
The following proposition establishes the convexity of $q_i$ and the value function $v_i$ for fixed $\delta>0$. 
\begin{proposition}\label{prop: convex and differentiable structure}
   Fix $\delta>0$. The following statements hold: (i)  $q_i(\vecx_i, \vecu_i, \delta)$ is convex in $(\vecx_i, \vecu_i)$; (ii) $v_i(\vecx_i,\delta)$ is convex in $\vecx_i$; and (iii) \(v_i(\vecx_i,\delta)\) is componentwise nondecreasing in \(\vecx_i\), i.e., if \(\vecx_i\leq\vecy_i\) componentwise, then \(v_i(\vecx_i,\delta)\leq v_i(\vecy_i,\delta)\).
\end{proposition}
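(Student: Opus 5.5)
We argue by backward induction on $i$, starting from $i=n(\delta)$, and prove (i)--(iii) jointly at each stage. For the induction step from $i+1$ to $i$ we need that $v_{i+1}(\cdot,\delta)$ is convex and componentwise nondecreasing on $\R_+^K$.

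\emph{Base case.} At $i=n$ we have $q_n(\vecx_n,\vecu_n,\delta)=g(\vecx_n,\vecu_n,r(\delta))$. This is jointly convex in $(\vecx_n,\vecu_n)$ by Proposition~\ref{prop: convexity of g and f}(ii), applied with period length $r(\delta)>0$.

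\emph{Convexity of $q_i$ for $i<n$.} By Proposition~\ref{prop: convexity of g and f}(i), each $f^k$ is jointly convex in $(x_i^k,u_i^k)$. Hence $\vecf$ is componentwise convex, and its values lie in $\R_+^K$. We then use the standard composition rule: a convex function that is nondecreasing in each argument, composed with componentwise convex maps, is convex. Concretely, take $z_\theta=\theta z+(1-\theta)z'$ with $z=(\vecx,\vecu)$, $z'=(\vecx',\vecu')$. Componentwise convexity gives $\vecf(z_\theta)\le \theta\vecf(z)+(1-\theta)\vecf(z')$ componentwise. Monotonicity of $v_{i+1}$ then yields $v_{i+1}(\vecf(z_\theta))\le v_{i+1}(\theta\vecf(z)+(1-\theta)\vecf(z'))$, and convexity of $v_{i+1}$ finishes the inequality. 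Adding the convex $g$ shows that $q_i$ is convex.

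\emph{Convexity of $v_i$ (statement (ii)).} We have $v_i(\vecx,\delta)=\min_{\vecu\in\mathcal U}q_i(\vecx,\vecu,\delta)$, a partial minimization of a jointly convex function over the convex compact set $\mathcal U$, which does not depend on $\vecx$. The minimum is attained because $q_i$ is continuous: $g$ and $\vecf$ are continuous, and $v_{i+1}$ is continuous by induction, being a minimum of a continuous function over a compact set. So $v_i$ is convex. Explicitly, for $\vecx,\vecx'$ with minimizers $\vecu,\vecu'$, the point $\theta\vecu+(1-\theta)\vecu'\in\mathcal U$ is feasible at $\theta\vecx+(1-\theta)\vecx'$. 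Joint convexity of $q_i$ then gives $v_i(\theta\vecx+(1-\theta)\vecx')\le\theta v_i(\vecx)+(1-\theta)v_i(\vecx')$.

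\emph{Monotonicity of $v_i$ (statement (iii)).} Fix $\vecu$ and suppose $\vecx\le\vecy$. The integrand $(x^k+s(\lambda^k-u^k\mu^k))^+$ is nondecreasing in $x^k$, so $g(\vecx,\vecu,\cdot)\le g(\vecy,\vecu,\cdot)$. Likewise $f^k=(x^k+\delta(\lambda^k-u^k\mu^k))^+$ is nondecreasing in $x^k$, so $\vecf(\vecx,\vecu,\delta)\le\vecf(\vecy,\vecu,\delta)$. By the inductive monotonicity of $v_{i+1}$ we get $q_i(\vecx,\vecu)\le q_i(\vecy,\vecu)$ for every $\vecu$. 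Taking minima over $\vecu\in\mathcal U$ gives $v_i(\vecx)\le v_i(\vecy)$. The base case is the same argument with $g$ alone.

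\emph{Main obstacle.} The only delicate step is the composition in the convexity of $q_i$. Composing a convex function with a convex map need not be convex, so monotonicity of $v_{i+1}$ is essential. This is why (iii) has to be carried in the same induction as (i)--(ii) rather than proved afterward. We should also note that $\vecf$ maps into $\R_+^K$, the domain on which $v_{i+1}$ is defined and monotone, and that convex combinations of its values stay in $\R_+^K$.
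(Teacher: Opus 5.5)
Your proposal is correct and follows essentially the same backward-induction argument as the paper: convexity of \(q_n=g(\cdot,\cdot,r(\delta))\) at the base, the monotone-composition rule for \(q_i\) (with \(v_{i+1}\) convex and componentwise nondecreasing and \(\vecf\) componentwise convex), partial minimization over the fixed convex set \(\mathcal U\) for \(v_i\), and pointwise ordering of \(q_i(\cdot,\vecu,\delta)\) in the state for monotonicity. Three details you add are left implicit in the paper: the explicit composition inequality, attainment of the minimum via continuity and compactness, and the observation that convex combinations of values of \(\vecf\) stay in \(\R_+^K\).
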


In closing, we discuss two extreme cases for $\delta$. Let $v_1 (\vecx_1, 0)$ denote the value function for the continuous-time control problem. In this case, the greedy $c\mu$ policy is optimal (see, e.g., \citealt{meyn2008control}).  Under the $c\mu$ policy, capacity is allocated in decreasing order of the $h^k\mu^k$ index. If class $k$ has positive fluid, it receives all remaining capacity until it is emptied (if possible). Once a class is emptied, it is assigned
exactly the minimal capacity $\lambda^k/\mu^k$ required to maintain zero fluid.
If the system is overloaded so that
$
\sum_{l=1}^{K} (\lambda^l/\mu^l) > 1,
$
then there exists a $k^\ast$ such that classes $1,\dots,k^\ast$ receive
positive capacity while class $k^\ast+1$ and all lower-priority classes receive zero
capacity.

At the other extreme, when $\delta=T$, there is a single decision period over the entire horizon,
and no intermediate control actions are possible.
Denote by $v_1(\vecx,T)$ the corresponding value function.
In this case, the optimal allocation is obtained by
minimizing the single-period (convex) cost function \eqref{eq:g_sched} over the feasible set $\mathcal{U}$
with period length $T$.

\section{Structure of the Optimal Fluid Policy}\label{sec:fluid}
In this section, we provide a characterization of an optimal policy for the $K-$class fluid scheduling problem for a fixed $\delta>0$.

\begin{theorem}\label{thm:K_class optimal policy}
    Assume $h^1\mu^1 > h^2 \mu^2 > \dots > h^K \mu^K$. For any period $i$, every optimal policy for the $K-$class fluid scheduling problem satisfies,
    \begin{equation}\label{eq: k class op inequality}
     u_i^k  \geq \min\left( 1 - \sum_{l = 1}^{k-1}u_i^l, \frac{x_i^k}{\Delta_i(\delta) \mu^k} + \frac{\lamda^k}{\mu^k} \right) , \quad  1\leq k \leq K - 1. 
    \end{equation} 
\end{theorem}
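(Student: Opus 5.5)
We argue by contradiction with an interchange argument. Fix a period $i$ with length $\Delta:=\Delta_i(\delta)$, a state $\vecx_i$, and suppose $\vecu_i$ is optimal but violates \eqref{eq: k class op inequality} for some $k\le K-1$. Then
\[
u_i^k<\min\Big(1-\sum_{l<k}u_i^l,\ \bar u_i^k\Big),
\]
with $\bar u_i^k$ computed with period length $\Delta$. The first bound in the minimum gives $\sum_{l>k}u_i^l>0$, so some lower-priority class $j>k$ has $u_i^j>0$. The second bound implies that class $k$ does not clear before the end of the period, so $x_i^k(s)>0$ for all $s\in[0,\Delta)$ (if $x_i^k=0$, we still have $\lambda^k-\mu^ku_i^k>0$).

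We then build a perturbed action $\tilde\vecu_i$ that moves a small amount $\epsilon>0$ of capacity from class $j$ to class $k$: $\tilde u_i^k=u_i^k+\epsilon$ and $\tilde u_i^j=u_i^j-\epsilon$. We choose $\epsilon$ small enough that $\tilde u_i^k\le\bar u_i^k$, so class $k$ still stays positive (or at most just clears at $\Delta$). Along the period, class $k$'s trajectory drops by exactly $\epsilon\mu^k s$ at time $s$. Class $j$'s trajectory rises by at most $\epsilon\mu^j s$; the truncation $(\cdot)^+$ only makes the increase smaller. Integrating against the holding costs, the change in the single-period cost $g$ is at most
\[
-h^k\mu^k\epsilon\Delta^2/2+h^j\mu^j\epsilon\Delta^2/2<0,
\]
because $h^k\mu^k>h^j\mu^j$. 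This strict $c\mu$ gap is where the tie-free assumption is used.

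The end-of-period state changes too: $f^k$ decreases by $\epsilon\mu^k\Delta$ and $f^j$ increases by at most $\epsilon\mu^j\Delta$. This effect on $v_{i+1}$ is the main obstacle, because Proposition \ref{prop: convex and differentiable structure}(iii) gives only monotonicity and the state is no longer componentwise smaller. To handle it, we do not perturb just period $i$. We compare the whole remaining trajectory using a \emph{coupled} continuation policy from period $i+1$ onward. Under the perturbed plan, period $i+1$ reuses the original optimal action, except that class $k$'s surplus capacity is shifted back to class $j$ at the instants when the original class-$k$ trajectory would empty. More simply, we can shift capacity back from $k$ to $j$ at the start of period $i+1$ by exactly $\epsilon\mu^k\Delta/(\mu^k\Delta_{i+1})$, which restores class $k$ to its original trajectory level by the end of period $i+1$. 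This returns some service to $j$ and reduces $j$'s excess.

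I would instead aim for a cleaner reduction. Work directly with the continuous-time fluid over $[t_i,T]$ and view the cost as the integral of $h$-weighted queue lengths. Exchanging $\epsilon$ work-units between classes $k$ and $j$ produces a class-$k$ deficit of $\epsilon\mu^k\Delta$ and a class-$j$ surplus of at most $\epsilon\mu^j\Delta$ at the end of period $i$. Both differences persist until they are absorbed by the future allocations, which we hold fixed: the $k$-deficit persists while class $k$ is positive, and the $j$-surplus persists or is reduced by idling.

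To close the argument, the future gain from class $k$ must dominate the future loss from class $j$. This is not automatic, since $h^k>h^j$ is not assumed. So the key step is to select, in period $i+1$, a reverse transfer of $\epsilon\Delta/\Delta_{i+1}$ capacity from $k$ back to $j$, which makes the state at the end of period $i+1$ identical to the original one; feasibility holds because class $k$ then exactly regains its original level. The total cost difference over periods $i$ and $i+1$ is then
\[
\epsilon\,(h^j\mu^j-h^k\mu^k)\,\Delta\big(\Delta/2+\Delta_{i+1}/2\big)\le0,
\]
with strict inequality. This contradicts optimality, since $v_{i+2}$ is unchanged. When $i=n(\delta)$ there is no future term and the single-period comparison suffices. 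The remaining checks are the boundary cases: truncation at zero for class $j$ in either period, which only helps, and class $k$ hitting zero in period $i+1$, which is handled by choosing $\epsilon$ sufficiently small.
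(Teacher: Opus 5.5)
Your reverse transfer of $\epsilon\Delta/\Delta_{i+1}$ is the repair the paper uses in its Case 2. Your accounting $\epsilon(h^j\mu^j-h^k\mu^k)\Delta(\Delta+\Delta_{i+1})/2$ is correct, but only when the original class-$k$ trajectory does not empty strictly inside period $i+1$.

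\textbf{The clearing case fails at first order.} Suppose that under the original plan class $k$ empties at some $\sigma<\Delta_{i+1}$. Then your construction fails, and choosing $\epsilon$ small does not help, because the defect is first order in $\epsilon$.
\begin{itemize}
\item After $\sigma$ both class-$k$ trajectories are empty. So the period-$(i+1)$ class-$k$ saving is only $h^k\mu^k\epsilon\Delta\bigl(\sigma-\sigma^2/(2\Delta_{i+1})\bigr)+O(\epsilon^2)$.
\item The class-$j$ surplus is worked off only linearly over the whole period. This costs up to $h^j\mu^j\epsilon\Delta\Delta_{i+1}/2$.
\item The total change is therefore $\epsilon\Delta\bigl[h^j\mu^j\Delta_{i+1}/2-h^k\mu^k(\sigma-\sigma^2/(2\Delta_{i+1}))-(h^k\mu^k-h^j\mu^j)\Delta/2\bigr]$. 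This is positive when $\sigma\ll\Delta_{i+1}$ and $h^k\mu^k$ is close to $h^j\mu^j$, so no contradiction results.
\end{itemize}
The missing idea (the paper's Case 1) is to shift back $\epsilon\Delta/\sigma$ rather than $\epsilon\Delta/\Delta_{i+1}$. Then class $k$ clears at the same instant $\sigma$ under both plans, and the class-$j$ surplus is also gone by $\sigma$. Period $i+1$ then contributes $-(h^k\mu^k-h^j\mu^j)\epsilon\Delta\sigma/2$. The end-of-period state is only componentwise no larger, so you must finish with monotonicity of $v_{i+2}$ in the state, not with "$v_{i+2}$ unchanged." Your own case needs this too, since class $j$ ends only weakly smaller.

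\textbf{Feasibility of the reverse transfer is not established.} The transfer requires $u_{i+1}^k\ge\epsilon\Delta/\Delta_{i+1}$, that is, $u_{i+1}^k>0$ in the original plan. Your remark that class $k$ ``regains its original level'' concerns the trajectory, not nonnegativity of $\tilde u_{i+1}^k$. Nothing in your argument rules out $u_{i+1}^k=0$. For instance, you do not take $k$ to be the smallest violating class, so a higher-priority class still positive at time $t_{i+1}$ could legitimately take all capacity in period $i+1$.

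The paper closes this with induction on the number of periods and by taking $k$ as the \emph{smallest} violating index:
\begin{itemize}
\item Every $l<k$ then receives at least its clearing allocation in period $i$, so these classes are empty at the start of period $i+1$.
\item The induction hypothesis applied to period $i+1$ forces $u_{i+1}^k=1-\sum_{l<k}u_{i+1}^l$ and $u_{i+1}^j=0$ when class $k$ does not clear.
\item A further small exchange from an over-served empty class $l<k$ then shows $u_{i+1}^k>0$.
\end{itemize}
Alternatively, you could postpone the reverse transfer to the first later period in which class $k$ receives capacity. Until then the deficit and surplus persist with the favorable $c\mu$ ratio. Either route has to be supplied for the proof to go through.
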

Theorem \ref{thm:K_class optimal policy} provides a lower-bound for the optimal allocation to the first $K-1$ classes. The lower-bound for class $k$ in period $i$ is the minimum capacity required to empty the class in that period, and the remaining capacity allocated to classes with a higher $c\mu$ index. The result is a generalization of Theorem 3 of \cite{chan2021dynamic} for multiple (rather than two) classes and under a conventional heavy-traffic regime. Our proof approach is also different from that in \cite{chan2021dynamic}, which relied on implicit differentiation of the optimality equations. Instead, we show that the multi-period cost of any policy that does not satisfy the lower-bounds can be improved, and hence it cannot be optimal. 

Theorem \ref{thm:K_class optimal policy} implies that a greedy policy is optimal as long as it does not result in any idleness during the period. That is, if allocating more capacity to class $k$ does not incur idleness, it is always optimal to do so before allocating more capacity to classes with lower $c\mu$ indexes. This is easily observed to hold for the single-period problem, but the result also indicates that it holds in general for the multi-period problem. Once idleness is incurred in class $k$, any additional capacity allocation requires trading-off further diminishing cost reduction in that class versus lower-priority classes.  

The strict ordering of the $c\mu$ indexes provides a unique set of lower bounds for each class. Intuitively, if the $c\mu$ indexes were equal for some classes, e.g., $h^1 \mu^{1} = h^{2}\mu^{2}$, this lower bound would only need to hold for one of the classes.

\textbf{Numerical Illustration}. To better illustrate the result, we next provide a two-class example and compare the optimal policy with $\delta>0$ with that of the continuous-time control. Figure \ref{fig:optimal policy discussion example} depicts the optimal trajectories in a two-class system with $(\lambda^1, \lambda^2) = (0.35, 0.3)$, $(\mu^1, \mu^2) = (1.2, 1.2)$,  $(h^1 , h^2) = (4, 1)$, $(x_1^1, x_1^2) = (8, 6)$, and $T = 60$ under continuous-time control (left) and discrete-time control with $\delta=25$ (right). Under continuous-time control, it is optimal to allocate all capacity to class 1 until it is emptied at time $t=9.4$, and then allocate $(\lambda_1/\mu_1)$ to keep it at zero. In the case of discrete-time control, allocating all capacity to class 1 in the first period would result in sub-optimal idleness. Instead, it is optimal to empty class 1 at time $t=16.2$ and allocate some capacity to class 2 starting in the first period. 

\begin{figure}
    \FIGURE{
    \includegraphics[scale=0.65]{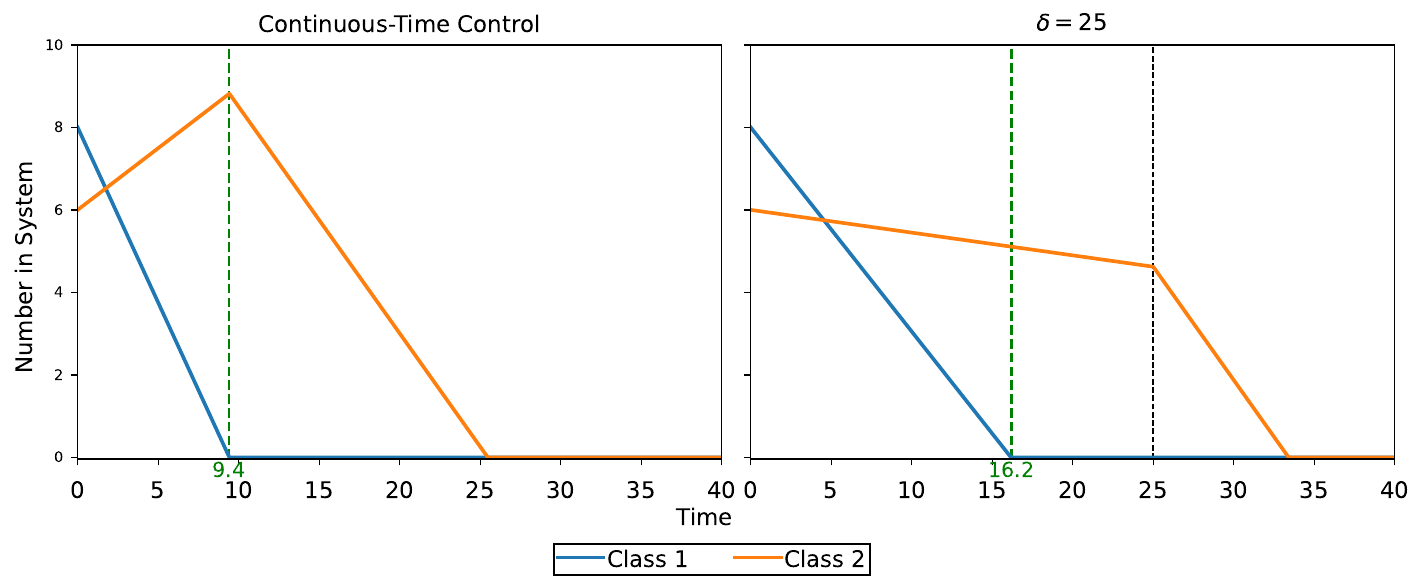}}
    {Optimal trajectories under continuous-time control versus discrete time control with $\delta = 25$. \label{fig:optimal policy discussion example}}
    { System parameters: $(\lambda^1, \lambda^2) = (0.35, 0.3)$, $(\mu^1, \mu^2) = (1.2, 1.2)$,  $(h^1 , h^2) = (4, 1)$, $(x_1^1, x_1^2) = (8, 6)$, $T = 60$.}
\end{figure}

\subsection{Proof of Theorem \ref{thm:K_class optimal policy}}
We start by establishing the lower bound for the single-period problem. 
\begin{lemma}[Single-period lower bound]\label{lem:single_period_priority_lower_bound}
Fix a single period of length \(\delta>0\). Every optimal allocation for
\(\min_{\vecu\in\mathcal U}g(\vecx,\vecu,\delta)\) satisfies,
\[
u^k
\geq
\min\left(
1-\sum_{l=1}^{k-1}u^l,
\frac{x^k}{\delta\mu^k}+\frac{\lambda^k}{\mu^k}
\right),
\qquad 1\leq k\leq K-1.
\]
\end{lemma}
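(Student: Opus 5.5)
We argue by contradiction using a local exchange argument. Suppose \(\vecu\in\mathcal U\) is optimal but violates the bound for some class \(k\le K-1\). Take the smallest such \(k\), so
\[
u^k<\min\left(1-\sum_{l=1}^{k-1}u^l,\ \bar u^k\right),
\qquad \bar u^k=\frac{x^k}{\delta\mu^k}+\frac{\lambda^k}{\mu^k}.
\]
The first inequality gives \(\sum_{l\ge k}u^l=1-\sum_{l<k}u^l>u^k\). Hence some lower-priority class \(j>k\) has \(u^j>0\). The plan is to shift a small amount \(\epsilon>0\) of capacity from class \(j\) to class \(k\) and show that \(g\) strictly decreases.

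Since \(g\) is separable across classes, only the class-\(k\) and class-\(j\) terms change. We compute the one-sided derivatives of the per-class cost
\[
G^m(u):=\int_0^{\delta} h^m\bigl(x^m+s(\lambda^m-u\mu^m)\bigr)^+ds.
\]
Differentiating under the integral, the right derivative of \(G^m\) in \(u\) is \(-h^m\mu^m\) times the Lebesgue measure of \(\{s\in[0,\delta): x^m+s(\lambda^m-u\mu^m)>0\}\). That measure equals \(\delta\wedge\sigma^m\), with a small adjustment in the maintained-empty case \(x^m=0\).

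\textbf{Class \(k\) (gain).} Because \(u^k<\bar u^k\), the clearing time satisfies \(\sigma^k(x^k,u^k)>\delta\). Concretely, either \(\lambda^k-u^k\mu^k\ge 0\) with \(x^k+s(\cdot)>0\) on \((0,\delta)\), or \(\sigma^k=x^k/(\mu^ku^k-\lambda^k)>\delta\). If \(x^k=0\), the condition \(u^k<\lambda^k/\mu^k\) still makes the fluid positive on \((0,\delta)\). So the class-\(k\) fluid is positive throughout the period, and \(G^k\) is linear in \(u\) on \([u^k,\min(\cdot)]\). The increase by \(\epsilon\) therefore reduces the class-\(k\) cost by exactly \(\epsilon h^k\mu^k\delta\) for small \(\epsilon\).

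\textbf{Class \(j\) (loss).} The decrease by \(\epsilon\) raises the class-\(j\) cost by at most \(\epsilon h^j\mu^j\delta\). This holds because the left derivative of \(G^j\) has magnitude \(h^j\mu^j\,\lvert\{s: \text{fluid}>0\}\rvert\le h^j\mu^j\delta\). By convexity (Proposition \ref{prop: convexity of g and f}(ii)), the increase over \([u^j-\epsilon,u^j]\) is bounded by \(\epsilon\) times the magnitude of the left derivative at \(u^j\). Both \(\epsilon\)-bounds use only the slopes at the current point, so the nondifferentiable points of \(g\) pose no difficulty.

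Combining the two, the net change is at most \(-\epsilon\delta(h^k\mu^k-h^j\mu^j)<0\), using the strict ordering \(h^k\mu^k>h^j\mu^j\) for \(j>k\). This contradicts the optimality of \(\vecu\). Feasibility of the perturbation also needs checking: \(\vecu+\epsilon(\vece_k-\vece_j)\) stays in \(\mathcal U\) for \(\epsilon\le u^j\), and it keeps \(u^k+\epsilon\le\bar u^k\) for small \(\epsilon\).

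The main obstacle is making the class-\(k\) gain exact at the boundary cases: \(x^k=0\), and \(u^k\) near but below \(\bar u^k\). There we must confirm that the fluid remains strictly positive almost everywhere on \([0,\delta)\) after the perturbation. Choosing \(\epsilon<\bar u^k-u^k\) handles this, because \(\sigma^k\) then remains \(\ge\delta\). The class-\(j\) side needs only the one-sided convexity bound, so no smoothness at \(u^j\) is required.
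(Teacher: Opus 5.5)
Your exchange argument is the one the paper uses: shift $\epsilon$ of capacity from a lower-priority class $j$ with $u^j>0$ to the violating class, then compare $c\mu$-weighted marginal costs. However, two steps are wrong as written.

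\textbf{The marginal cost of capacity is misstated.} Since $\partial_u\bigl(x^m+s(\lambda^m-u\mu^m)\bigr)=-s\mu^m$, the one-sided derivatives of $G^m$ are $-h^m\mu^m\int_0^\delta s\,\mathbf{1}\{x^m+s(\lambda^m-u\mu^m)>0\}\,ds$. In the nonempty case this equals $-h^m\mu^m(\delta\wedge\sigma^m)^2/2$, as in the proof of Proposition~\ref{prop: convexity of g and f}(ii). It is not $-h^m\mu^m$ times the Lebesgue measure of the positivity set. Three consequences follow:
\begin{itemize}
\item The class-$k$ gain is exactly $\epsilon h^k\mu^k\delta^2/2$, not $\epsilon h^k\mu^k\delta$.
\item Your bound that class $j$'s cost rises by at most $\epsilon h^j\mu^j\delta$ is false whenever $\delta>2$ and class $j$ does not clear in the period. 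The true increase is then exactly $\epsilon h^j\mu^j\delta^2/2$.
\item Your net-change bound $-\epsilon\delta(h^k\mu^k-h^j\mu^j)$ is incorrect, even though its sign is right.
\end{itemize}

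\textbf{The convexity step runs in the wrong direction.} $G^j$ is convex and nonincreasing in $u$. Its secant slope on $[u^j-\epsilon,u^j]$ is therefore at most the left derivative at $u^j$. This gives $G^j(u^j-\epsilon)-G^j(u^j)\ge\epsilon\,|\partial_u^-G^j(u^j)|$. So the slope at the current point is a \emph{lower} bound on the loss, not an upper bound. The inequality is strict, for example, when class $j$ clears before $\delta$ at $u^j$. An upper bound requires the slope at $u^j-\epsilon$ or a uniform slope bound. Your remark that ``both $\epsilon$-bounds use only the slopes at the current point'' is therefore not a valid justification on the class-$j$ side.

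\textbf{One observation repairs both errors.} The weight $\int_0^\delta s\,\mathbf{1}\{\cdot\}\,ds$ lies in $[0,\delta^2/2]$ for every $u$, and it equals $\delta^2/2$ when the class stays positive throughout. Hence $G^j$ is Lipschitz in $u$ with constant $h^j\mu^j\delta^2/2$. Equivalently, as the paper does, compare integrands pointwise: since $r\mapsto r^+$ is $1$-Lipschitz, lowering $u^j$ by $\epsilon$ raises the class-$j$ integrand by at most $h^j\mu^j\epsilon s$. This bounds the loss by $\epsilon h^j\mu^j\delta^2/2$, against an exact gain of $\epsilon h^k\mu^k\delta^2/2$. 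The net change is then at most $-\tfrac{\epsilon\delta^2}{2}(h^k\mu^k-h^j\mu^j)<0$, which is the paper's proof.

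Your class-$k$ analysis is fine. That covers $\sigma^k>\delta$ (including the case $x^k=0$) and the choice $\epsilon<\bar u^k-u^k$. Your feasibility check is also fine.
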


\proof{Proof.}
We prove the result by contradiction. Let $\vecu$ be an optimal
allocation and suppose that the bound is violated for some class. Let
$j$ be the smallest index such that,
\[
u^j<
\min\left\{
1-\sum_{l=1}^{j-1}u^l,
\frac{x^j}{\Delta\mu_j}+\frac{\lambda_j}{\mu_j}
\right\}.
\]
The second inequality implies that
$x^j+\Delta(\lambda_j-\mu_j u^j)>0$, and hence class $j$ remains positive
throughout the period. Therefore, assigning additional capacity to class
$j$ reduces its holding cost over the entire period.

The first inequality implies that
$u^j<1-\sum_{l=1}^{j-1}u^l$. Since $\sum_{k=1}^K u^k=1$, there
exists some $m>j$ such that $u^m>0$. Choose $\epsilon>0$ sufficiently
small and define a feasible allocation $\widetilde{\vecu}$ by increasing
the allocation to class $j$ by $\epsilon$ and decreasing the allocation to
class $m$ by $\epsilon$, i.e.,
\[
\tilde{u}^j=u^j+\epsilon,\qquad
\tilde{u}^m=u^m-\epsilon,
\]
with all other components unchanged. Because class $j$ remains positive throughout the period, the reduction in
its holding cost is $\frac{h_j\mu_j\Delta^2}{2}\epsilon$. For class $m$,
the increase in holding cost resulting from reducing its allocation by
$\epsilon$ is bounded above by
$\frac{h_m\mu_m\Delta^2}{2}\epsilon$. Therefore,
\[
g(\vecx,\vecu,\Delta)
-
g(\vecx,\tilde{\vecu},\Delta)
\geq
\frac{\Delta^2\epsilon}{2}
\left(h_j\mu_j-h_m\mu_m\right)>0,
\]
where the strict inequality follows from $j<m$ and the assumed ordering of
the $c\mu$ indices. Thus, $\tilde{\vecu}$ achieves a strictly lower cost than the optimal
allocation $\vecu$, which is a contradiction. Hence, the lower bound must
hold for every class $k\in\{1,\ldots,K-1\}$. \halmos \endproof

\begin{lemma}[Exchange improvement]\label{lem:exchange_improvement}
Consider an \(n\)-period problem and let \(\pi\) be a feasible policy whose allocations satisfy the lower bound in \eqref{eq: k class op inequality} from
period \(2\) onward. If the first-period allocation of \(\pi\) violates \eqref{eq: k class op inequality} for the smallest such class \(j\), then \(\pi\) cannot be optimal.
\end{lemma}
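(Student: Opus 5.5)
Following the single-period argument of Lemma \ref{lem:single_period_priority_lower_bound}, I would use an exchange in period 1 followed by a compensating change in period 2. The reason is that moving capacity in period 1 also changes the state entering period 2.

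Let \(\vecu_1\) be the first-period allocation of \(\pi\) and let \(j\) be the smallest violating class. Exactly as in the single-period lemma, \(x_1^j+\delta(\lambda^j-\mu^j u_1^j)>0\), so class \(j\) stays positive throughout period 1. Also there is \(m>j\) with \(u_1^m>0\).

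\textbf{Step 1 (period 1).} Define \(\tilde\vecu_1\) by shifting \(\epsilon\) from \(m\) to \(j\). Since class \(j\) stays positive, its holding cost in period 1 falls by exactly \(h^j\mu^j\delta^2\epsilon/2\). The cost of class \(m\) rises by at most \(h^m\mu^m\delta^2\epsilon/2\). The end-of-period states change by at most
\[
\tilde x_2^j=x_2^j-\epsilon\mu^j\delta, \qquad \tilde x_2^m\le x_2^m+\epsilon\mu^m\delta .
\]
The inequality for \(m\) is an equality when \(m\) does not clear, and smaller otherwise.

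\textbf{Step 2 (period 2).} I would not re-optimize. Instead I would construct a feasible continuation that restores the original trajectory, so that periods \(3,\dots,n\) can be copied unchanged from \(\pi\). In period 2, shift \(\epsilon'=\epsilon\mu^j\delta/(\mu^j\Delta_2)\) from class \(j\) back to class \(m\), and adjust it so that \(m\) receives \(\epsilon\mu^m\delta/(\mu^m\Delta_2)\) of extra capacity worth of clearance. More simply, reallocate period-2 capacity so that both classes \(j\) and \(m\) end period 2 exactly at their original levels \(x_3^j,x_3^m\). To restore the fluid levels, class \(j\) needs \(\epsilon\delta/\Delta_2\) less capacity and class \(m\) needs \(\epsilon\delta/\Delta_2\) more. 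This is a unit-for-unit swap, so feasibility of the sum constraint is preserved.

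Feasibility also requires class \(j\) to have had allocation at least \(\epsilon\delta/\Delta_2\) in period 2. That holds whenever class \(j\) is nonempty in the original period 2 or is being maintained at zero with allocation \(\lambda^j/\mu^j>0\). If \(x_2^j\) is small and class \(j\) clears in period 2, I would instead use the clearing time and restore only at the clearing point. A cleaner alternative is to take \(\epsilon\) small enough that class \(j\)'s positivity and clearing structure in period 2 are preserved; this uses the assumption that \(\pi\) satisfies \eqref{eq: k class op inequality} from period 2 on.

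The net cost change in period 2 then mirrors period 1 with opposite sign. Over period 2, the class \(j\) fluid path lies below the original by a linearly shrinking amount, and class \(m\) lies above it by a similar amount. The combined two-period change is
\[
\frac{\epsilon\delta(\delta+\Delta_2)}{2}\left(-h^j\mu^j+h^m\mu^m\right)+o(\epsilon)<0 .
\]
Here the \(c\mu\) ordering enters because the class-\(j\) gain is exact while the class-\(m\) loss is only an upper bound. This strict improvement contradicts optimality.

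\textbf{Main obstacle.} The hard step is Step 2 when class \(j\) or class \(m\) hits zero during period 2 under the original or the perturbed allocation. There the cost is nonlinear in \(\epsilon\), and the period-2 allocations near the boundary \(u^j=\lambda^j/\mu^j\) may not admit the swap. I would try first to handle this by restoring the trajectory at the first time the perturbed and original paths can be merged, possibly within period 1 itself. Only if that fails would I fall back on the convexity of \(v_2\) from Proposition \ref{prop: convex and differentiable structure} with monotonicity, bounding \(v_2(\tilde\vecx_2)-v_2(\vecx_2)\) by a subgradient argument.
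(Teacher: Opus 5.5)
Your architecture (shift $\epsilon$ from $m$ to $j$ in period 1, repair in period 2 so the period-3 state is weakly smaller, then use monotonicity) is the right one. In the case where class $j$ does \emph{not} clear in period 2 under $\pi$, your unit-for-unit swap of $\epsilon\delta/\Delta_2$ back from $j$ to $m$ reproduces the paper's Case 2. There the saving is at least $\frac{\epsilon}{2}\delta(\delta+\Delta_2)(h^j\mu^j-h^m\mu^m)$, with no $o(\epsilon)$ term.

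The gap is the other case, where class $j$ clears in period 2 at some $\sigma<\Delta_2$. Your swap still gives $\tilde x_3^j=x_3^j=0$, but your displayed two-period bound is false there. The class-$j$ discrepancy only earns savings on $[0,\sigma]$, to first order $h^j\mu^j\epsilon\delta\sigma(1-\sigma/(2\Delta_2))$. The class-$m$ excess $\epsilon\mu^m\delta(1-s/\Delta_2)$ persists over the whole period and costs up to $h^m\mu^m\epsilon\delta\Delta_2/2$. Take $\Delta_2=\delta$, small $\sigma$, and class $m$ positive throughout, e.g.\ $K=2$ with $\pi$ nearly clearing class 1 in period 1. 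Then the net saving is about $\frac{\epsilon\delta^2}{2}(h^j\mu^j-2h^m\mu^m)$, which is negative whenever $h^j\mu^j<2h^m\mu^m$.

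You mention restoring at the clearing point only as a feasibility device and leave it as the ``main obstacle.'' It is in fact the required construction, and it is short:
\begin{itemize}
\item Let $\tilde\pi$ clear class $j$ at the same time $\sigma$, i.e.\ $\tilde u_2^j=\tilde x_2^j/(\sigma\mu^j)+\lambda^j/\mu^j$.
\item This frees $\Delta_1\epsilon/\sigma$ (not $\Delta_1\epsilon/\Delta_2$) of capacity; give it to $m$.
\item Both discrepancies now shrink linearly to zero at $\sigma$, and class $m$ stays weakly lower on $[\sigma,\Delta_2]$, so $\tilde{\mathbf{x}}_3\le\mathbf{x}_3$.
\item The period-2 gain and loss are compared over the common window $[0,\sigma]$, giving total saving $\frac{\epsilon}{2}(\Delta_1^2+\Delta_1\sigma)(h^j\mu^j-h^m\mu^m)>0$.
\end{itemize}
Your subgradient fallback would not rescue this case. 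Bounding $v_2(\tilde{\mathbf{x}}_2)-v_2(\mathbf{x}_2)$ from above requires comparing the marginal values of class-$j$ and class-$m$ fluid, which is essentially the statement being proved.

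Second, in the nonclearing case your feasibility claim is unsupported. Class $j$ is always nonempty at the start of period 2, but a feasible policy can still starve it, so $u_2^j>0$ needs an argument:
\begin{itemize}
\item By the hypothesis on period 2, a nonclearing class $j$ must receive $u_2^j=1-\sum_{l<j}u_2^l$. For $j=1$ this gives $u_2^1=1$ outright.
\item Every $l<j$ cleared in period 1, because it met its bound while some $m>j$ still received capacity. Hence $\sum_{l<j}\lambda^l/\mu^l<1$.
\item So if $u_2^j=0$, some empty class $l<j$ receives more than $\lambda^l/\mu^l$.
\item Shifting that excess to $j$ strictly lowers the period-2 cost and weakly lowers $\mathbf{x}_3$. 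This contradicts the assumed optimality of $\pi$.
\end{itemize}
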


The proof of Lemma \ref{lem:exchange_improvement} is given in Appendix \ref{app:proofs}. It follows the same idea as in the proof of Lemma \ref{lem:single_period_priority_lower_bound}, but requires constructing a feasible policy
over subsequent periods to account for the impact of the capacity shift on
future states. In particular, the period-2 allocation must be adjusted to
preserve either the clearing time of the higher-priority class or the
ordering of future states.

\begin{proof}{Proof of Theorem \ref{thm:K_class optimal policy}.}
We argue by induction on the number of periods. For a single-period problem,
Lemma \ref{lem:single_period_priority_lower_bound} gives the claim. Now assume
that the result holds for every problem with \(n-1\) periods and consider an
\(n\)-period problem. The induction hypothesis applies to the continuation
problem beginning in period \(2\), so an optimal continuation satisfies the
lower bound from period \(2\) onward. If the first-period allocation violated
the lower bound, Lemma
\ref{lem:exchange_improvement} would construct a feasible policy with strictly
lower total cost, contradicting optimality. Hence the first-period allocation
also satisfies the bound. This completes the induction. \Halmos \end{proof}

\section{First- and Second-Order Sensitivity Analysis}\label{sec:monotone}
Our objective is to study how \(v_1(\vecx_1,\delta):= q_1(\vecx_1,\vecu_1^*(\delta),\delta)\) varies with the review-period length \(\delta\).  To this end, we first characterize the optimality conditions for a fixed review-period length $\delta$. We then define regular points where the derivatives of $v_1(\vecx_1,\delta)$ are well-defined.

For any period-\(i\) allocation \(\vecu_i\), the nonnegativity
constraint for class \(k\) is active if \(u_i^k=0\) and inactive if
\(u_i^k>0\). Define the corresponding active and inactive constraint
index sets by,
\begin{equation}
A(\vecu_i):=\{k:u_i^k=0\},
\qquad
B(\vecu_i):=\{k:u_i^k>0\}.
\end{equation}
Fix a reference review length $\delta_0>0$. Throughout the local
sensitivity analysis, $\delta$ denotes a review length varying in a
neighborhood of $\delta_0$. Fix a
selected optimal trajectory at $\delta_0$ and denote it by $
\{(\vecx_i^*,\vecu_i^*)\}_{i=1}^{n(\delta_0)}.
$
When the dependence on the review length is relevant, we explicitly write
$\vecx_i^*(\delta)$ and $\vecu_i^*(\delta)$. All starred quantities
without an explicit argument refer to the base trajectory evaluated at
$\delta=\delta_0$. We set \(A_i^*:=A(\vecu_i^*)\) and
\(B_i^*:=B(\vecu_i^*)\). Thus, \(A_i^*\) and \(B_i^*\) are,
respectively, the active and inactive nonnegativity-constraint sets at
the base optimizer in period \(i\). We also define,
$
R_i^{k,*}
:=
x_i^{k,*}+\Delta_i(\delta_0)(\lambda^k-\mu^k u_i^{k,*}).
$
Accordingly, the boundary \(x_i^{k,*}>0\) and \(R_i^{k,*}=0\) corresponds to a positive queue clearing exactly at the end of a period. 
\subsection{Optimality Conditions}
For multipliers
\(\vecgamma=(\gamma^1,\ldots,\gamma^K)^\intercal\) and \(\beta\), define the first-period Lagrangian at review length \(\delta\) by
\begin{align}\label{eq:Lagrangian}
    \mathcal L_1(\vecx_1,\vecu_1,\delta)
    &=
    q_1(\vecx_1,\vecu_1,\delta)
    -\vecu_1^\intercal\vecgamma
    -\beta\left(\vecu_1^\intercal\vece-1\right) \nonumber\\
    &=
    g(\vecx_1,\vecu_1,\delta)
    +v_2(\vecf(\vecx_1,\vecu_1,\delta),\delta)
    -\vecu_1^\intercal\vecgamma
    -\beta\left(\vecu_1^\intercal\vece-1\right).
\end{align}
At \(\delta=\delta_0\), the necessary and sufficient KKT conditions state that $\vecu_1^*=\left(u_1^{1,*},u_1^{2,*},\ldots,u_1^{K,*}\right)^\intercal$ is an optimal solution if there exist multipliers $(\vecgamma, \beta)$ such that,
\begin{align}
     \partial_{\vecu_1}q_1(\vecx_1, \vecu_1^*, \delta_0) - \beta \mathbf{e} - \vecgamma & \ni \veczero, \label{eq:KKT general Stationary} \\
    \vecu_1^* & \geq \veczero , \label{eq:KKT general Primal Feasibility}\\ 
    \vecu_1^{* \intercal}  \mathbf{e} - 1 & = 0, \label{eq:KKT general Equality Constraint}\\
  \vecgamma & \geq \veczero \label{eq:KKT general Dual Feasibility},\\
  \vecu_1^{* \intercal} \vecgamma & = \veczero \label{eq:KKT general Complementary Slackness},
 \end{align}
 where $\partial_{\vecu_1}q_1 (\vecx_1, \vecu_1^*, \delta_0)$ is the subdifferential of $q_1$ evaluated at $(\vecx_1, \vecu_1^*, \delta_0)$; see \eqref{def: subdifferential}. Whenever $q_1$ is differentiable with respect to $\vecu_1$ at $(\vecx_1, \vecu_1^*, \delta_0)$, the subdifferential set contains only the derivative of $q_1$ with respect to $\vecu_1$ at that point, i.e., $\pd{}{\vecu_1} [ q_1 (\vecx_1, \vecu_1^*, \delta_0)]$, and our KKT conditions reduce to,
 \begin{align}
    \pd{}{\vecu_1}q_1(\vecx_1,\vecu_1^*,\delta_0)
  -\beta\mathbf e-\vecgamma
  =
  \pd{}{\vecu_1}\!\left[
  g(\vecx_1,\vecu_1^*,\delta_0)
  +v_2(\vecf(\vecx_1,\vecu_1^*,\delta_0),\delta_0)
  \right]
  -\beta\mathbf e-\vecgamma
  &=\mathbf0.
  \label{eq:KKT Stationary} \\
    \vecu_1^* & \geq \veczero , \label{eq:KKT Primal Feasibility}\\ 
    \vecu_1^{* \intercal}  \mathbf{e} - 1 & = 0, \label{eq:KKT Equality Constraint}\\
  \vecgamma & \geq \veczero \label{eq:KKT Dual Feasibility},\\
  \vecu_1^{* \intercal} \vecgamma & = \veczero \label{eq:KKT Complementary Slackness},
 \end{align} see, e.g., \cite{boyd2004convex}.
Using the definition of \(q_1\), the stationarity condition \eqref{eq:KKT Stationary} can equivalently be written as,
\begin{align}
     \beta \mathbf{e}+ \vecgamma   
     & = \pd{g}{\vecu_1}+ \pd{\vecf}{\vecu_1} \pd{v_{2}}{\vecx_{2}}. \label{eq:KKT Stationarity Chain Rule Applied}
\end{align}
Here, \(\vecx_1\) and \(\delta_0\) are fixed when differentiating with respect to \(\vecu_1\).

\subsection{Regularity Conditions}
We first review the relevant conditions from parametric nonlinear programming; see \cite{fiaccoKortanek1983,Jittorntrum1984} and \cite{pacaud2025sensitivity}.

\begin{definition}[NLP regularity conditions]\label{def:nlp_regular_conditions}
Consider a local nonlinear program with decision variable \(\vecy\) and
parameter \(p\), objective \(\phi(\vecy,p)\), equality constraints
\(a^\ell(\vecy,p)=0\), and inequality constraints
\(b^j(\vecy,p)\le0\). Let \(w_\ell\) and \(z_j\ge0\) denote the
multipliers associated with \(a^\ell\) and \(b^j\), respectively, and
define the Lagrangian by,
\[
L(\vecy,p,\mathbf w,\vecz)
:=
\phi(\vecy,p)
+\sum_\ell w_\ell a^\ell(\vecy,p)
+\sum_j z_j b^j(\vecy,p).
\]
The active inequality set at \((\vecy,p)\) is
$
\mathcal I_p(\vecy):=\{j:b^j(\vecy,p)=0\}.
$
At a KKT point \((\vecy^*,p^*)\) with multipliers
\((\mathbf w^*,\vecz^*)\), an active inequality
\(j\in\mathcal I_{p^*}(\vecy^*)\) is strongly active if \(z_j^*>0\)
and weakly active if \(z_j^*=0\). Strict complementary slackness (SCS)
holds if there are no weakly active inequalities.
Linear-independence constraint qualification (LICQ) holds if the
gradients of the equality constraints and the active inequality
constraints are linearly independent. When SCS holds, the critical cone
is,
\[
C_{p^*}(\vecy^*)
:=
\left\{
\mathbf d:
\nabla_{\vecy}a^\ell(\vecy^*,p^*)^\intercal \mathbf d=0\ \forall \ell,\quad
\nabla_{\vecy}b^j(\vecy^*,p^*)^\intercal \mathbf d=0\
\forall j\in\mathcal I_{p^*}(\vecy^*)
\right\}.
\]
The second-order sufficient condition (SOSC) holds if the Hessian of the Lagrangian \(L\) is positive definite on the nonzero directions in this cone:
\[
\mathbf d^\intercal
\nabla_{\vecy\vecy}^2L(\vecy^*,p^*,\mathbf w^*,\vecz^*)
\mathbf d>0, 
\qquad
\mathbf d\in C_{p^*}(\vecy^*),\quad \mathbf d\ne\veczero.
\]
\end{definition}

\begin{proposition}[Parametric NLP sensitivity]
\label{prop:parametric_nlp_sensitivity}
Consider the parametric nonlinear program in Definition
\ref{def:nlp_regular_conditions}. Suppose that its objective and
constraints are \(C^2\) and that, at \(p=p^*\), the primal-dual KKT
solution \((\vecy^*,\mathbf w^*,\vecz^*)\) satisfies LICQ, SCS, and
SOSC. Then, near \(p^*\), the primal-dual KKT solution $
p\mapsto
\bigl(\vecy^*(p),\mathbf w^*(p),\vecz^*(p)\bigr)
$
is locally unique and \(C^1\), with
\((\vecy^*(p^*),\mathbf w^*(p^*),\vecz^*(p^*))
=(\vecy^*,\mathbf w^*,\vecz^*)\).
Moreover, the active inequality set is fixed, LICQ and SCS continue to
hold locally, and the local value function
\(\phi(\vecy^*(p),p)\) is \(C^2\).
\end{proposition}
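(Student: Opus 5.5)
The plan is to apply the implicit function theorem to the KKT system restricted to the active set, following the classical argument of Fiacco. Write the KKT conditions of the program in Definition \ref{def:nlp_regular_conditions} as a single equation $F(\vecy,\mathbf w,\vecz,p)=\veczero$. Its components are:
\begin{itemize}
\item[(a)] stationarity, $\nabla_{\vecy}L(\vecy,p,\mathbf w,\vecz)=\veczero$;
\item[(b)] the equality constraints, $a^\ell(\vecy,p)=0$;
\item[(c)] for active indices $j\in\mathcal I_{p^*}(\vecy^*)$, the constraint $b^j(\vecy,p)=0$;
\item[(d)] for inactive indices, the condition $z_j=0$.
\end{itemize}
Since the data are $C^2$, the map $F$ is $C^1$ in all its arguments. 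The first step is to show that the Jacobian of $F$ with respect to $(\vecy,\mathbf w,\vecz)$ is nonsingular at $(\vecy^*,\mathbf w^*,\vecz^*,p^*)$. After the inactive multipliers are eliminated, this Jacobian is the saddle-point (KKT) matrix
\[
\begin{pmatrix} H & J^\intercal \\ J & \veczero\end{pmatrix},
\]
where $H=\nabla^2_{\vecy\vecy}L$ and $J$ stacks the gradients of the equality and active inequality constraints.

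\textbf{Nonsingularity.} Suppose $H\mathbf d+J^\intercal\boldsymbol\eta=\veczero$ and $J\mathbf d=\veczero$.
\begin{itemize}
\item The condition $J\mathbf d=\veczero$ means $\mathbf d$ lies in the critical cone. Under SCS this cone is exactly the subspace $\ker J$.
\item Multiplying the first equation by $\mathbf d^\intercal$ gives $\mathbf d^\intercal H\mathbf d=0$, so SOSC forces $\mathbf d=\veczero$.
\item Then $J^\intercal\boldsymbol\eta=\veczero$, and LICQ gives $\boldsymbol\eta=\veczero$.
\end{itemize}
The implicit function theorem then yields a locally unique $C^1$ map $p\mapsto(\vecy^*(p),\mathbf w^*(p),\vecz^*(p))$ solving $F=\veczero$ and passing through the base point.

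\textbf{Genuine KKT points.} I next check that this branch consists of actual KKT points of the original program, with the same active set.
\begin{itemize}
\item By continuity and SCS, the multipliers $z_j(p)$ of strongly active constraints stay strictly positive near $p^*$.
\item Inactive constraints satisfy $b^j(\vecy^*(p),p)<0$ near $p^*$, again by continuity.
\end{itemize}
So dual feasibility, primal feasibility and complementarity all hold, and the active set is fixed. This step is the main obstacle, because it is where SCS is indispensable: with a weakly active constraint the branch could leave the feasible region or lose dual feasibility.

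\textbf{Persistence of the conditions.} LICQ persists by continuity of $J$, since full row rank is an open condition. SOSC persists because positive definiteness of $H$ on $\ker J$ is equivalent to nonsingularity plus correct inertia of the KKT matrix, which is stable under small perturbations. Hence $\vecy^*(p)$ remains a strict local minimizer, which justifies calling the branch the primal-dual KKT solution. For local uniqueness among all KKT points, I would argue by contradiction: a sequence of other KKT points converging to the base point would eventually share the active set, by SCS and continuity, and would then contradict uniqueness in the implicit function theorem.

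\textbf{Value function.} Finally I differentiate $\phi(\vecy^*(p),p)$. By the envelope argument, using stationarity, complementarity and the fixed active set,
\[
\frac{d}{dp}\phi(\vecy^*(p),p)=\nabla_pL\bigl(\vecy^*(p),p,\mathbf w^*(p),\vecz^*(p)\bigr).
\]
The right-hand side is a composition of the $C^1$ map $\nabla_pL$ with the $C^1$ primal-dual path, so it is $C^1$ and the value function is $C^2$.
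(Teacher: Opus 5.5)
Your proposal is correct and is essentially the classical Fiacco argument: apply the implicit function theorem to the KKT system with the active set frozen, get nonsingularity of the KKT matrix from LICQ, SCS and SOSC, and obtain the $C^2$ value function from the envelope identity $\frac{d}{dp}\phi(\vecy^*(p),p)=\nabla_p L$. The paper does not prove this proposition itself; it invokes exactly this theorem, citing Pacaud et al.\ (Theorem 3.2) for the $C^1$ primal-dual branch and Fiacco--Kortanek (Theorem 3.4.1) for the $C^2$ value function.
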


Proposition \ref{prop:parametric_nlp_sensitivity} is the classical Fiacco sensitivity theorem. The differentiable primal-dual solution selection follows from the result stated in \citet[Theorem 3.2]{pacaud2025sensitivity}; the \(C^2\) local value-function conclusion follows from \citet[Theorem 3.4.1]{fiaccoKortanek1983}. %
The hypotheses of Proposition \ref{prop:parametric_nlp_sensitivity} need not hold everywhere in our
  scheduling problem. In particular, By Proposition
  \ref{prop: convexity of g and f}, the required \(C^2\) smoothness fails
  when a class empties exactly at the end of a period, or starts empty and receives the \emph{maintenance allocation} $\lambda_i^k/\mu_i^k$ to keep it empty. Other possible failures of SCS and SOSC are discussed next.

For the base optimal trajectory, define the set of \emph{maintained-empty} classes in period \(i\) by \(E_i:=\{k:x_i^{k,*}=R_i^{k,*}=0\}\). The corresponding reduced feasible set is,
\[
\mathcal U_i^{\mathrm{red}}
:=
\left\{
\vecu_i\in\mathcal U:
u_i^k=\frac{\lambda^k}{\mu^k}\ \text{for every }k\in E_i
\right\}.
\]
For \(\delta\) near \(\delta_0\), the \emph{reduced period-\(i\) problem} is the minimization of \(q_i(\vecx_i,\vecu_i,\delta)\) over \(\mathcal U_i^{\mathrm{red}}\), with \(x_i^k=0\) fixed for \(k\in E_i\). Thus, only the remaining state and allocation coordinates are treated as variables. Equivalently, the \emph{free} allocations satisfy \(\sum_{k\notin E_i}u_i^k=1-\sum_{k\in E_i}\lambda^k/\mu^k\). %

It is easy to verify that LICQ holds for our problem. From \eqref{eq:KKT general Equality Constraint}, it follows that the gradient of our equality constraint is $\mathbf{e}$. By Theorem \ref{thm:K_class optimal policy}, $u_i^{1, *} > 0, \forall i \in \{1, \dots, n\}$, thus the set of gradients of active inequality constraints is always some subset of $\{\mathbf{e}_2, \dots, \mathbf{e}_K\}$, and so the set of equality and active inequality constraints is always linearly independent. %

\textbf{Possible failures of SCS and SOSC}. At a period \(i\) and a generic review length \(\delta\), suppose the continuation value \(v_{i+1}\) is
  differentiable at the induced next state. Holding
  \(\vecx_i\) and \(\delta\) fixed, define the period-\(i\) dynamic
  marginal value of capacity in class \(k\) by,
  \begin{equation}\label{eq:dynamic_marginal_value}
  \mathcal M_i^k
  :=
  -\frac{\partial q_i}{\partial u_i^k}
  =
  \begin{cases}
  -\dfrac{\partial g}{\partial u_i^k}
  -\dfrac{\partial f^k}{\partial u_i^k}
   \dfrac{\partial v_{i+1}}{\partial x_{i+1}^k},
  & i<n(\delta),\\[6pt]
  -\dfrac{\partial g}{\partial u_i^k},
  & i=n(\delta).
  \end{cases}
  \end{equation}
  The derivatives are evaluated at the optimal point under
  consideration, and the last-period running cost uses the residual
  period length \(r(\delta)\).

The second equality in \eqref{eq:dynamic_marginal_value} follows
from the chain rule and the classwise separability of \(\vecf\). At the
base trajectory, stationarity and complementary slackness give
\(\mathcal M_i^{\kappa}=-\beta\) for \(\kappa\in B_i^*\setminus E_i\), whose
nonnegativity constraint is inactive, and
\(\mathcal M_i^k=-\beta-\gamma^k\) for
\(k\in A_i^*\setminus E_i\), whose constraint is active. Hence SCS for
the reduced problem holds exactly when
\(\mathcal M_i^\kappa>\mathcal M_i^k\) for every
\(\kappa\in B_i^*\setminus E_i\) and \(k\in A_i^*\setminus E_i\). Equality
for such a pair is a \emph{dynamic cutoff tie} and is precisely where SCS
fails. In Appendix \ref{ap:single-stage-example} we demonstrate this for the single-stage cost.

Under SCS, a critical direction for the reduced period-\(i\)
problem is a reallocation \(\mathbf d\) satisfying \(d^k=0\) for every
\(k\in A_i^*\setminus E_i\) and
\(\mathbf d^\intercal\mathbf e=0\). The Hessian in given by,
\(\nabla^2_{\vecu_i\vecu_i}q_i(\vecx_i^*,\vecu_i^*,\delta_0)\). Thus
SOSC requires
\(\mathbf d^\intercal\nabla^2_{\vecu_i\vecu_i}q_i
(\vecx_i^*,\vecu_i^*,\delta_0)\mathbf d>0\) for every nonzero critical direction. As shown in Proposition \ref{prop:regular_point_smoothness}, SOSC is automatic once SCS
holds and assuming the following convention for selecting an optimal allocation starting with zero fluid in all queues.

\textbf{A selection convention when starting with} \(\vecx_i=\veczero\). If \(\vecx_i=\veczero\) and
\(\sum_{k=1}^K\lambda^k/\mu^k<1\), any allocation satisfying
\(u_i^k\geq\lambda^k/\mu^k\) for every \(k\) keeps all queues empty and
has the same zero cost. We select the optimizer that assumes
classes \(k \in \{2,\ldots,K\}\) receive the maintenance allocation $\lambda^i/\mu^i$ and any excess
capacity goes to the highest-priority class:
\[
u_i^{k,*}=\frac{\lambda^k}{\mu^k},\quad k\in\{2,\ldots,K\},
\qquad
u_i^{1,*}=1-\sum_{k=2}^K\frac{\lambda^k}{\mu^k}.
\]
Henceforth, base optimal trajectories are selected according to this
convention.

\begin{definition}[Regular point]\label{def:regular_point}
At the fixed initial state \(\vecx_1\), the base review length \(\delta_0>0\)
is a regular point for the fixed base optimal trajectory if there exists an open interval
\(I\subset(0,\infty)\) containing \(\delta_0\) such that
\(n(\delta)=n(\delta_0)=:n\) for every \(\delta\in I\), and the following conditions hold for every period \(i\in\{1,\ldots,n\}\):

\begin{enumerate}
\item \textbf{Smooth-piece condition.}
For every \(k\notin E_i\), we have \(R_i^{k,*}\ne0\).

\item \textbf{Maintained-empty reduction.}
For each \(k\in E_i\),
the local period-\(i\) Bellman problem is represented by the reduced problem defined above, with \(x_i^k=0\) and \(u_i^k=\lambda^k/\mu^k\) fixed.

  \item \textbf{No cutoff tie.}
  At the base point \(\delta=\delta_0\), the reduced period-\(i\)
  Bellman minimization satisfies
  \(\mathcal M_i^r>\mathcal M_i^k\) for every
  \(r\in B_i^*\setminus E_i\) and
  \(k\in A_i^*\setminus E_i\).
\end{enumerate}
\end{definition}

The fixed-period and smooth-piece conditions exclude period-count and
endpoint-clearing boundaries. Condition 2 requires classes in $E_i$ to remain the same as \(\delta\) varies
near \(\delta_0\). Condition 3 imposes SCS through the strict marginal inequalities.  

\begin{proposition}[Smoothness and stability at regular points]
\label{prop:regular_point_smoothness}
If \(\delta_0\) is a regular point at \(\vecx_1\), then there is an open
interval \(J\subseteq I\) containing \(\delta_0\) such that:

\begin{enumerate}
\item[(i)]
For every period \(i\), in reduced coordinates, \(q_i\) is \(C^2\)
near \((\vecx_i^*,\vecu_i^*,\delta_0)\), the reduced period-\(i\)
optimizer is unique and depends \(C^1\) on the nonfixed components of
\(\vecx_i\) and on \(\delta\), its active
nonnegativity-constraint set is locally constant, and
\(v_i\) is \(C^2\) near
\((\vecx_i^*,\delta_0)\).

\item[(ii)]
The reduced Bellman problems admit
a unique optimal trajectory
\((\vecx_i^*(\delta),\vecu_i^*(\delta))_{i=1}^n\), which is \(C^1\) in
\(\delta\) and satisfies
\(
(\vecx_i^*(\delta_0),\vecu_i^*(\delta_0))
=(\vecx_i^*,\vecu_i^*)
\)
for every \(i\).
\end{enumerate}
\end{proposition}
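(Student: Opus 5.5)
We argue by backward induction on the period index, from \(i=n\) down to \(i=1\), and invoke Proposition \ref{prop:parametric_nlp_sensitivity} on each reduced Bellman problem. The induction hypothesis at stage \(i+1\) is that \(v_{i+1}\) is \(C^2\) in a neighborhood of \((\vecx_{i+1}^*,\delta_0)\), in the nonfixed coordinates. The reduced optimizer is then \(C^1\) there, and its active set is locally constant.

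\textbf{Smoothness of \(q_i\).} For \(i=n\), \(q_n=g(\cdot,\cdot,r(\delta))\). Since \(n(\delta)\) is constant on \(I\), \(r(\delta)=T-(n-1)\delta\) is affine in \(\delta\). By the smooth-piece condition and Proposition \ref{prop: convexity of g and f}(iii), \(g\) and each \(f^k\) with \(k\notin E_i\) are \(C^2\) on a neighborhood of the base point. In reduced coordinates, the classes \(k\in E_i\) have \(x^k=0\) and \(u^k=\lambda^k/\mu^k\) fixed. Their contribution to \(g\) is identically zero, and \(f^k\equiv 0\). The nondifferentiable maintained-empty locus is therefore removed by condition 2. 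For \(i<n\), the composition \(q_i=g+v_{i+1}\circ\vecf\) is \(C^2\) near \((\vecx_i^*,\vecu_i^*,\delta_0)\). This uses that \(\vecf\) is \(C^2\) and maps this neighborhood into the neighborhood where \(v_{i+1}\) is \(C^2\), by continuity of \(\vecf\) and \(\vecf(\vecx_i^*,\vecu_i^*,\delta_0)=\vecx_{i+1}^*\).

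\textbf{Verifying the hypotheses of Proposition \ref{prop:parametric_nlp_sensitivity}.} We treat \(p=(\vecx_i^{\text{free}},\delta)\) as the parameter.
\begin{itemize}
\item[(a)] LICQ holds as argued in the text: the equality gradient is \(\mathbf e\) restricted to the free coordinates, and \(u_i^{1,*}>0\) by Theorem \ref{thm:K_class optimal policy} (or by the selection convention when \(\vecx_i=\veczero\)).
\item[(b)] SCS for the reduced problem is exactly condition 3, via the identification \(\mathcal M_i^k=-\beta-\gamma^k\).
\item[(c)] SOSC is the main obstacle, since the text asserts it is automatic given SCS and the selection convention.
\end{itemize}

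\textbf{Plan for SOSC.} We split the free classes into those with \(x_i^k>0\), \(R_i^{k,*}>0\) and those with \(R_i^{k,*}<0\). The second group includes classes with \(x_i^k=0\) that are overserved.
\begin{itemize}
\item For a nonclearing class (\(R>0\)), \(f^k\) is affine in \(u^k\). The running cost \(\int_0^\delta h^k(x+s(\lambda-\mu u))\,ds\) is also affine. The class contributes curvature only through \(v_{i+1}\), which is convex by Proposition \ref{prop: convex and differentiable structure}, so it gives at least zero.
\item For a clearing class (\(R<0\)), \(f^k\equiv0\) locally. The cost \(h^k x^2/(2(\mu u-\lambda))\) has second derivative \(h^k\mu^2x^2/(\mu u-\lambda)^3>0\) when \(x>0\). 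When \(x=0\) the cost is identically zero, so the class contributes a flat direction.
\end{itemize}

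Positive definiteness on the critical cone \(\{d: d^k=0,\ k\in A_i^*\setminus E_i,\ \sum d^k=0\}\) therefore has to come from a structural argument. The idea is that under Theorem \ref{thm:K_class optimal policy}, at most one inactive free class can be nonclearing-and-flat. Every other inactive class is either clearing with \(x>0\), giving strict curvature, or is class 1 at \(\vecx=\veczero\) absorbing the excess, which the selection convention pins down. The steps would be:
\begin{enumerate}
\item Show that if two inactive free classes \(j<m\) were both nonclearing, the priority bound forces \(u^j=1-\sum_{l<j}u^l\). That leaves no capacity for \(m\), a contradiction.
\item Show that a flat zero-state overserved class conflicts with SCS or the convention, since shifting its surplus would be a cutoff tie or a nonselected optimizer.
\item Conclude that every nonzero \(\mathbf d\) with \(\sum d^k=0\) loads some strictly convex coordinate, because at most one flat coordinate can carry nonzero \(d\) alone, and \(\sum d=0\) then forces it to be zero. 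If the strict convexity of \(v_{i+1}\) along the relevant direction turns out to be needed, we would instead try to derive it from the induction via the envelope formula.
\end{enumerate}

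\textbf{Conclusion.} With the hypotheses verified, Proposition \ref{prop:parametric_nlp_sensitivity} gives a locally unique \(C^1\) primal–dual solution, a fixed active set, and a \(C^2\) local value function \(q_i(\vecx_i,\vecu_i^*(\vecx_i,\delta),\delta)\). By convexity (Proposition \ref{prop: convex and differentiable structure}), a local minimizer of the reduced problem is global, and this value equals \(v_i\) under condition 2. This completes the induction and gives (i).

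For (ii), we compose forward: \(\vecx_1\) is fixed, \(\vecu_1^*(\delta)\) is \(C^1\), \(\vecx_2^*(\delta)=\vecf(\vecx_1,\vecu_1^*(\delta),\delta)\) is \(C^1\), and so on. At each step we shrink the interval so that the trajectory stays in the neighborhoods from (i). The intersection of these finitely many intervals is \(J\). Uniqueness of the trajectory follows from uniqueness of each reduced optimizer.
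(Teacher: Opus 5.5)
Your proposal follows essentially the same route as the paper's proof: backward induction applying Proposition \ref{prop:parametric_nlp_sensitivity} to each reduced problem, then forward composition of the optimizer maps for (ii). SOSC is obtained the same way. Theorem \ref{thm:K_class optimal policy} leaves at most one nonclearing positive-allocation class, and an empty class cannot be overserved outside the all-empty convention case, so every nonzero critical direction loads a clearing class with \(x_i^{k,*}>0\). That class gives strictly positive \(g\)-curvature, while the \(v_{i+1}\) term is positive semidefinite by convexity.

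Two small tightenings. First, an overserved empty class is excluded by strict suboptimality, not by a cutoff tie: its marginal value is zero, which forces \(\beta=0\) and contradicts the strictly positive marginal value of any positive or growing queue. Second, global uniqueness of the reduced optimizer needs one more line, which your step (ii) relies on. SOSC makes the optimizer a strict local minimizer, and for a convex problem a strict local minimizer is the unique global one; the paper gets this from a segment-of-optimizers argument instead.
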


\subsection{First-Order Sensitivity}
The following theorem provides a characterization of the first derivative of the value function with respect to $\delta$ at a regular point, which does not depend on the derivative of the optimal policy. 
\begin{theorem}\label{thm:delta_der_value_function}
    Suppose \(\delta_0\) is a regular point at \(\vecx_1\). Then \( v_1(\vecx_1,\delta)\) is differentiable at \(\delta_0\), and its derivative is given by,
\begin{equation}\label{eq:delta_der_value_function}
         \pd{}{\delta}v_1 = \pd{g}{\delta} + \pd{\vecf}{\delta} \pd{v_{2}}{\vecx_{2}} + \pd{v_{2}}{\delta},
\end{equation}
where all terms are evaluated at \(\delta=\delta_0\) along the base optimal trajectory. %
\end{theorem}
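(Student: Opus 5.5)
This is an envelope-theorem argument built on Proposition \ref{prop:regular_point_smoothness}. Let $J$ be the interval provided there. For $\delta\in J$ write $v_1(\vecx_1,\delta)=q_1(\vecx_1,\vecu_1^*(\delta),\delta)$, where $\vecu_1^*(\delta)$ is the unique reduced optimizer. By part (i) of that proposition, this optimizer is $C^1$ in $\delta$, has a locally constant active set, and $q_1$ is $C^2$ in reduced coordinates near $(\vecx_1,\vecu_1^*,\delta_0)$. Part (i) applied to period $2$ shows that $v_2$ is $C^2$ near $(\vecx_2^*,\delta_0)$. The smooth-piece condition $R_1^{k,*}\neq 0$ for $k\notin E_1$, together with Proposition \ref{prop: convexity of g and f}(iii), makes $\vecf$ and $g$ $C^2$ near the base point in the non-maintained coordinates. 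For $k\in E_1$ we have $f^k\equiv 0$ on the reduced set, and $g$ contributes nothing for these classes. Hence the composition $\delta\mapsto q_1(\vecx_1,\vecu_1^*(\delta),\delta)$ is differentiable at $\delta_0$. The chain rule gives
\[
\pd{}{\delta}v_1=\Bigl(\pd{\vecu_1^*}{\delta}\Bigr)^{\intercal}\pd{q_1}{\vecu_1}+\pd{q_1}{\delta},
\]
where $\pd{q_1}{\delta}$ is the partial derivative with $\vecu_1$ held fixed.

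The second step removes the policy-derivative term. Use the KKT stationarity condition \eqref{eq:KKT Stationarity Chain Rule Applied} for the reduced problem, $\pd{q_1}{\vecu_1}=\beta\mathbf e+\vecgamma$, restricted to the free coordinates. Since $\vecu_1^*(\delta)$ stays feasible, $\mathbf e^\intercal\vecu_1^*(\delta)=1$ on $J$, and the maintained coordinates are fixed at $\lambda^k/\mu^k$. Differentiating gives $\mathbf e^\intercal \pd{\vecu_1^*}{\delta}=0$, and the fixed coordinates have zero derivative. Because the active set $A_1^*$ is locally constant, $u_1^{k,*}(\delta)\equiv0$ for $k\in A_1^*$, so those derivative components vanish. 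For $k\in B_1^*$, complementary slackness gives $\gamma^k=0$. Therefore $(\pd{\vecu_1^*}{\delta})^\intercal\vecgamma=0$ and $(\pd{\vecu_1^*}{\delta})^\intercal\beta\mathbf e=0$, so the first term vanishes.

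The final step expands $\pd{q_1}{\delta}$ using the definition \eqref{eq: q_t}, $q_1=g(\vecx_1,\vecu_1,\delta)+v_2(\vecf(\vecx_1,\vecu_1,\delta),\delta)$. Its partial derivative in $\delta$ with $\vecu_1$ fixed is $\pd{g}{\delta}+\pd{\vecf}{\delta}\pd{v_2}{\vecx_2}+\pd{v_2}{\delta}$. Here $v_2$ depends on $\delta$ both directly and through the state, which is exactly \eqref{eq:delta_der_value_function}.

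I expect the main obstacle to be justifying the differentiability of $v_2$ in both arguments at $(\vecx_2^*,\delta_0)$. This value is defined recursively, and the number of periods and the residual length $r(\delta)$ also depend on $\delta$. I would rely on the fact that $n(\delta)$ is constant on $I$ and invoke Proposition \ref{prop:regular_point_smoothness}(i) for every period, by backward induction from the terminal problem with length $r(\delta)$, rather than re-deriving it. A secondary subtlety is the maintained-empty classes. For these I would argue that, on $J$, the reduced representation (Condition 2 of Definition \ref{def:regular_point}) makes their contributions identically zero in both $g$ and $\vecf$, so the formula holds with the corresponding terms vanishing.
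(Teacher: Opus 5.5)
Your proposal is correct and follows essentially the paper's route: an envelope argument resting on Proposition~\ref{prop:regular_point_smoothness}, followed by the chain rule applied to $q_1=g+v_2\circ\vecf$. The one difference is how the envelope step is handled. You derive it by hand: you differentiate through the $C^1$ reduced optimizer and eliminate the policy term using the differentiated equality constraint and complementary slackness, which is exactly the content of Lemma~\ref{lem:kkt_cancellation_regular_regime}. The paper instead cites the envelope theorem directly.
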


\begin{proof}{Proof.}
By \eqref{eq: q_t}, for fixed \(\vecx_1\),
$
q_1(\vecx_1,\vecu,\delta)
=g(\vecx_1,\vecu,\delta)+
v_{2}(\vecf(\vecx_1,\vecu,\delta),\delta).
$
The reduced feasible set is compact and does not depend on \(\delta\). By Proposition \ref{prop:regular_point_smoothness}, \(q_1\) is differentiable at the base optimal allocation and the continuation value \(v_2\) is differentiable at the induced next state. The envelope theorem \citep{milgrom2002envelope} therefore gives, at \(\delta=\delta_0\),
\[
\pd{}{\delta}v_1(\vecx_1,\delta)
=\pd{}{\delta}q_1(\vecx_1,\vecu_1^*,\delta),
\]
for the base optimal allocation \(\vecu_1^*=\vecu_1^*(\delta_0)\). Applying the chain rule to \(q_1\), while holding the initial state \(\vecx_1\) fixed yields, \(\delta=\delta_0\),
\[
\pd{}{\delta}q_1(\vecx_1,\vecu_1^*,\delta)
=
\pd{g}{\delta}(\vecx_1,\vecu_1^*,\delta_0)
+\pd{\vecf}{\delta}(\vecx_1,\vecu_1^*,\delta_0)
\pd{v_2}{\vecx_2}(\vecx_2^*,\delta_0)
+\pd{v_2}{\delta}(\vecx_2^*,\delta_0),
\]
where \(\vecx_2^*=\vecf(\vecx_1,\vecu_1^*,\delta_0)\). Suppressing the base-point arguments gives \eqref{eq:delta_der_value_function}.  \Halmos \end{proof}

\subsection{Second-Order Sensitivity}
Next, we differentiate the first-order envelope expression to obtain the second derivative of \(v_1\). To distinguish the local optimizer mapping from its fixed base value, we write \(\vecu_1^*(\delta)\) whenever the optimizer is differentiated. After the main result, Lemmas \ref{lem:kkt_cancellation_regular_regime} and \ref{lem:kkt_differentiated_stationarity} record the KKT identities used to compute this sensitivity and to simplify the two-class expressions studied in Section \ref{section: two class fluid problem}.

\begin{theorem}\label{thm:delta_second_der_value_function}
    Suppose \(\delta_0\) is a regular point at \(\vecx_1\). Then \(v_1(\vecx_1,\delta)\) is twice differentiable at \(\delta_0\), holding \(\vecx_1\) fixed, and its second derivative is given by,
    \begin{equation}\label{eq:delta_second_der_value_function}
    \begin{aligned}
        \pdd{}{\delta}v_1 =
        &  \frac{d\vecu_1^*(\delta)}{d\delta}\mpd{g}{\delta}{\vecu_1} + \pdd{g}{\delta} + \left( \frac{d\vecu_1^*(\delta)}{d\delta}\mpd{\vecf}{\delta}{\vecu_1} +\pdd{\vecf}{\delta}\right)  \pd{v_{2}}{\vecx_{2}}  \\
        & + \pd{\vecf}{\delta}  \left( \left( \frac{d\vecu_1^*(\delta)}{d\delta} \pd{\vecf}{\vecu_1} + \pd{\vecf}{\delta}  \right) \pdd{v_{2}}{\vecx_{2}} + \mpd{v_{2}}{\vecx_{2}}{\delta} \right)  \\
        & + \left( \frac{d\vecu_1^*(\delta)}{d\delta} \pd{\vecf}{\vecu_1} + \pd{\vecf}{\delta} \right)\mpd{v_{2}}{\delta}{\vecx_{2}} + \pdd{v_{2}}{\delta},
    \end{aligned}
\end{equation}
where all terms are evaluated at \(\delta=\delta_0\) along the base optimal trajectory.
\end{theorem}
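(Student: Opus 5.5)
The plan is to differentiate the first-order envelope identity of Theorem \ref{thm:delta_der_value_function} once more along the smooth optimal path supplied by Proposition \ref{prop:regular_point_smoothness}. Specifically, on the interval \(J\) from that proposition, I will first show that the envelope formula holds for every \(\delta\in J\), not only at \(\delta_0\). This works because the reduced optimizer \(\vecu_1^*(\delta)\) is unique and \(C^1\), its active set is constant, \(q_1\) is \(C^2\) near the base point, and \(v_2\) is \(C^2\) near \((\vecx_2^*,\delta_0)\). Every \(\delta\in J\) therefore satisfies the same smoothness hypotheses, and the argument in the proof of Theorem \ref{thm:delta_der_value_function} gives
\[
\pd{}{\delta}v_1(\vecx_1,\delta)=\Phi\bigl(\vecu_1^*(\delta),\delta\bigr),\qquad \Phi(\vecu,\delta):=\pd{g}{\delta}(\vecx_1,\vecu,\delta)+\pd{\vecf}{\delta}(\vecx_1,\vecu,\delta)\pd{v_2}{\vecx_2}\bigl(\vecf(\vecx_1,\vecu,\delta),\delta\bigr)+\pd{v_2}{\delta}\bigl(\vecf(\vecx_1,\vecu,\delta),\delta\bigr)
\]
for all \(\delta\in J\). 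Here \(\Phi\) is \(C^1\) near \((\vecu_1^*,\delta_0)\): the smooth-piece condition and Proposition \ref{prop: convexity of g and f}(iii) make \(g\) and \(\vecf\) twice continuously differentiable there, and \(v_2\) is \(C^2\). For maintained-empty classes \(k\in E_1\), the coordinates \(x^k,u^k\) are frozen, so \(f^k\equiv0\) locally and contributes nothing.

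The second step is to differentiate \(\delta\mapsto\Phi(\vecu_1^*(\delta),\delta)\) by the chain rule, one term at a time, using the paper's column-gradient convention. Three pieces arise:
\begin{itemize}
\item The term \(\pd{g}{\delta}\) yields \(\frac{d\vecu_1^*}{d\delta}\mpd{g}{\delta}{\vecu_1}+\pdd{g}{\delta}\).
\item The product \(\pd{\vecf}{\delta}\pd{v_2}{\vecx_2}\) yields two parts. The first is the derivative of the first factor, \(\bigl(\frac{d\vecu_1^*}{d\delta}\mpd{\vecf}{\delta}{\vecu_1}+\pdd{\vecf}{\delta}\bigr)\pd{v_2}{\vecx_2}\). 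The second is \(\pd{\vecf}{\delta}\) times the total derivative of \(\pd{v_2}{\vecx_2}\) along \(\vecx_2^*(\delta)=\vecf(\vecx_1,\vecu_1^*(\delta),\delta)\), namely \(\frac{d\vecx_2^*}{d\delta}\pdd{v_2}{\vecx_2}+\mpd{v_2}{\vecx_2}{\delta}\), with \(\frac{d\vecx_2^*}{d\delta}=\frac{d\vecu_1^*}{d\delta}\pd{\vecf}{\vecu_1}+\pd{\vecf}{\delta}\).
\item The term \(\pd{v_2}{\delta}\) yields \(\frac{d\vecx_2^*}{d\delta}\mpd{v_2}{\delta}{\vecx_2}+\pdd{v_2}{\delta}\).
\end{itemize}
Collecting these and evaluating at \(\delta_0\) gives \eqref{eq:delta_second_der_value_function}. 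Since the result is a derivative of a differentiable function on \(J\), it also establishes twice differentiability of \(v_1\) at \(\delta_0\).

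The main obstacle is justifying the first step, that the envelope identity holds on a whole neighbourhood with a \(C^1\) selection. Theorem \ref{thm:delta_der_value_function} is stated only at \(\delta_0\), and regularity is defined only at \(\delta_0\). I would handle this through Proposition \ref{prop:regular_point_smoothness}: the active set and \(E_1\) stay fixed on \(J\), and SCS persists by Proposition \ref{prop:parametric_nlp_sensitivity}. The reduced problem is then locally an equality-constrained \(C^2\) program in the free coordinates, and \(R_1^{k,*}(\delta)\neq0\) persists by continuity. As an alternative that avoids the pointwise envelope argument, Proposition \ref{prop:regular_point_smoothness}(i) already gives \(v_1\in C^2\) near \((\vecx_1,\delta_0)\). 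One can then differentiate \(v_1(\vecx_1,\delta)=q_1(\vecx_1,\vecu_1^*(\delta),\delta)\) directly and use the KKT stationarity \eqref{eq:KKT Stationarity Chain Rule Applied}, together with \(\frac{d\vecu_1^*}{d\delta}^\intercal\mathbf e=0\) and \(\frac{d u_1^{k,*}}{d\delta}=0\) for \(k\in A_1^*\cup E_1\), to kill the \(\pd{q_1}{\vecu_1}\) term for all \(\delta\in J\). This recovers \(\Phi\) along the path, and differentiating once more proceeds as in the second step.
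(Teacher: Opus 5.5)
Your proposal is correct and follows the paper's proof essentially step for step. The paper likewise invokes Proposition~\ref{prop:regular_point_smoothness} to extend the first-order envelope identity of Theorem~\ref{thm:delta_der_value_function} to a neighborhood of \(\delta_0\), then differentiates its right-hand side term by term with the product and chain rules, using \(\frac{d\vecx_2}{d\delta}=\frac{d\vecu_1^*(\delta)}{d\delta}\pd{\vecf}{\vecu_1}+\pd{\vecf}{\delta}\) exactly as you do (your remarks on maintained-empty coordinates and your KKT-based alternative are consistent with, and slightly more explicit than, the paper's argument).
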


\begin{proof}{Proof.}
Proposition \ref{prop:regular_point_smoothness} gives a differentiable optimal trajectory and the required second derivatives on a neighborhood of \(\delta_0\). The first-order envelope argument therefore applies throughout a sufficiently small neighborhood and gives, \[ \pd{}{\delta}v_1 = \pd{g}{\delta} + \pd{\vecf}{\delta}  \pd{v_{2}}{\vecx_{2}} + \pd{v_{2}}{\delta}.
\] Differentiating the right-hand side of this identity as a composed function of \(\delta\) gives,
\begin{equation}\label{eq: second_der_prod_rule}
    \pdd{}{\delta} v_1  = \frac{d}{d\delta} \left[ \pd{g}{\delta} \right] + \left( \frac{d}{d\delta} \left[ \pd{\vecf}{\delta} \right] \right) \pd{v_{2}}{\vecx_{2}} + 
 \pd{\vecf}{\delta}  \left(  \frac{d}{d\delta} \left[  \pd{v_{2}}{\vecx_{2}}\right] \right) + \frac{d}{d\delta} \left[  \pd{v_{2}}{\delta} \right].
\end{equation}
We expand each term separately. By the definition of the single-period cost function \eqref{eq:g_sched} we have,
\[
\frac{d}{d\delta}  \left[ \pd{g}{\delta} (\vecx_1, \vecu_1^*(\delta), \delta) \right]
= \frac{d\vecu_1^*(\delta)}{d\delta} \mpd{g}{\delta}{\vecu_1} + \pdd{g}{\delta}.
\]
Similarly, by definition of the state-evolution function \eqref{eq:f_component_interval},
\[
\frac{d}{d\delta} \left[ \pd{\vecf}{\delta}(\vecx_1, \vecu_1^*(\delta), \delta)\right]
= \frac{d\vecu_1^*(\delta)}{d\delta} \mpd{\vecf}{\delta}{\vecu_1} + \pdd{\vecf}{\delta},
\]
where we have used that \(\vecx_1\) is a given initial condition. Using \eqref{eq:fluid_opt} we have,
\begin{equation*}
    \begin{aligned}
         \frac{d}{d\delta} \left[ \pd{v_{2}}{\vecx_{2}} (\vecx_{2}(\delta),\delta)\right] 
         & = \frac{d}{d\delta} \left[ \vecf(\vecx_1, \vecu_1^*(\delta), \delta) \right] \pdd{v_{2}}{\vecx_{2}} + \mpd{v_{2}}{\vecx_{2}}{\delta}  \\
         & = \left( \frac{d\vecu_1^*(\delta)}{d\delta} \pd{\vecf}{\vecu_1} + \pd{\vecf}{\delta}\right)\pdd{v_{2}}{\vecx_{2}} + \mpd{v_{2}}{\vecx_{2}}{\delta},
    \end{aligned}
\end{equation*}
 and, \begin{equation*}
    \begin{aligned}
        \frac{d}{d\delta} \left[ \pd{v_{2}}{\delta} (\vecx_{2}(\delta), \delta) \right] 
        & = \frac{d}{d\delta} \left[ \vecf (\vecx_1, \vecu_1^*(\delta), \delta) \right] \mpd{v_{2}}{\delta}{\vecx_{2}} + \pdd{v_{2}}{\delta}\\
        & = \left( \frac{d\vecu_1^*(\delta)}{d\delta} \pd{\vecf}{\vecu_1} + \pd{\vecf}{\delta} \right) \mpd{v_{2}}{\delta}{\vecx_{2}}  + \pdd{v_{2}}{\delta},
    \end{aligned}
 \end{equation*}
where we have made repeated use of the fact that \(\vecf\) itself depends on \(\vecx_1\), \(\vecu_1^*(\delta)\), and \(\delta\). Plugging these equations into \eqref{eq: second_der_prod_rule} gives \eqref{eq:delta_second_der_value_function}.
\Halmos \end{proof}

 Along the local regular regime supplied by Proposition \ref{prop:regular_point_smoothness}, let \(\beta(\delta)\) and \(\vecgamma(\delta)\) denote the corresponding KKT multipliers, with \(\beta=\beta(\delta_0)\) and \(\vecgamma=\vecgamma(\delta_0)\). The following two lemmas collect the KKT identities used to compute the optimizer sensitivity and to simplify the two-class results.

\begin{lemma}[KKT cancellation along a regular regime]\label{lem:kkt_cancellation_regular_regime}
Suppose \(\delta_0\) is a regular point at \(\vecx_1\). Interpret the KKT conditions for the reduced first-period problem when \(E_1\neq\emptyset\), and extend the resulting derivative to the full allocation vector by setting \(d u_1^{k,*}(\delta)/d\delta=0\) for \(k\in E_1\). Then, at \(\delta=\delta_0\) along the local regular regime,
\begin{equation}\label{eq: KKT Derivative of Equality Constraint}
     \frac{d\vecu_1^*(\delta)}{d\delta}\mathbf{e} = 0,
\end{equation}
\begin{equation}\label{eq:KKT Complentary Slack with u delta derivatives}
     \frac{d\vecu_1^*(\delta)}{d\delta}\vecgamma = 0,
\end{equation}
and,
\begin{equation}\label{eq: Vector FONC}
    \frac{d\vecu_1^*(\delta)}{d\delta} \left( \pd{g}{\vecu_1}  + \pd{\vecf}{\vecu_1} \pd{v_{2}}{\vecx_{2}}  \right) = 0.
\end{equation}
\end{lemma}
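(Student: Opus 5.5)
Three identities are to be proved, and all of them come from the fact that at a regular point the KKT system of the (reduced) first-period problem holds identically in \(\delta\) on the interval \(J\) of Proposition \ref{prop:regular_point_smoothness}, with a fixed active set. The plan is to differentiate the primal constraints and complementary slackness along the \(C^1\) optimizer path, and then combine them with stationarity.

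\emph{Step 1: equality constraint.} By Proposition \ref{prop:regular_point_smoothness}(ii), \(\delta\mapsto\vecu_1^*(\delta)\) is \(C^1\) on \(J\) and feasible for every \(\delta\in J\). Hence \(\vecu_1^*(\delta)^\intercal\mathbf e=1\) identically. For \(k\in E_1\) the reduced problem fixes \(u_1^{k,*}(\delta)=\lambda^k/\mu^k\), so the convention \(du_1^{k,*}/d\delta=0\) is consistent. Differentiating the identity at \(\delta_0\) gives \eqref{eq: KKT Derivative of Equality Constraint}.

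\emph{Step 2: complementary slackness.} By Proposition \ref{prop:regular_point_smoothness}(i), the active set \(A(\vecu_1^*(\delta))\) equals \(A_1^*\) for all \(\delta\in J\). Take any \(k\in A_1^*\setminus E_1\). Then \(u_1^{k,*}(\delta)=0\) on \(J\), so \(du_1^{k,*}/d\delta=0\) at \(\delta_0\). Take any \(k\in B_1^*\). Complementary slackness \eqref{eq:KKT Complementary Slackness} forces \(\gamma^k=0\). Classes in \(E_1\) have derivative zero by convention; alternatively, their multipliers do not appear in the reduced problem, so we may set \(\gamma^k=0\) for them. In every case \(\frac{du_1^{k,*}}{d\delta}\gamma^k=0\), and summing over \(k\) gives \eqref{eq:KKT Complentary Slack with u delta derivatives}. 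Here \(\vecgamma=\vecgamma(\delta_0)\) is fixed; we do not need to differentiate the multipliers.

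\emph{Step 3: stationarity.} At a regular point, \(q_1\) is differentiable in \(\vecu_1\) at \((\vecx_1,\vecu_1^*,\delta_0)\) by Proposition \ref{prop:regular_point_smoothness}(i). The smooth-piece condition ensures \(\vecf\) is differentiable there, and \(v_2\) is \(C^2\) near \(\vecx_2^*\). So the stationarity condition takes the chain-rule form \eqref{eq:KKT Stationarity Chain Rule Applied}: in the reduced coordinates \(k\notin E_1\),
\[
\pd{g}{\vecu_1}+\pd{\vecf}{\vecu_1}\pd{v_2}{\vecx_2}=\beta\mathbf e+\vecgamma .
\]
Left-multiply by the row vector \(\frac{d\vecu_1^*(\delta)}{d\delta}\). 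The right side becomes \(\beta\,\frac{d\vecu_1^*}{d\delta}\mathbf e+\frac{d\vecu_1^*}{d\delta}\vecgamma\), which is \(0\) by Steps 1 and 2. This gives \eqref{eq: Vector FONC}.

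\emph{Main obstacle.} The only delicate point is the bookkeeping for classes in \(E_1\). For such \(k\), the entry of \(\partial q_1/\partial u_1^k\) may not exist, because \(g\) and \(f^k\) are not differentiable at maintained-empty points. So stationarity need not hold in those coordinates. I would handle this by stating all three identities as sums over \(k\notin E_1\), where reduced stationarity is valid, and observing that the \(E_1\) terms are multiplied by \(du_1^{k,*}/d\delta=0\). Any finite placeholder value in those entries then contributes nothing. Once this is made explicit, the remaining steps are routine.
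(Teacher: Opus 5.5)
Your proposal is correct and follows the paper's proof essentially step for step. You differentiate the reduced simplex constraint, use the locally fixed active set together with complementary slackness to get \(\gamma^k\, du_1^{k,*}(\delta)/d\delta=0\) coordinatewise, and then multiply the chain-rule stationarity identity by \(d\vecu_1^*(\delta)/d\delta\). Your explicit treatment of the maintained-empty coordinates — where the gradient entries need not exist but are annihilated by the zero derivative — is a slightly more careful version of the paper's remark about embedding the reduced derivative in the full allocation vector.
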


\begin{lemma}[Differentiated KKT stationarity]\label{lem:kkt_differentiated_stationarity}
Suppose \(\delta_0\) is a regular point at \(\vecx_1\). On the free variables of the reduced first-period problem, and with zero derivatives assigned to maintained-empty components, at \(\delta=\delta_0\) along the local regular regime,
\begin{align}
    \frac{d\beta(\delta)}{d\delta} \vece + \frac{d\vecgamma(\delta)}{d\delta} &  =\frac{d\vecu_1^*(\delta)}{d\delta} \pdd{g}{\vecu_1} + \mpd{g}{\vecu_1}{\delta} + \mpd{\vecf}{\vecu_1}{\delta}\pd{v_{2}}{\vecx_{2}} \nonumber \\
    & \quad +\pd{\vecf}{\vecu_1} \left( \left( \frac{d\vecu_1^*(\delta)}{d\delta} \pd{\vecf}{\vecu_1} + \pd{\vecf}{\delta}\right)\pdd{v_{2}}{\vecx_{2}} + \mpd{v_{2}}{\delta}{\vecx_{2}} \right). \label{eq: der_of_fonc_interior}
\end{align}
\end{lemma}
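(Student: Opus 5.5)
The plan is to obtain \eqref{eq: der_of_fonc_interior} by differentiating, in \(\delta\), the stationarity identity \eqref{eq:KKT Stationarity Chain Rule Applied}. This identity holds identically along the local regular regime. Proposition \ref{prop:regular_point_smoothness} guarantees that, on an interval \(J\ni\delta_0\), the reduced first-period optimizer \(\vecu_1^*(\delta)\) is unique and \(C^1\), and that its active set is constant. The same holds for the multipliers: Proposition \ref{prop:parametric_nlp_sensitivity} applies, since LICQ was verified above, SCS is condition 3 of Definition \ref{def:regular_point}, and SOSC is supplied by Proposition \ref{prop:regular_point_smoothness}. Hence \((\beta(\delta),\vecgamma(\delta))\) is \(C^1\) on \(J\). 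Moreover, on the free coordinates,
\[
\beta(\delta)\vece+\vecgamma(\delta)=\pd{g}{\vecu_1}(\vecx_1,\vecu_1^*(\delta),\delta)+\pd{\vecf}{\vecu_1}(\vecx_1,\vecu_1^*(\delta),\delta)\,\pd{v_2}{\vecx_2}\bigl(\vecf(\vecx_1,\vecu_1^*(\delta),\delta),\delta\bigr)
\]
holds for every \(\delta\in J\), not only at \(\delta_0\). The maintained-empty coordinates \(k\in E_1\) are fixed at \(\lambda^k/\mu^k\). Their equations are removed from the reduced problem, and we assign them zero derivative, consistent with Lemma \ref{lem:kkt_cancellation_regular_regime}.

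Next we differentiate both sides in \(\delta\). The left side gives \(\frac{d\beta}{d\delta}\vece+\frac{d\vecgamma}{d\delta}\). On the right, the first term depends on \(\delta\) both directly and through \(\vecu_1^*(\delta)\), so the chain rule produces \(\frac{d\vecu_1^*}{d\delta}\pdd{g}{\vecu_1}+\mpd{g}{\vecu_1}{\delta}\). This is legitimate because, by the smooth-piece condition and Proposition \ref{prop: convexity of g and f}(iii), \(g\) is \(C^2\) near the base point in reduced coordinates.

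The second term is a product, and we use the product rule. Differentiating the factor \(\pd{\vecf}{\vecu_1}\) gives \(\frac{d\vecu_1^*}{d\delta}\frac{\partial^2\vecf}{\partial\vecu_1^2}+\mpd{\vecf}{\vecu_1}{\delta}\). Because \(\vecf\) is separable and affine in \(\vecu_1\) on each smooth piece (each \(f^k\) is either \(x^k+\delta(\lambda^k-\mu^ku^k)\) or identically \(0\) there), the pure second derivative vanishes and only \(\mpd{\vecf}{\vecu_1}{\delta}\pd{v_2}{\vecx_2}\) survives. The factor \(\pd{v_2}{\vecx_2}\) is differentiated exactly as in the proof of Theorem \ref{thm:delta_second_der_value_function}. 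This yields \(\bigl(\frac{d\vecu_1^*}{d\delta}\pd{\vecf}{\vecu_1}+\pd{\vecf}{\delta}\bigr)\pdd{v_2}{\vecx_2}+\mpd{v_2}{\delta}{\vecx_2}\), premultiplied by \(\pd{\vecf}{\vecu_1}\). The step is valid since \(v_2\) is \(C^2\) near \((\vecx_2^*,\delta_0)\) by Proposition \ref{prop:regular_point_smoothness}(i). Collecting the terms and evaluating at \(\delta_0\) gives \eqref{eq: der_of_fonc_interior}.

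The main obstacle is justifying that the stationarity identity holds on a whole neighborhood, and not just at \(\delta_0\), with differentiable multipliers. This is where the constant active set and the \(C^1\) primal-dual selection from Proposition \ref{prop:parametric_nlp_sensitivity} are essential. A related subtlety is the reduced-coordinate bookkeeping. Rows for \(k\in E_1\) must be excluded, since there the subdifferential, not the gradient, appears. Rows for \(k\in A_1^*\) must be kept: there \(d u_1^{k,*}/d\delta=0\), but \(d\gamma^k/d\delta\) is generally nonzero. The remaining work is carefully tracking the transpose convention \(\pd{F}{y}=(D_yF)^\intercal\), so that each chain-rule product has the orientation stated in the lemma.
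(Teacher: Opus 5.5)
Your proposal is correct and follows essentially the same route as the paper. You differentiate the reduced stationarity identity \eqref{eq:KKT Stationarity Chain Rule Applied} in \(\delta\), apply the chain rule to \(\pd{g}{\vecu_1}\) and the product and chain rules to \(\pd{\vecf}{\vecu_1}\pd{v_2}{\vecx_2}\), and drop the pure second derivative of \(\vecf\) in \(\vecu_1\) because \(\vecf\) is affine in \(\vecu_1\) on each smooth piece. Your appeal to Proposition \ref{prop:parametric_nlp_sensitivity} to show that the multipliers \((\beta(\delta),\vecgamma(\delta))\) are \(C^1\) makes explicit a step the paper leaves implicit when it cites Proposition \ref{prop:regular_point_smoothness}.
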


The regularity conditions of Definition \ref{def:regular_point} exclude points at which a positive queue clears exactly at a period endpoint, because the state transition is not differentiable there. Consequently, the preceding sensitivity results do not apply at such points. The relevant endpoint-clearing
cases where the derivatives exist are treated directly in Sections \ref{section: two class fluid problem}
and \ref{section: more than two classes}.

\section{The Two-Class Case}\label{section: two class fluid problem}
In this section, we study the two-class scheduling problem to gain insights on how the value function changes as
\(\delta\) varies. For each review length \(\delta\), let
\(\vecu_i^*(\delta)\) denote the selected optimal period-\(i\)
allocation and let \(\vecx_i(\delta)\) denote its induced state. In
pointwise formulas throughout the scheduling results below and their
proofs, we suppress the argument \(\delta\). We retain it
whenever dependence on \(\delta\) is differentiated or used to define a
threshold.

We start by computing the optimal allocations for a fixed $\delta$. Theorem~\ref{thm:K_class optimal policy} implies that, in each
period \(i\), class \(1\) either receives all capacity or receives enough capacity to clear by the end of that period. Since there are only two
classes we also have $u_i^{2,*}(\delta)=1-u_i^{1,*}(\delta)$.
For  each period where class 1 cannot be emptied, we have
\(u_i^{1,*}=1\). Thus, define,
\begin{equation}
    \hat x_i^1(\delta)
    :=
    x_1^1+(i-1)\delta(\lambda^1-\mu^1).
\end{equation}
If $\hat x_i^1(\delta) /(\Delta_i(\delta)\mu^1)
    +
    (\lambda^1/\mu^1)
    >1$ for all $i \in \{1,\ldots,n(\delta)\}$,
then class \(1\) cannot be cleared in any period, even under full allocation.
Hence $u_i^{1,*}=1,
u_i^{2,*}=0$ for all $i\in\{1,\ldots,n(\delta)\}$
and the optimal policy is fully characterized independent of $\delta$. Hence, we assume $\mu^1>\lambda^1$ throughout this section. Define the first period in which class \(1\) can be cleared by,
\begin{equation}\label{eq:first_clearable_period}
    m(\delta)
    :=
    \min
    \left\{
    i\in\{1,\ldots,n(\delta)\}:
    \bar u_i^1(\delta)\le 1
    \right\}.
\end{equation}
For the fixed review length under consideration, write
\(m=m(\delta)\).
For every period \(i<m\), class \(1\) cannot be cleared, so $u_i^{1,*}=1,
u_i^{2,*}=0$.
Therefore, for \(i<m\), the states evolve according to,
\begin{align}
    x_{i+1}^1
    &=
    x_i^1+\delta(\lambda^1-\mu^1),\\
    x_{i+1}^2
    &=
    x_i^2+\delta\lambda^2.
\end{align}
In period \(m\), class \(1\) can be cleared. Theorem~\ref{thm:K_class optimal policy}
implies that the optimal allocation satisfies $\bar u_m^1
    \le
    u_m^{1,*}
    \le
    1.$
The exact value of \(u_m^{1,*}\) is the solution of a
one-dimensional optimization problem over the interval
\([\bar u_m^1,1]\). Fix \(u\in[\bar u_m^1,1]\). Since
\(u\ge \bar u_m^1\), class \(1\) clears by the end of period \(m\), and the
next-period class-\(1\) state is zero. In the following we consider the
nonterminal case \(m<n(\delta)\), if \(m=n(\delta)\), the same
minimization applies with the continuation value omitted. The class-\(2\) state after period \(m\)
is,
\begin{equation}\label{eq:two_class_x_mplus1_2}
    x_{m+1}^2(u)
    =
    \left[
    x_m^2+\Delta_m(\delta)\{\lambda^2-\mu^2(1-u)\}
    \right]^+.
\end{equation}
Thus,
\begin{align}\label{eq:two_class_one_dim_problem}
    u_m^{1,*}
    \in
    \arg\min_{u\in[\bar u_m^1,1]}
    \bigg\{
    &h^1
    \int_0^{\sigma_m^1(u)}
    \left[
    x_m^1+s(\lambda^1-\mu^1u)
    \right]ds \nonumber\\
    &+
    h^2
    \int_0^{\Delta_m(\delta)\wedge\sigma_m^2(u)}
    \left[
    x_m^2+s\left(\lambda^2-\mu^2(1-u)\right)
    \right]ds \nonumber\\
    &+
    v_{m+1}
    \left(
    0,\,
    x_{m+1}^2(u),\,
    \delta
    \right)
    \bigg\},
\end{align}
where,
\[
\sigma_m^1(u)
=
\frac{x_m^1}{\mu^1u-\lambda^1},
\qquad
\sigma_m^2(u)
=
\begin{cases}
\dfrac{x_m^2}{\mu^2(1-u)-\lambda^2},
& \text{if } \mu^2(1-u)>\lambda^2,\\[0.8em]
\infty,
& \text{otherwise}.
\end{cases}
\]

It remains to characterize the continuation after class \(1\) has cleared.
Suppose a period \(i>m\) begins with \(x_i^1=0\). Then allocating $u_i^1=\lambda^1/\mu^1$ keeps class \(1\) empty throughout the period. %
Therefore, by the
priority structure in Theorem~\ref{thm:K_class optimal policy}, there exists an optimal
allocation with,
$
u_i^{1,*}=\frac{\lambda^1}{\mu^1}$, and 
$
u_i^{2,*}=1-\frac{\lambda^1}{\mu^1},
$
in every subsequent period that begins with \(x_i^1=0\), provided
\(\lambda^1/\mu^1\le 1\).

\subsection{Definitions and Overview of the Results}
 To characterize the relationship between \(v_1\) and \(\delta\), define the thresholds,
\begin{align}
    \tilde{\delta} & := \frac{x_1^1}{\mu^1 - \lambda^1}\label{def: delta large enough}, \\
    \hat{\delta} & := \sup \{ \delta >0:  u_1^{1,*}(\delta) = 1 \} . \label{def: delta hat}
\end{align}
Here, $\tilde{\delta}$ is the largest $\delta$ for which assigning full first-period capacity to class $1$ does not clear class $1$ before the end of the period. Thus, for $0<\delta \leq \tilde{\delta}$, class $1$ either does not clear during the first period, or it clears exactly at the end of the period. 
\(\hat{\delta}\) is the largest value of \(\delta\) for which the optimal first-period allocation still gives full capacity to class \(1\), even if class \(1\) clears before the end of the period. These thresholds induce up to three regions: \emph{Region 1}: $(0,\tilde{\delta}]$, \emph{Region 2}: $(\tilde{\delta},\hat{\delta}]$, and \emph{Region 3}: $(\hat{\delta},T]$.
The second region is empty if \(\hat{\delta}=\tilde{\delta}\).

Figure \ref{fig: three regions overview} illustrates the three regions for an instance of the problem. As we show in the remainder of this section, the relationship between \(v_1\) and \(\delta\) can be non-monotone throughout Region 1, eventually becoming concave in Region 3, though it may be initially linear in Region 3. $v_1$ can be either convex or concave with respect to $\delta$ in Region 2, depending on whether the system empties by time $T$. We analyze each of the three regions separately next.

\begin{figure}
    \FIGURE{
    \includegraphics[width=0.6\textwidth]{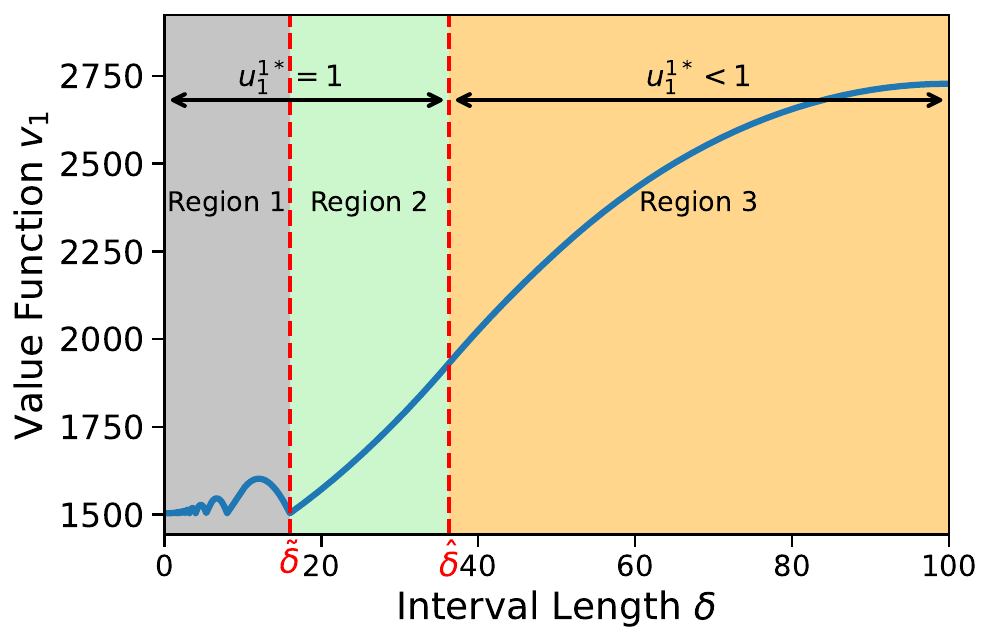}}
      {The relationship between the value function \(v_1\) versus interval length \(\delta\) for the two-class problem. \label{fig: three regions overview}}
      {The dotted red lines represent the boundaries between regions, defined by \(\tilde{\delta}\) and \(\hat{\delta} \) (see \eqref{def: delta large enough} and \eqref{def: delta hat}). The system parameters are \((\lambda^1, \lambda^2) = (0.5, 0.25)\), \(\mu^1 = \mu^2 = 1\), \((h^1, h^2) = (20, 1)\), \(T = 100\), and \((x_1^1, x_1^2 ) = (8, 4)\).}
\end{figure}

\subsection{Region 1}\label{section: region one}
  This region is defined by \(\delta \in (0, \tilde{\delta}]\), i.e., assigning full capacity to class 1 is not sufficient to empty it during the first period. In general, \(v_1\) is not monotone with respect to \(\delta\) in this region. We illustrate this and provide intuition using an example for the same example from Figure \ref{fig: three regions overview}. Figure \ref{fig:optimal traj grid} illustrates the optimal trajectories under continuous-time control as well as for \(\delta \in \{4, 6, 8, 14, 16\}\). For continuous-time control, the \(c\mu\) policy initially assigns full capacity to class \(1\) until its trajectory hits zero at time \(16\). However, it is not possible to achieve the same trajectories in discrete-time for all \(\delta>0\). We need to reassign service capacity at time \(16\) to attain the optimal trajectories of the continuous-time control problem. This is possible when \(\delta\) divides \(16\); for example when \(\delta \in \{4, 8, 16\}\), as seen in Figure \ref{fig:optimal traj grid}. For \(\delta\in \{6, 14\}\), the optimal trajectories differ from those of continuous-time control, as also seen in Figure \ref{fig:optimal traj grid}. As such, not merely frequency of control, but the specific timing of control is relevant for minimizing cost.

  Figure \ref{fig: region 1 zoom} depicts the value function in Region 1. Values of \(\delta\) where the continuous-time control optimal trajectories can be achieved result in the lowest possible cost; these are the valleys seen in the figure. At such \(\delta\) values, the optimal trajectories always have a period where all capacity is given to class \(1\), with it clearing at the end of that period. This can be observed in Figure \ref{fig:optimal traj grid} for \(\delta \in\{4, 8, 16\}\). At such points, \(v_1\) is not differentiable with respect to \(\delta\), as evident from Figure \ref{fig: region 1 zoom}. %

\begin{figure}
    \FIGURE{
    \includegraphics[scale=0.2]{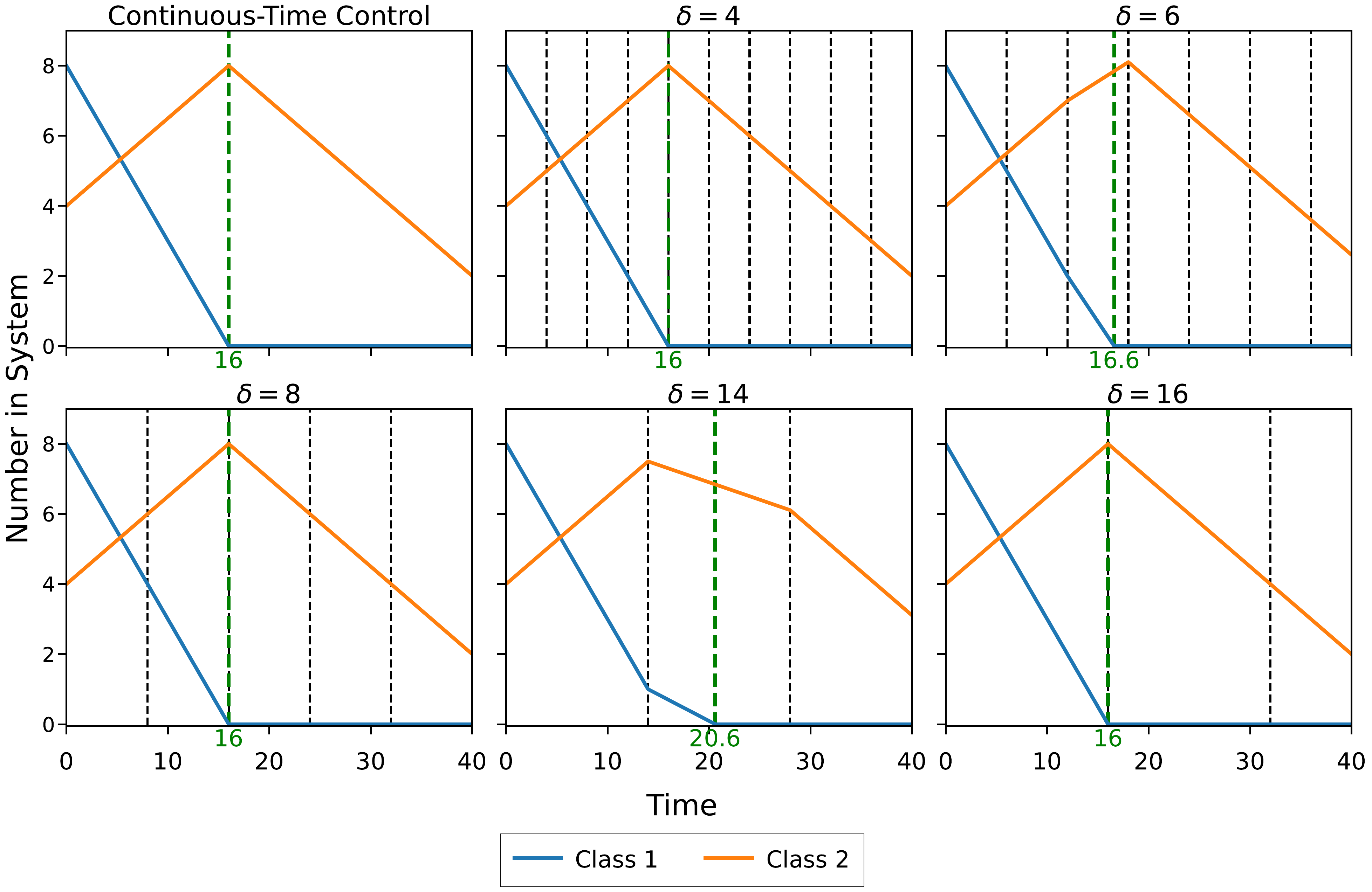}}
    {Comparison of optimal trajectories for various $\delta$ values in Region 1. \label{fig:optimal traj grid}}
      { The black dotted lines represent decision epochs where capacity can be reassigned, while the green dotted line and number indicate the time when class \(1\) empties out. The system parameters are the same as in Figure \ref{fig: three regions overview}.}
\end{figure}

\begin{figure}
    \FIGURE{
    \includegraphics[width=0.45\linewidth]{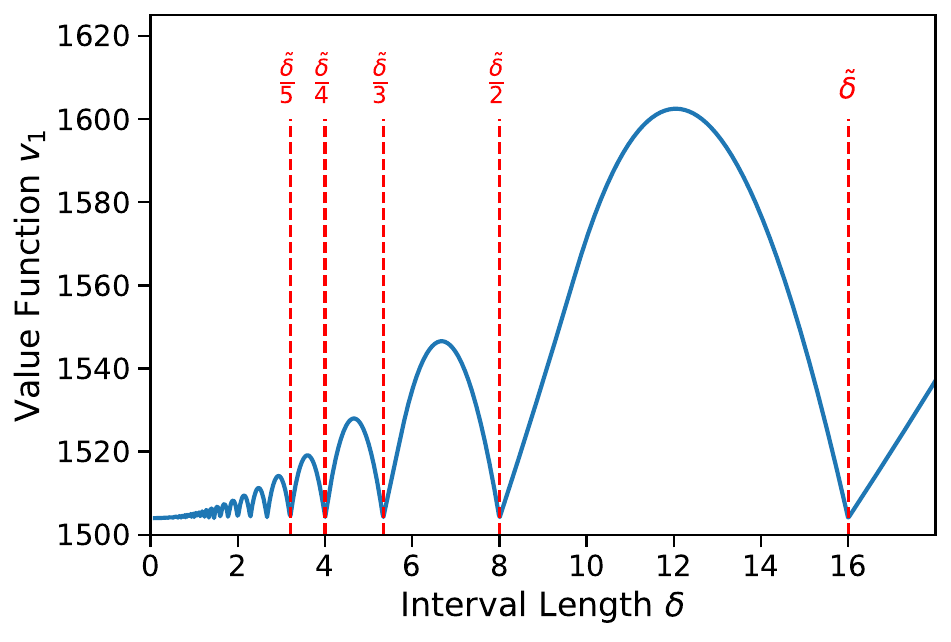}}
      {Value function \(v_1\) versus \(\delta\) for the two-class problem in Region 1. \label{fig: region 1 zoom}}
      { The red dotted lines show the first few divisors of \(\tilde{\delta}\); see \eqref{def: delta large enough}. The parameters are the same as in Figure \ref{fig: three regions overview}.} %
\end{figure}

\subsection{Regions 2 and 3: Preliminaries}\label{section: regions 2 and 3 overview}
  Before analyzing each region, we provide some general results for Regions 2 and 3. We show that in these regions, \(v_1\) is monotone increasing with respect to \(\delta\) and provide expressions for the derivatives of \(v_1\) with respect to \(\delta\). We focus our analysis on the nontrivial regime in which class 2 remains positive at the end of the first period. If both classes clear strictly before the end of the first period, then the value function is independent of $\delta$. 
  
  Region 2 consists of \(\delta\) values where \(u_1^{1,*}=1\) and \(u_1^{2,*}=0\), while Region 3 consists of \(\delta\) values where \(u_1^{1,*}<1\) and \(u_1^{2,*}>0\). Thus, \(\hat\delta\) can be viewed as a constraint-activity cutoff: the class-\(2\) nonnegativity constraint is active in Region 2 and inactive in Region 3. At this cutoff, the nonnegativity multiplier for class \(2\) is zero, or equivalently \(\mathcal M_1^1=\mathcal M_1^2\) and so SCS fails; see \eqref{eq:dynamic_marginal_value}. This cutoff tie does not create a first-order kink in \(v_1\). The optimizer derivative is zero in the interior of Region 2 and is given by the interior KKT derivative in Region 3. However, the second derivatives need not agree at \(\hat\delta\), where the optimal allocation may not be differentiable.

Recall that our sensitivity results do not apply when a class starting with positive queue clears exactly at the end of a period. However, we show that first-period endpoint clearing can be optimal throughout an open interval of review lengths. In such cases, if $\delta$ is large enough that class $1$ \emph{can} clear in the first period, it is optimal to assign maximal allocation to class $2$, without violating Theorem \ref{thm:K_class optimal policy}. Thus, in the first period, we assign the minimal capacity to clear class $1$ (i.e. at the end of the period), giving the rest of the allocation to class $2$. In this case, the clearing equality explicitly determines the optimizer on
that interval: \begin{equation}\label{eq:two_class_endpoint_allocation}
  u_1^{1,*}(\delta)
  =\bar u_1^1(\delta)
  =\frac{x_1^1}{\delta\mu^1}+\frac{\lambda^1}{\mu^1}.
\end{equation}
Consequently,
\begin{equation}\label{eq:two_class_endpoint_allocation_derivative}
  \frac{d u_1^{1,*}(\delta)}{d\delta}
  =-\frac{x_1^1}{\mu^1\delta^2},
  \qquad
  \frac{d u_1^{2,*}(\delta)}{d\delta}
  =\frac{x_1^1}{\mu^1\delta^2}.
\end{equation}
The value function can then be differentiated directly after
substituting \eqref{eq:two_class_endpoint_allocation} into the
cost expression. In the following, we denote by $\sigma^k_i$ the clearing time of class $k$ in period $i$ under the \emph{optimal policy}. %

  \begin{proposition}\label{thm: 2 class scheduling v_n der wrt delta}
    Assume $\sigma^1_2>\delta$. For \(\delta>\tilde{\delta}\) we have,
  \begin{equation}\label{eq: two class first derivative general}
      \pd{}{\delta}v_1 = 
  \begin{cases}
        h^2 \mu^2 (T-\delta)\left(u_1^{1,*}-\frac{\lambda^1}{\mu^1}\right), & \sigma_1^1 \ne \delta,\ \sigma_1^2>\delta,\ \sigma_2^2>T-\delta, \\
            h^2 x_2^2 \left(1+\dfrac{\lambda^2-\mu^2 u_1^{2,*}}{\mu^2 u_2^{2,*}-\lambda^2}\right), & \sigma_1^1 \ne \delta,\ \sigma_1^2>\delta,\ \sigma_2^2\le T-\delta, \\
          \dfrac{x_1^1 (h^1 \mu^1-h^2 \mu^2)}{2 \mu^1}, & \sigma_1^1=\delta.
  \end{cases}    
  \end{equation}
  \end{proposition}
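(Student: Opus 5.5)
The plan is to use the first-order envelope formula of Theorem~\ref{thm:delta_der_value_function} in the two cases with \(\sigma_1^1\ne\delta\). In the endpoint-clearing case \(\sigma_1^1=\delta\), that theorem does not apply. There we substitute the explicit optimizer \eqref{eq:two_class_endpoint_allocation} into the cost and differentiate directly.

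Because \(\delta>\tilde\delta\), class 1 can be cleared in period 1, so \(m=1\). The assumption \(\sigma_2^1>\delta\) is read as saying that class 1 starts period 2 empty and stays empty, so \(x_2^1=0\). From period 2 on, class 1 receives \(\lambda^1/\mu^1\) and class 2 receives \(1-\lambda^1/\mu^1\). The problem then reduces to a continuation in which class 2 is served at the constant net rate \(\mu^2(1-\lambda^1/\mu^1)-\lambda^2\) until it clears. This continuation does not depend on how the remaining horizon \(T-\delta\) is partitioned. Hence we may write
\[
v_2(\vecx_2,\delta)=h^2\int_0^{(T-\delta)\wedge\sigma_2^2}\bigl(x_2^2+s(\lambda^2-\mu^2u_2^{2,*})\bigr)\,ds,
\]
which depends on \(\delta\) only through the horizon length \(T-\delta\).

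For the cases with \(\sigma_1^1\ne\delta\), class 1 clears strictly inside period 1 and \(\sigma_1^2>\delta\). The first-period cost \(g\) then depends on \(\delta\) only through the class-2 term, giving \(\partial g/\partial\delta=h^2x_2^2\). Next, \(\partial f^1/\partial\delta=0\) and \(\partial f^2/\partial\delta=\lambda^2-\mu^2u_1^{2,*}\). We also need \(\partial v_2/\partial x_2^2\) and \(\partial v_2/\partial\delta\), computed from the display above.
\begin{itemize}
\item If \(\sigma_2^2>T-\delta\), class 2 never clears. Then \(\partial v_2/\partial x^2=h^2(T-\delta)\), and \(\partial v_2/\partial\delta\) equals \(-h^2\) times the class-2 level at time \(T\). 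Summing the three envelope terms and expanding the terminal level \(x_2^2+(T-\delta)(\lambda^2-\mu^2u_2^{2,*})\), the \(x_2^2\) terms cancel. What remains is \(h^2(T-\delta)\mu^2(u_2^{2,*}-u_1^{2,*})\). With \(u^2=1-u^1\) and \(u_2^{1,*}=\lambda^1/\mu^1\), this is the first line of \eqref{eq: two class first derivative general}.
\item If \(\sigma_2^2\le T-\delta\), then \(\partial v_2/\partial\delta=0\), since the integrand vanishes at the upper limit. Also \(\partial v_2/\partial x^2=h^2\sigma_2^2=h^2x_2^2/(\mu^2u_2^{2,*}-\lambda^2)\). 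Substituting gives the second line.
\end{itemize}

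For \(\sigma_1^1=\delta\), we argue that the endpoint-clearing allocation is the optimizer on an open interval, so \(v_1\) is a smooth function of \(\delta\) there. We write
\[
v_1=h^1\int_0^\delta x_1^1(1-s/\delta)\,ds+h^2\!\int_0^\delta\bigl(x_1^2+s(\lambda^2-\mu^2(1-\bar u_1^1))\bigr)\,ds+v_2 .
\]
The class-1 term equals \(h^1x_1^1\delta/2\), contributing \(h^1x_1^1/2\) to the derivative. We then differentiate the remaining terms using \eqref{eq:two_class_endpoint_allocation_derivative}. The direct effect of \(\delta\) on the class-2 terms should combine with the continuation value exactly as in the previous cases. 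This uses the fact that the total class-2 drain rate after time \(\delta\) is the same rate used before \(\delta\) beyond the clearing requirement; equivalently, capacity is fully used by class 2 once class 1 is empty. The surviving contribution is the change in class-2 cost caused by shifting capacity into period 1 via \(du_1^{2,*}/d\delta=x_1^1/(\mu^1\delta^2)\). Integrating \(h^2\mu^2 s\) over \([0,\delta]\) gives the weight \(\delta^2/2\), so this contribution is \(-h^2\mu^2x_1^1/(2\mu^1)\). Adding the two yields the third line.

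The main obstacle is this last case. We must show that the \(h^2\) contributions from the \(\delta\)-dependence of the class-2 state, the clearing time, and the continuation term cancel exactly, leaving only the \(c\mu\)-difference term, regardless of whether class 2 clears before \(T\). We would first verify this by an explicit computation in the sub-case where class 2 does not clear. The clearing sub-case would then be handled through the envelope identity \(\partial v_2/\partial\delta=0\).
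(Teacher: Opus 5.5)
Your treatment of the two cases with \(\sigma_1^1\neq\delta\) matches the paper's: the envelope formula of Theorem~\ref{thm:delta_der_value_function} applied with the explicit \(v_2\). Your accounting in the endpoint case is also correct. Since \(u_1^{2,*}=u_2^{2,*}-x_1^1/(\delta\mu^1)\), one gets \(dx_2^2/d\delta=\lambda^2-\mu^2u_2^{2,*}\). In both sub-cases, the boundary term \(h^2x_2^2\), the continuation-state term and \(\partial v_2/\partial\delta\) then cancel, leaving \(-h^2\mu^2(\delta^2/2)\,du_1^{2,*}/d\delta=-h^2\mu^2x_1^1/(2\mu^1)\). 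So the cancellation you call the main obstacle is routine algebra.

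The genuine gap is elsewhere: the unsupported sentence ``we argue that the endpoint-clearing allocation is the optimizer on an open interval.'' The whole third case rests on it. At \(\sigma_1^1=\delta\), \(\delta\mapsto f^1\) has a kink, so the envelope theorem is unavailable. Before substituting \(\bar u_1^1(\delta)\), you must know the optimizer at neighbouring \(\delta\). Your proposal gives no mechanism for this. Moreover, the claim is false at one point of the case: the set of \(\delta>\tilde\delta\) where endpoint clearing is optimal is a half-closed interval \((\tilde\delta,\delta_E]\). Just to the right of \(\delta_E\), the optimizer clears class \(1\) strictly before \(\delta\).

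The missing idea is this. Under the endpoint allocation, class \(2\)'s absolute clearing time
\[
\tau_2=\frac{x_1^2+(\mu^2/\mu^1)x_1^1}{\mu^2(1-\lambda^1/\mu^1)-\lambda^2}
\]
(taken as \(\infty\) if the denominator is nonpositive) does not depend on \(\delta\). Let \(\tau=T\wedge\tau_2\). The right derivative of the convex map \(u\mapsto q_1(\vecx_1,(u,1-u)^\intercal,\delta)\) at \(u=\bar u_1^1(\delta)\) is \(\delta\bigl(h^2\mu^2\tau-\tfrac12(h^1\mu^1+h^2\mu^2)\delta\bigr)\). Hence endpoint clearing is optimal if and only if \(\delta\le\delta_E:=2h^2\mu^2\tau/(h^1\mu^1+h^2\mu^2)\). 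This gives the interval and rules out isolated endpoint-clearing points.

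It also gives \(\delta_E<\tau\le\tau_2\), so class \(2\) stays positive through period \(1\). Your expression for \(v_1\) tacitly uses this. The intended hypothesis is \(\sigma_1^2>\delta\), not a condition on class \(1\) in period \(2\).

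Finally, at \(\delta_E\) itself you must check separately that the right derivative equals \(x_1^1(h^1\mu^1-h^2\mu^2)/(2\mu^1)\). That right derivative is given by the \(\sigma_1^1\neq\delta\) envelope formulas evaluated at \(u_1^{1,*}=\bar u_1^1(\delta_E)\), using continuity of the unique optimizer. The match holds precisely because the marginal condition is an equality there. For example, when \(\tau=T\), \(h^2\mu^2(T-\delta_E)x_1^1/(\delta_E\mu^1)=x_1^1(h^1\mu^1-h^2\mu^2)/(2\mu^1)\).
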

  On regular subregions with \(\sigma_1^1<\delta\), Theorem \ref{thm:delta_der_value_function} gives the first derivative. The endpoint-clearing formula follows by direct differentiation using \eqref{eq:two_class_endpoint_allocation}. The expression depends on whether class \(2\) clears by the terminal time and on when class \(1\) clears. The conditions \(\sigma_1^2>\delta\) and \(\sigma_2^2>T-\delta\) imply that class \(2\) remains positive at the end of period \(1\), whereas conditions \(\sigma_1^2>\delta\) and \(\sigma_2^2\le T-\delta\) imply that class \(2\) remains positive at the end of period \(1\) and then clears later in the horizon. For the case with $\sigma_1^1=\delta$, we show that if endpoint clearing is optimal for any \(\delta>\tilde{\delta}\), then it is optimal in an open interval. In this case, the first derivative is constant and so \(v_1\) is locally linear. 

  The next result establishes the monotonicity of \(v_1\) with respect to \(\delta\) in Regions 2 and 3. 

\begin{theorem}\label{prop: 2 class first derivative non-negative}
      Assume $\sigma^2_1>\delta$. Then, $\pd{v_1}{\delta}>0$ wherever it exists for \(\delta \in (\tilde{\delta}, T)\). 
  \end{theorem}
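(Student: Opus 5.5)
The plan is to use the explicit derivative formulas of Proposition \ref{thm: 2 class scheduling v_n der wrt delta} and check the sign of each of the three branches separately for \(\delta\in(\tilde\delta,T)\). Since the statement concerns only points where \(\pd{v_1}{\delta}\) exists, every such \(\delta\) either lies in a regular subregion, where Theorem \ref{thm:delta_der_value_function} yields the first two branches, or lies in an endpoint-clearing interval, which gives the third branch. The hypothesis \(\sigma_1^2>\delta\) rules out the degenerate situation in which class \(2\) empties in the first period and the value function stops depending on \(\delta\).

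\emph{Endpoint-clearing branch.} When \(\sigma_1^1=\delta\), the derivative is \(x_1^1(h^1\mu^1-h^2\mu^2)/(2\mu^1)\). This is strictly positive because \(x_1^1>0\) (otherwise \(\tilde\delta=0\) and the case is vacuous, or class \(1\) starts empty and never clears at an endpoint) and because \(h^1\mu^1>h^2\mu^2\).

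\emph{First branch.} Here the derivative is \(h^2\mu^2(T-\delta)(u_1^{1,*}-\lambda^1/\mu^1)\), and \(T-\delta>0\). It remains to show \(u_1^{1,*}>\lambda^1/\mu^1\). Since \(\delta>\tilde\delta\), class \(1\) can be cleared in period \(1\), so \(m=1\). Theorem \ref{thm:K_class optimal policy} then gives \(u_1^{1,*}\ge\bar u_1^1=x_1^1/(\delta\mu^1)+\lambda^1/\mu^1\), unless the minimum in \eqref{eq: k class op inequality} is \(1\), in which case \(u_1^{1,*}=1>\lambda^1/\mu^1\) because \(\mu^1>\lambda^1\). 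With \(x_1^1>0\), the first alternative is strictly larger than \(\lambda^1/\mu^1\) as well.

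\emph{Second branch.} Here the derivative is \(h^2x_2^2\bigl(1+(\lambda^2-\mu^2u_1^{2,*})/(\mu^2u_2^{2,*}-\lambda^2)\bigr)\). The condition \(\sigma_1^2>\delta\) gives \(x_2^2>0\). Because class \(2\) clears in period \(2\), we have \(\mu^2u_2^{2,*}-\lambda^2>0\). Combining the two terms over this common denominator reduces the claim to \(\mu^2(u_2^{2,*}-u_1^{2,*})>0\), that is, \(u_1^{2,*}<u_2^{2,*}\). Class \(1\) starts period \(2\) empty, so by the continuation discussion \(u_2^{2,*}=1-\lambda^1/\mu^1\). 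The inequality \(u_1^{2,*}<1-\lambda^1/\mu^1\) is then equivalent to \(u_1^{1,*}>\lambda^1/\mu^1\), which was already established in the first branch.

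The main obstacle is checking that the case split in Proposition \ref{thm: 2 class scheduling v_n der wrt delta} covers every differentiability point, so that no other configuration appears (in particular the period-2 behaviour when \(n=2\) with a partial last period). A second concern is making the strictness rigorous when \(x_1^1>0\) is used. I would handle the possible degenerate case \(x_1^1=0\) separately, noting that then \(\tilde\delta=0\) and class \(1\) is maintained empty. In that case the selection convention and Theorem \ref{thm:K_class optimal policy} still give \(u_1^{1,*}\ge\lambda^1/\mu^1\), and strictness must come from showing that the sign expression vanishes only on a set where the derivative formula does not apply. If that fails, the statement would have to be read under \(x_1^1>0\).
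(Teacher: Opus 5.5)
Your proposal is correct and follows the paper's proof essentially step for step. It uses the same three branches of Proposition \ref{thm: 2 class scheduling v_n der wrt delta}, obtains the key inequality $u_1^{1,*}>\lambda^1/\mu^1$ from Theorem \ref{thm:K_class optimal policy} with $\delta>\tilde\delta$, and reduces the second branch to $u_1^{2,*}<u_2^{2,*}=1-\lambda^1/\mu^1$, exactly as the paper does.

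Your caveat about $x_1^1>0$ is well founded. The paper needs it too, silently, when it passes from $u_1^{1,*}-\lambda^1/\mu^1\ge x_1^1/(\delta\mu^1)\ge 0$ to strict positivity. For $x_1^1=0$ the constant maintenance allocation attains the continuous-time value for every $\delta$, so $v_1$ is flat in $\delta$ and your proposed rescue cannot work. The statement should therefore be read under $x_1^1>0$.
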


We next turn to the curvature. Unlike the first derivative, the second derivative depends on the local behavior of the optimal allocation. In particular, \(d u_1^{2,*}(\delta)/d\delta\) need not agree across the Region 2--Region 3 cutoff, and the continuation value changes form when class \(2\) clears exactly at the terminal time. The next result gives the second derivative away from these boundaries.

  \begin{proposition}\label{thm: 2 class scheduling v_n second der wrt delta}
    For \(\delta>\tilde{\delta}\) with \(\delta\ne\hat{\delta}\),
    assume that $\sigma^2_1>\delta$ and \(\sigma_2^2\ne T-\delta\). Then
    the second derivative of \(v_1\) with respect to \(\delta\) is given by,
  \begin{equation}
  \begin{aligned}
      \pdd{}{\delta}v_1
      &=
      \begin{cases}
          h^2 \mu^2 \left((\delta-T)\frac{d u_1^{2,*}(\delta)}{d\delta}
          -\left(u_1^{1,*}-\dfrac{\lambda^1}{\mu^1}\right)\right),
          & \sigma_1^1 \ne \delta,\ \sigma_2^2>T-\delta, \\
          \begin{aligned}
          h^2\Bigg[&
          (\lambda^2-\mu^2 u_1^{2,*})
          \left( 1+\dfrac{\lambda^2-\mu^2 u_1^{2,*}}{\mu^2 u_2^{2,*}-\lambda^2} \right) \\
          &-\frac{d u_1^{2,*}(\delta)}{d\delta}\mu^2
          \left( \delta+\dfrac{x_2^2+\delta(\lambda^2-\mu^2 u_1^{2,*})}
          {\mu^2 u_2^{2,*}-\lambda^2} \right)
          \Bigg],
          \end{aligned}
          & \sigma_1^1 \ne \delta,\ \sigma_2^2<T-\delta, \\
          0,
          & \sigma_1^1=\delta .
      \end{cases}
  \end{aligned}
  \end{equation}
  \end{proposition}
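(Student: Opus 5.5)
The plan is to differentiate the first-derivative formulas of Proposition \ref{thm: 2 class scheduling v_n der wrt delta} once more in $\delta$, case by case, on each smooth piece. Away from the excluded boundaries $\delta=\hat\delta$ and $\sigma_2^2=T-\delta$, the combinatorial structure of the optimal trajectory is locally constant. This means the set of periods in which class $1$ is positive, whether class $2$ clears before $T$, and whether first-period clearing occurs at the endpoint all stay fixed. On such a piece, Proposition \ref{prop:regular_point_smoothness} gives a $C^1$ optimizer $u_1^{*}(\delta)$ in the regular subcases $\sigma_1^1<\delta$. In the endpoint-clearing subcase, the optimizer is given explicitly by \eqref{eq:two_class_endpoint_allocation}. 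Each first-derivative expression is therefore a differentiable function of $\delta$, and it suffices to compute its total derivative. This is cleaner than specializing Theorem \ref{thm:delta_second_der_value_function} term by term, although that theorem together with Lemmas \ref{lem:kkt_cancellation_regular_regime} and \ref{lem:kkt_differentiated_stationarity} would serve as a cross-check.

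\textbf{Case $\sigma_1^1=\delta$.} By Proposition \ref{thm: 2 class scheduling v_n der wrt delta}, the first derivative equals the constant $x_1^1(h^1\mu^1-h^2\mu^2)/(2\mu^1)$ on an open interval of review lengths, so the second derivative is $0$.

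\textbf{Case $\sigma_1^1\neq\delta$, $\sigma_2^2>T-\delta$.} Start from $h^2\mu^2(T-\delta)(u_1^{1,*}-\lambda^1/\mu^1)$ and use the product rule. The factor $T-\delta$ contributes $-h^2\mu^2(u_1^{1,*}-\lambda^1/\mu^1)$. The allocation factor contributes $h^2\mu^2(T-\delta)\,du_1^{1,*}/d\delta$. Since $u_1^{1,*}=1-u_1^{2,*}$, this second term equals $h^2\mu^2(\delta-T)\,du_1^{2,*}/d\delta$, which matches the stated formula. In Region 2 we have $du_1^{2,*}/d\delta=0$, and the formula remains valid there.

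\textbf{Case $\sigma_1^1\neq\delta$, $\sigma_2^2<T-\delta$.} Write the first derivative as $h^2x_2^2\,\Phi$, where
\[
\Phi:=1+\frac{\lambda^2-\mu^2u_1^{2,*}}{\mu^2u_2^{2,*}-\lambda^2}.
\]
The state $x_2^2=x_1^2+\delta(\lambda^2-\mu^2u_1^{2,*})$ depends on $\delta$ both directly and through $u_1^{2,*}$. The second-period allocation $u_2^{2,*}$ equals $1-\lambda^1/\mu^1$, because class 1 has cleared and is maintained empty, so it does not depend on $\delta$. Differentiating $x_2^2$ gives $\lambda^2-\mu^2u_1^{2,*}-\delta\mu^2\,du_1^{2,*}/d\delta$. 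Differentiating $\Phi$ gives $-\mu^2(du_1^{2,*}/d\delta)/(\mu^2u_2^{2,*}-\lambda^2)$. Collecting the terms multiplying $du_1^{2,*}/d\delta$ produces $-\mu^2\bigl(\delta\Phi+x_2^2/(\mu^2u_2^{2,*}-\lambda^2)\bigr)$.

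The main obstacle is reconciling this with the paper's displayed grouping. The stated formula has $\delta+\bigl(x_2^2+\delta(\lambda^2-\mu^2u_1^{2,*})\bigr)/(\mu^2u_2^{2,*}-\lambda^2)$. To match it, I expand $\delta\Phi$ and combine it with $x_2^2/(\mu^2u_2^{2,*}-\lambda^2)$. In doing so I must check carefully whether $x_2^2$ in the display denotes the class-2 state at the start of period 2 or the initial state $x_1^2$, and I must track signs. A secondary issue is justifying that $u_1^{2,*}$ is differentiable in the interior of Region 3. For this I will invoke the no-cutoff-tie condition, which holds for $\delta\neq\hat\delta$, together with Proposition \ref{prop:regular_point_smoothness}. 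Excluding $\sigma_2^2=T-\delta$ guarantees that the continuation value keeps a single functional form nearby.
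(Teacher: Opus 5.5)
Your proposal is correct, but it organizes the computation differently from the paper.

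\textbf{Comparison of routes.} The paper computes the second and mixed partials of $g$, $\vecf$ and $v_2$ on each piece. Examples are $\partial^2 g/\partial\delta\,\partial\vecu_1=(0,-h^2\delta\mu^2)^\intercal$ and, in the non-clearing case, $\partial^2 v_2/\partial\vecx_2\,\partial\delta=(0,-h^2)^\intercal$ and $\partial^2 v_2/\partial\delta^2=h^2(\lambda^2-\mu^2u_2^{2,*})$. It substitutes these into the general formula of Theorem \ref{thm:delta_second_der_value_function} and collects terms.

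You instead differentiate the already simplified first derivatives of Proposition \ref{thm: 2 class scheduling v_n der wrt delta}. This is how the paper itself handles the $K$-class analogue in Proposition \ref{thm: K class scheduling first der}. Theorem \ref{thm:delta_second_der_value_function} is itself obtained by differentiating the envelope identity, so the two routes are equivalent. The paper's route presents the two-class result as an instance of the general sensitivity formula. Yours avoids all mixed partials and needs only the first-derivative identity on a neighborhood, plus differentiability of $u_1^{2,*}(\delta)$.

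\textbf{Differentiability of $u_1^{2,*}$.} You can bypass the regular-point apparatus here. Since $u_i^{2,*}=1-\lambda^1/\mu^1$ for all $i\ge 2$, $v_2$ depends only on $x_2^2$ and $T-\delta$. On the interior-clearing branch, the one-dimensional first-period objective is $C^2$ and strictly convex in $u$. The implicit function theorem applied to its first-order condition is therefore enough.

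\textbf{Your ``main obstacle''.} This step is immediate. $x_2^2$ denotes the class-$2$ state at the start of period $2$, so
$\delta\Phi+x_2^2/(\mu^2u_2^{2,*}-\lambda^2)=\delta+\bigl(x_2^2+\delta(\lambda^2-\mu^2u_1^{2,*})\bigr)/(\mu^2u_2^{2,*}-\lambda^2)$.
The leftover term $h^2(\lambda^2-\mu^2u_1^{2,*})\Phi$ is the first line of the display.

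\textbf{A caveat you share with the paper.} The paper's proof also restricts to interior points. Your claim that the trajectory structure is locally constant away from $\hat\delta$ and $\sigma_2^2=T-\delta$ fails at the right end $\delta_E$ of the endpoint-clearing interval $(\tilde\delta,\delta_E]$. At $\delta_E$ we still have $\sigma_1^1=\delta_E$, and the left second derivative is $0$. The right second derivative, however, is the limit of Region-3 values. In the non-clearing case this limit is at most $-h^2\mu^2x_1^1/(\mu^1\delta_E)<0$, because $(T-\delta)\,du_1^{2,*}/d\delta\ge0$ and $u_1^{1,*}-\lambda^1/\mu^1\to x_1^1/(\mu^1\delta_E)$. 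So the third case of the statement holds only on the open interval $(\tilde\delta,\delta_E)$. That is exactly what your open-interval argument correctly delivers.
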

The cases in Proposition \ref{thm: 2 class scheduling v_n second der wrt delta} are the same as those in \ref{thm: 2 class scheduling v_n der wrt delta}. The boundary \(\hat{\delta}\) is excluded because \(u_1^{2,*}\) changes from zero in Region \(2\) to positive in Region \(3\), and the optimizer mapping need not be differentiable there. The boundary \(\sigma_2^2=T-\delta\) is the switch between the first two clearing cases.
  
\subsection{Region 2}\label{section: region two}
This region is defined by \(\delta \in \left(\tilde{\delta}, \hat{\delta}\right]\). Here, the optimal allocation during the first period assigns full capacity to class \(1\), regardless of any idleness incurred in class \(1\) after it clears. Thus \(u_1^{1,*}=1\), \(u_1^{2,*}=0\), and \(d u_1^{2,*}(\delta)/d\delta=0\) at interior points of Region 2. We note that \(\delta=\sigma_1^1\) is not possible in this region: since \(\delta>\tilde{\delta}\) and \(u_1^{1,*}=1\), class \(1\) clears strictly before the end of period \(1\).
  
Region 2 does not always exist. It requires \(h^1\mu^1\) to be sufficiently larger than \(h^2\mu^2\), because assigning full capacity to class \(1\) in the first period can cause class \(1\) to idle after it clears. In the following proposition, we show that on intervals in Region 2 where class \(2\) clears after period \(1\) and strictly before \(T\), \(v_1\) is strictly convex with respect to \(\delta\). On intervals in Region 2 where class \(2\) does not clear by \(T\), \(v_1\) is strictly concave with respect to \(\delta\).

\begin{theorem}\label{prop: region 2 second der}
      For \(\tilde{\delta}<\delta<\hat{\delta}\), assume $\sigma^1_2>\delta$ and \(\sigma_2^2\ne T-\delta\). We have
     \begin{equation}
      \pdd{}{\delta}v_1 = 
      \begin{cases}
                - h^2 \mu^2  \left(1-\frac{\lambda^1}{\mu^1} \right) < 0 , & \sigma_2^2>T-\delta, \\
             h^2 \lambda^2 \left(1+\frac{\lambda^2}{\mu^2 u_2^{2,*}-\lambda^2} \right)>0 , & \sigma_2^2<T-\delta.
      \end{cases}    
  \end{equation}
  \end{theorem}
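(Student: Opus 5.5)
The argument specializes Proposition \ref{thm: 2 class scheduling v_n second der wrt delta} to the interior of Region 2 and then checks the signs of the resulting expressions. For \(\tilde{\delta}<\delta<\hat{\delta}\), the first-period allocation is \(u_1^{1,*}=1\), \(u_1^{2,*}=0\) on an open interval. This holds because \(\hat\delta\) is defined as a supremum, and \(u_1^{1,*}(\delta)=1\) persists on the interior by the cutoff interpretation of \(\hat\delta\) as the point where the class-2 nonnegativity constraint first becomes inactive. Hence \(d u_1^{2,*}(\delta)/d\delta=0\) at interior points. As noted at the start of Region 2, the case \(\sigma_1^1=\delta\) cannot occur, since full capacity with \(\delta>\tilde\delta\) clears class 1 strictly before the end of the period. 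So only the first two cases of Proposition \ref{thm: 2 class scheduling v_n second der wrt delta} remain. The hypotheses \(\delta\neq\hat\delta\) and \(\sigma_2^2\ne T-\delta\) are assumed, so that proposition applies.

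\textbf{Case \(\sigma_2^2>T-\delta\).} Substitute \(du_1^{2,*}/d\delta=0\) and \(u_1^{1,*}=1\) into the first case. This gives \(-h^2\mu^2(1-\lambda^1/\mu^1)\), which is strictly negative because \(\mu^1>\lambda^1\), as assumed throughout the section.

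\textbf{Case \(\sigma_2^2<T-\delta\).} Substitute \(u_1^{2,*}=0\) and \(du_1^{2,*}/d\delta=0\) into the second case. The derivative term drops out, leaving \(h^2\lambda^2\bigl(1+\lambda^2/(\mu^2u_2^{2,*}-\lambda^2)\bigr)\). For positivity, note that class 2 starts period 2 with positive fluid, since \(\sigma_1^2>\delta\). It then clears at a finite time \(\sigma_2^2<T-\delta\). By \eqref{eq:sigma_cleartime} this forces \(\mu^2u_2^{2,*}-\lambda^2>0\). Then both summands inside the bracket are positive, and so is the whole expression, provided \(\lambda^2>0\). If \(\lambda^2=0\) the expression is zero; the strict claim implicitly assumes positive arrival rates.

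\textbf{Main obstacle.} The only delicate step is justifying that \(du_1^{2,*}/d\delta=0\) holds and that the second-derivative formula of Proposition \ref{thm: 2 class scheduling v_n second der wrt delta} is valid on the Region 2 interior. This requires \(u_1^{1,*}=1\) on a neighborhood of \(\delta\), not merely at \(\delta\). I would argue this from the KKT characterization. At an interior point of Region 2, the class-2 multiplier should be strictly positive, giving strict complementary slackness. That yields a locally constant active set by Proposition \ref{prop:regular_point_smoothness}. Alternatively, I would invoke the paper's description of \(\hat\delta\) as a single constraint-activity cutoff. In the degenerate case where the multiplier happens to vanish at an interior point, I would fall back on this monotone-cutoff structure to keep the allocation fixed.
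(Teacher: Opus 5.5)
Your proposal is correct and follows the paper's own argument. The paper obtains this theorem by substituting \(u_1^{1,*}=1\), \(u_1^{2,*}=0\), and \(du_1^{2,*}(\delta)/d\delta=0\) into Proposition~\ref{thm: 2 class scheduling v_n second der wrt delta}, and your sign checks, using \(\mu^1>\lambda^1\) and \(\mu^2u_2^{2,*}>\lambda^2\), match. The paper takes \(u_1^{1,*}=1\) on all of Region 2 as given. To close your ``main obstacle'' rigorously, note that the left derivative in \(u\) of the first-period objective at \(u=1\) is \(h^2\mu^2\delta(\tau-\delta/2)-h^1\mu^1\tilde{\delta}^2/2\), where \(\tau\) is the absolute time at which class~2 clears under \(u=1\), capped at \(T\). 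This quantity is strictly increasing in \(\delta\), so \(\{\delta>\tilde{\delta}: u_1^{1,*}(\delta)=1\}\) is an interval and the class-2 multiplier is strictly positive for \(\delta<\hat{\delta}\), with no degenerate case.
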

   Theorem \ref{prop: region 2 second der} follows immediately from Proposition \ref{thm: 2 class scheduling v_n second der wrt delta}, plugging in \(u_1^{1,*}=1\), \(u_1^{2,*}=0\), and \(d u_1^{2,*}(\delta)/d\delta=0\). The endpoint \(\hat\delta\) is excluded from the second-derivative statement because it is a cutoff-tie point, and \(\sigma_2^2=T-\delta\) is excluded because it is the boundary between the two clearing cases.

\begin{figure}
    \FIGURE{
    \includegraphics[scale=0.55]{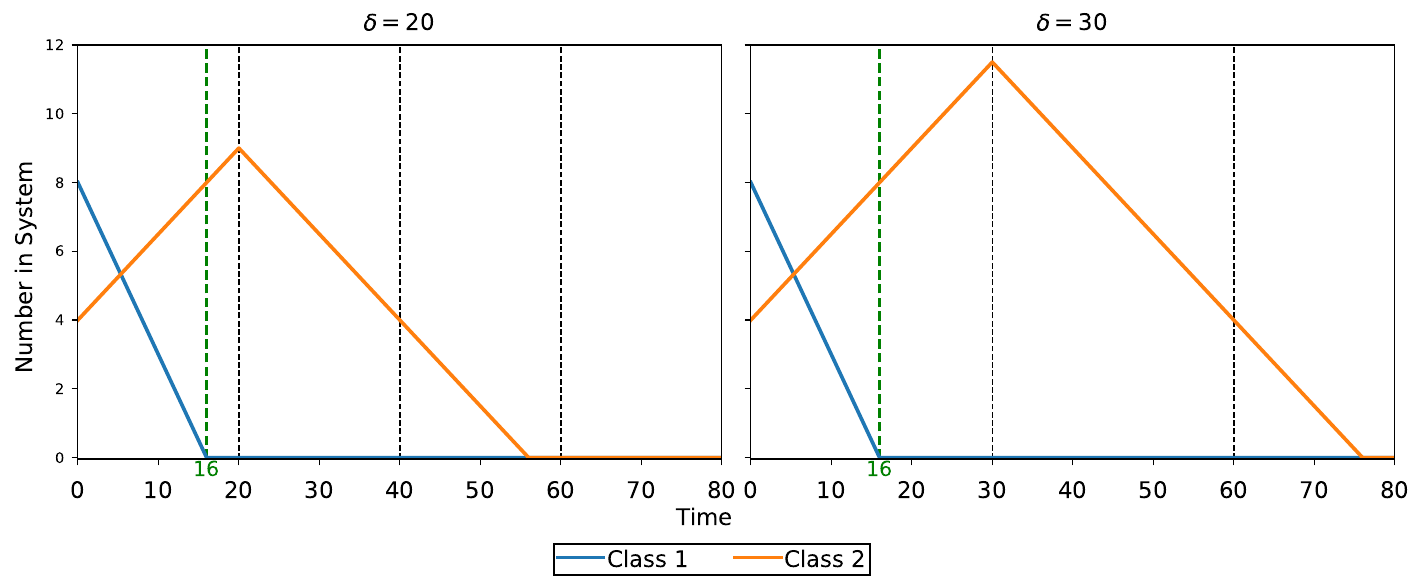}}
    {Comparison of optimal trajectories for two $\delta$ values in Region 2. \label{fig: region 2 eg optimal traj}}
      { The black dotted lines represent decision epochs where capacity can be reassigned, while the green dotted line and number indicate the time when class \(1\) empties out. The system parameters are the same as Figure \ref{fig: three regions overview}.}
\end{figure}

\subsection{Region 3}\label{section: region three}
This region is defined by \(\delta \in \left(\hat{\delta}, T \right]\). Here, the optimal allocation during the first period assigns less than full capacity to class \(1\), and hence class \(2\) receives positive capacity. First, we show that on intervals where class \(1\) clears at the end of the first period, \(\sigma_1^1=\delta\), \(v_1\) is linear with respect to \(\delta\), independent of whether class \(2\) clears by \(T\). 
  \begin{theorem}\label{prop: Region 3 Linearity}
      On any interval contained in \(\{\delta:\delta>\hat{\delta},\ \sigma_1^1=\delta\}\), \(v_1\) is linear in \(\delta\). 
  \end{theorem}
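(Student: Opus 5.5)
On any interval $J\subset\{\delta:\delta>\hat\delta,\ \sigma_1^1=\delta\}$ I will show that $v_1$ has a constant derivative on $J$, and conclude linearity from that.

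\textbf{Step 1: the optimizer on $J$.} Endpoint clearing pins down the first-period allocation through \eqref{eq:two_class_endpoint_allocation}: $u_1^{1,*}(\delta)=\bar u_1^1(\delta)=x_1^1/(\delta\mu^1)+\lambda^1/\mu^1$ and $u_1^{2,*}(\delta)=1-u_1^{1,*}(\delta)$. Since $\delta>\hat\delta$, we have $u_1^{1,*}<1$, so class $2$ receives positive capacity. This allocation is a smooth function of $\delta$ on $J$, with derivative \eqref{eq:two_class_endpoint_allocation_derivative}. From period $2$ onward class $1$ starts empty. Under the selection convention and the continuation structure following Theorem~\ref{thm:K_class optimal policy}, the allocation in every later period is $(\lambda^1/\mu^1,1-\lambda^1/\mu^1)$. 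Class $1$ then stays at zero, and class $2$ is served at the constant rate $\mu^2(1-\lambda^1/\mu^1)$ on $[\delta,T]$. Period boundaries after the first therefore do not affect the trajectory.

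\textbf{Step 2: explicit cost.} I write $v_1$ as a function of $\delta$ alone. The class-$1$ cost is $h^1 (x_1^1)^2/(2(\mu^1u_1^{1,*}-\lambda^1))=h^1x_1^1\delta/2$, because $\mu^1u_1^{1,*}-\lambda^1=x_1^1/\delta$. For class $2$, the drift during period $1$ is $\lambda^2-\mu^2+\mu^2u_1^{1,*}=\lambda^2-\mu^2+\mu^2\lambda^1/\mu^1+\mu^2x_1^1/(\mu^1\delta)$. The class-$2$ trajectory is then piecewise linear: on $[0,\delta]$ its slope is $a+\mu^2x_1^1/(\mu^1\delta)$, and on $[\delta,T]$ its slope is $a$, where $a:=\lambda^2-\mu^2(1-\lambda^1/\mu^1)$. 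The key computation is $x_2^2=x_1^2+a\delta+\mu^2x_1^1/\mu^1$. This is affine in $\delta$, and afterwards the trajectory follows $x_1^2+\mu^2x_1^1/\mu^1+as$ for $s\ge\delta$. So for $s\ge\delta$ the class-$2$ path coincides with the fixed line $\ell(s):=x_1^2+\mu^2x_1^1/\mu^1+as$, whose clearing time does not depend on $\delta$. On $[0,\delta]$ the path is the chord from $x_1^2$ to $\ell(\delta)$, with area $\delta(x_1^2+\ell(\delta))/2$.

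\textbf{Step 3: linearity.} The total class-$2$ cost is $h^2\bigl[\delta(x_1^2+\ell(\delta))/2+\int_\delta^{T\wedge\tau}\ell(s)\,ds\bigr]$, where $\tau$ is the fixed clearing time of $\ell$ (assuming $\sigma_1^2>\delta$, so $\tau>\delta$). I differentiate in $\delta$. The first term gives $(x_1^2+\ell(\delta))/2+\delta a/2$, and the second gives $-\ell(\delta)$. The sum is $(x_1^2-\ell(\delta)+a\delta)/2=-\mu^2x_1^1/(2\mu^1)$, a constant. Adding $h^1x_1^1/2$ gives $x_1^1(h^1\mu^1-h^2\mu^2)/(2\mu^1)$, matching the third case of Proposition~\ref{thm: 2 class scheduling v_n der wrt delta}. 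Because the derivative is the same constant everywhere on $J$, $v_1$ is affine there, regardless of whether class $2$ clears by $T$.

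\textbf{Main obstacle.} The hard step is Step 1: justifying that the optimal trajectory on $J$ really has this form and varies continuously, so that no switching of regimes happens inside $J$. The envelope theorem (Theorem~\ref{thm:delta_der_value_function}) cannot be used here because endpoint clearing is non-regular. I will instead argue directly. The set $\{\sigma_1^1=\delta\}$ fixes $u_1^{1,*}$, Theorem~\ref{thm:K_class optimal policy} forces the later allocations, and the direct computation above is valid at every point of $J$ by definition of $J$. If $\sigma_1^2\le\delta$ occurred somewhere, the class-$2$ cost would become $\delta$-independent in the same way, so it would not break the argument.
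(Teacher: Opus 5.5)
Your proof is correct and takes the paper's route. On the endpoint-clearing set you pin the first-period optimizer to $\bar u_1^1(\delta)$, use the maintenance allocation from period $2$ on, and differentiate the explicit cost to get the constant $x_1^1(h^1\mu^1-h^2\mu^2)/(2\mu^1)$, which is the third case of Proposition~\ref{thm: 2 class scheduling v_n der wrt delta}. Your $\delta$-independent line $\ell$ is the paper's observation that $\tau_2$ does not depend on $\delta$, and it lets you handle the two class-$2$ clearing cases at once rather than separately.

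One correction to your closing aside concerns the case $\sigma_1^2\le\delta$. There the class-$2$ cost is $h^2(x_1^2)^2/\bigl(2(\mu^2u_1^{2,*}-\lambda^2)\bigr)$ with $u_1^{2,*}=1-\bar u_1^1(\delta)$, so it is not $\delta$-independent. The case is harmless only because it cannot occur on this set. The right derivative of the first-period objective in $u^1$ at $\bar u_1^1$ would be $\bigl(h^2\mu^2(\sigma_1^2)^2-h^1\mu^1\delta^2\bigr)/2<0$, so endpoint clearing would not be optimal. The paper also restricts to $\sigma_1^2>\delta$ in any case.
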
 Equivalently, \(\pdd{}{\delta}v_1=0\) at every point in such an interval. Theorem \ref{prop: Region 3 Linearity} follows immediately from Proposition \ref{thm: 2 class scheduling v_n der wrt delta}, where \(\pd{}{\delta}v_1\) is constant for \(\delta=\sigma_1^1\). Clearing class \(1\) right at the end of the first period, even though it could clear earlier with more capacity, is only optimal if class \(2\) has a \(c\mu\) index close to that of class \(1\), making it suboptimal to incur class-\(1\) idleness mid-period. 
  Otherwise, if class \(1\) clears before the end of the first period, \(\sigma_1^1<\delta\), \(v_1\) is strictly concave with respect to \(\delta\) on intervals that do not cross the terminal-clearing boundary for class \(2\).

\begin{theorem}\label{prop: Region 3 concavity}
      Assume \(\delta>\hat{\delta}\), \(\sigma_1^1<\delta\), and \(\sigma_2^2\ne T-\delta\). We have \(\pdd{}{\delta}v_1<0\).
\end{theorem}
If \(\delta>\tilde{\delta}\) and \(u_1^{1,*}=1\), class \(1\) clears before the end of the first period. Thus, if there exists a local neighborhood of \(\delta\) values where it is optimal to clear class \(1\) at the end of the first period, then \(\tilde{\delta}=\hat{\delta}\) and Region 2 is empty. From theorems \ref{prop: region 2 second der} and \ref{prop: Region 3 Linearity}, \(v_1\) may either be convex or linear for some \(\delta\) values, but not both. It is further eventually concave for large enough $\delta$. 
We illustrate these results further numerically in Section \ref{section: fluid model experiments}.

\section{More than Two Classes}\label{section: more than two classes}

We next discuss the general \(K\)-class fluid scheduling problem. For \(K>2\), the value function can similarly be non-monotone in \(\delta\). %
For each \(k\in\{1,\ldots,K-1\}\) satisfying
\(\sum_{l=1}^k\lambda^l/\mu^l<1\), define
\(\tilde{\delta}^k\) as the period length at which classes
\(1,\ldots,k\) can all be cleared exactly at the end of the first
period using all available capacity, while classes \(k+1,\ldots,K\)
receive no capacity. Formally,
\begin{equation}\label{eq: tilde delta general def}
      \tilde{\delta}^k :=
      \frac{\sum_{l=1}^k x_1^l/\mu^l}
      {1-\sum_{l=1}^k\lambda^l/\mu^l}.
  \end{equation}
Note that \(\tilde{\delta}\) given by \eqref{def: delta large enough} for the two-class problem is a special case of \eqref{eq: tilde delta general def}, with \(k=1\). We further define
\begin{equation}\label{eq:exceptional_review_set}
    \mathcal D:=\left\{\frac{\tilde{\delta}^k}{q}: k=1,\dots,K-1,\ q\in\mathbb N\right\}.
\end{equation}
Thus, \(\delta\in\mathcal D\) are review periods for which some \(\tilde{\delta}^k\) is an integer multiple of \(\delta\). At such \(\delta\) values, the discrete review epochs can coincide with a class clearing time, and \(v_1\) may fail to be differentiable with respect to \(\delta\).

Figure \ref{fig: multiclass example} illustrates the value function \(v_1\) as a function of \(\delta\), for a \(K = 3\) class fluid scheduling problem with parameters \((\lambda^1, \lamda^2, \lambda^3) = (0.5, 0.15, 0.12)\), \((\mu^1, \mu^2, \mu^3) = (1, 1, 1)\), \((h^1, h^2, h^3) = (8, 6, 4)\), \((x_1^1, x_1^2, x_1^3 ) = (8, 7, 5)\), \(T = 100\). Although similar oscillations in the value function were observed in Region \(1\) of the \(2\)-class case (as discussed in Section \ref{section: region one} and in Figure \ref{fig: region 1 zoom}), the oscillations in Figure \ref{fig: multiclass example} differ in that the valleys are no longer equal, and the peaks no longer increase successively. From Figure \ref{fig: multiclass example} we also observe that \(v_1(\vecx_1, \delta)\) is not differentiable at the plotted values of \(\delta \in \mathcal D\).

\begin{figure}
    \FIGURE{
    \includegraphics[scale=0.6]{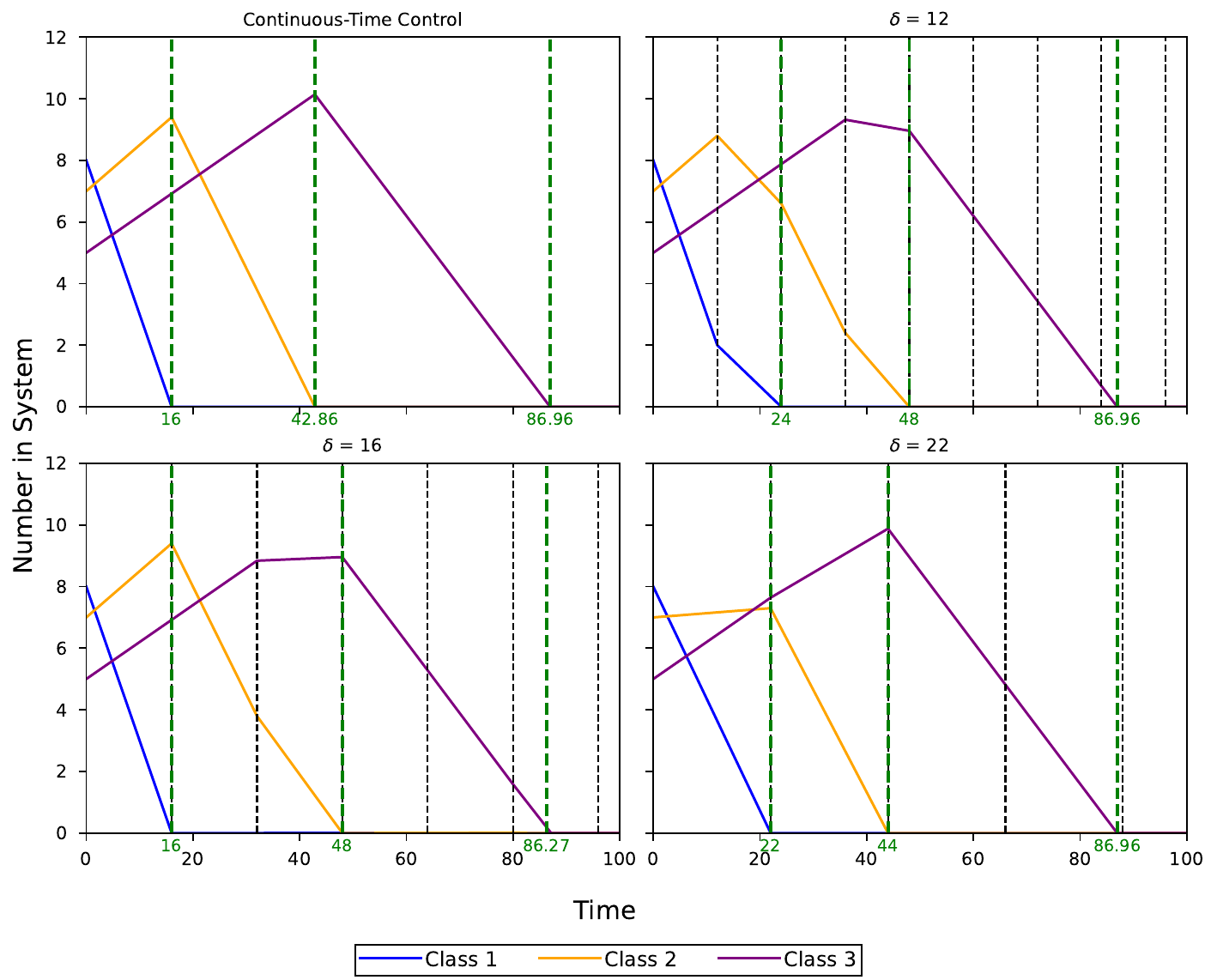}}
    {Optimal trajectories for various $\delta$ values for the three-class fluid scheduling problem. \label{fig:optimal traj grid 3 class}}
    { The black dotted lines represent decision epochs where capacity can be reassigned, while the green dotted lines and numbers indicate when each of the three classes clear out.  The system parameters are \((\lambda^1, \lamda^2, \lambda^3) = (0.5, 0.15, 0.12)\), \((\mu^1, \mu^2, \mu^3) = (1, 1, 1)\), \((h^1, h^2, h^3) = (8, 6, 4)\), \((x_1^1, x_1^2, x_1^3 ) = (8, 7, 5)\), \(T = 100\).}
\end{figure}

To gain insights, we next examine the optimal trajectories for the same example. Figure \ref{fig:optimal traj grid 3 class} depicts the optimal trajectories for certain \(\delta\) values. In the \(2\)-class problem, we observed that at \(\delta\) values that were divisors of \(\tilde{\delta}\), \(v_1\) was the same as the continuous control value function and the optimal trajectories perfectly matched the continuous-time control trajectories. In the general \(K\)-class problem, however, it is in not always possible to match these trajectories for non-zero \(\delta\) values.  We first observe the optimal trajectories in continuous time control in Figure \ref{fig:optimal traj grid 3 class} . In order to obtain such trajectories for a value of \(\delta >0\), we would need to be able to reassign capacity at times \(16\), \(42.86\) and \(86.96\). However, there is no \(\delta \in \R\) that divides all three numbers. Different \(\delta\) values do however allow us to get closer to the optimal trajectories of continuous control, resulting in the observed oscillations of \(v_1\).

\begin{figure}
    \FIGURE{
    \includegraphics[width=0.6\textwidth]{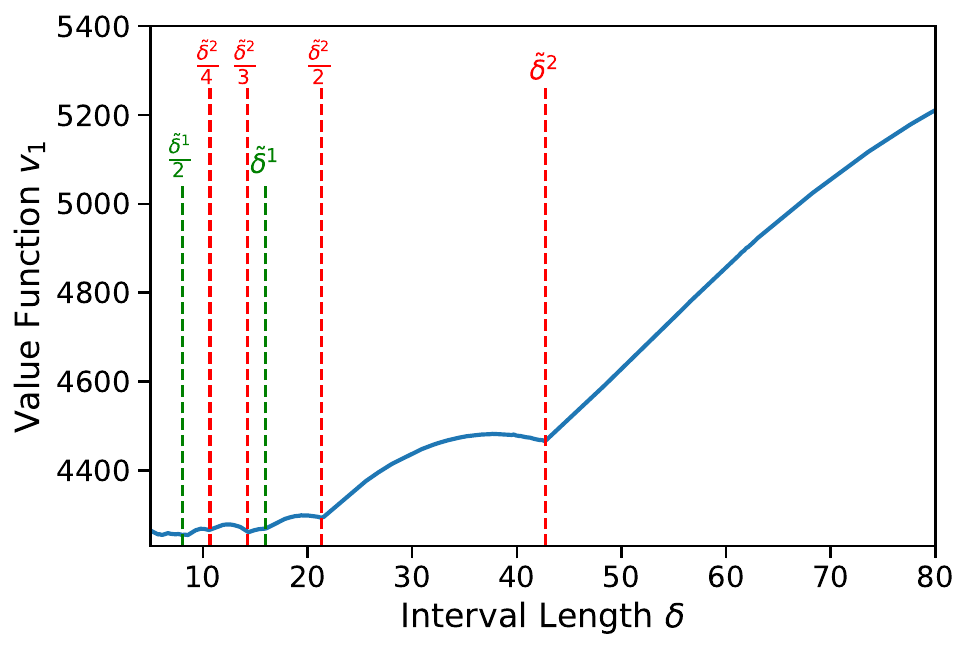}}
    {Value function \(v_1\) versus interval length \(\delta\) for the \(3\) class fluid scheduling problem. \label{fig: multiclass example}}
    { The green and red dotted lines show the first few divisors of \(\tilde{\delta}^1\) and \(\tilde{\delta}^2\) respectively; see \eqref{eq: tilde delta general def}. The system parameters are \((\lambda^1, \lamda^2, \lambda^3) = (0.5, 0.15, 0.12)\), \((\mu^1, \mu^2, \mu^3) = (1, 1, 1)\), \((h^1, h^2, h^3) = (8, 6, 4)\), \((x_1^1, x_1^2, x_1^3 ) = (8, 7, 5)\), \(T = 100\).}
\end{figure}

We consider the regime in which classes $1,\ldots,K-1$, clear strictly before the end of the first period while class $K$ remains positive. On every smooth interval in this regime, we show that $v_1$ is increasing and that the convex–concave structure of the two-class problem reappears. 

\begin{theorem}
\label{prop:K_class_eventual_shape}
Let \(\delta\) be a regular point and assume that
$\sigma_1^k<\delta$ for $k \in \{1,\dots,K-1\}$ and 
$\sigma_1^K>\delta$. Then,
\begin{enumerate}
\item[(i)]
\(\pd{}{\delta}v_1>0\).

\item[(ii)]
If \(\sigma_2^K>T-\delta\), then
\(\pdd{}{\delta}v_1<0\).

\item[(iii)]
Suppose \(\sigma_2^K<T-\delta\). If \(u_1^{K,*}=0\), then $\pdd{}{\delta}v_1>0$. If \(u_1^{K,*}>0\), then
\(\pdd{}{\delta}v_1<0\).
\end{enumerate}
\end{theorem}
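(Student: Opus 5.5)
The plan is to reduce the $K$-class problem in this regime to an effective two-class computation and then apply Theorems \ref{thm:delta_der_value_function} and \ref{thm:delta_second_der_value_function} directly.

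\textbf{Step 1: the continuation problem.} Classes $1,\ldots,K-1$ are empty at the start of period $2$. By Theorem \ref{thm:K_class optimal policy} and the maintained-empty reduction, an optimal continuation keeps them empty with $u_i^k=\lambda^k/\mu^k$ for $k<K$. The residual capacity $\bar\rho:=1-\sum_{k<K}\lambda^k/\mu^k$ then goes to class $K$ in every period $i\ge 2$. This continuation is stationary, so it can be implemented for any review length. Hence $v_2(\vecx_2,\delta)$ depends on $\delta$ only through the residual horizon $T-\delta$.

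Write $a:=\lambda^K-\mu^K\bar\rho$ and $y:=x_2^K$. Then
\[
v_2=h^K\bigl[(T-\delta)y+a(T-\delta)^2/2\bigr]\quad\text{if }\sigma_2^K>T-\delta,
\qquad
v_2=h^K y^2/(2|a|)\quad\text{if }\sigma_2^K<T-\delta .
\]
This gives $\partial v_2/\partial x_2^K$, $\partial v_2/\partial\delta$ and the second derivatives explicitly; all partials in $x_2^k$ for $k<K$ are irrelevant, because $\partial f^k/\partial\delta=0$ for clearing classes.

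\textbf{Step 2: first derivative.} Classes $k<K$ clear strictly inside period $1$. So their contribution to $g$ and $\vecf$ is independent of $\delta$, and only class $K$ enters the envelope formula \eqref{eq:delta_der_value_function}. Substituting $\partial g/\partial\delta=h^Ky$ and $\partial f^K/\partial\delta=\lambda^K-\mu^Ku_1^{K,*}$, I expect
\[
\pd{}{\delta}v_1=h^K\mu^K(T-\delta)(\bar\rho-u_1^{K,*})
\quad\text{in case (ii)},
\qquad
\pd{}{\delta}v_1=h^K y\,\frac{\mu^K(\bar\rho-u_1^{K,*})}{\mu^K\bar\rho-\lambda^K}
\quad\text{in case (iii)}.
\]
Positivity then follows from $u_1^{K,*}<\bar\rho$. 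This holds because $\sigma_1^k<\delta$ forces $u_1^k>\lambda^k/\mu^k$ for all $k<K$, and $\sum_k u_1^k=1$.

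\textbf{Step 3: second derivative.} I would differentiate the expressions of Step 2 in $\delta$, which is legitimate by Proposition \ref{prop:regular_point_smoothness}. The result contains $du_1^{K,*}/d\delta$.

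If $u_1^{K,*}=0$, then $K\in A_1^*$ and the active set is locally constant, so $du_1^{K,*}/d\delta=0$. Case (ii) then gives $-h^K\mu^K\bar\rho<0$. Case (iii) gives $h^K\lambda^K\mu^K\bar\rho/(\mu^K\bar\rho-\lambda^K)>0$, using $dy/d\delta=\lambda^K$.

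If $u_1^{K,*}>0$, case (ii) yields $-h^K\mu^K\bigl[(\bar\rho-u_1^{K,*})+(T-\delta)\,du_1^{K,*}/d\delta\bigr]$. This is negative once $du_1^{K,*}/d\delta\ge0$.

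\textbf{Step 4: sign of $du_1^{K,*}/d\delta$ (the main obstacle).} I would use Lemma \ref{lem:kkt_differentiated_stationarity} for the sign and magnitude of $du_1^{K,*}/d\delta$.

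For a clearing class $k<K$, the marginal value $\mathcal M_1^k$ depends only on $u_1^k$ (not on $\delta$) and is strictly decreasing in $u_1^k$ by convexity. For class $K$, $\mathcal M_1^K$ is strictly increasing in $\delta$: in case (ii) it equals $h^K\mu^K[\delta^2/2+\delta(T-\delta)]$ plus terms in $u_1^{K,*}$. All positive classes share the common value $-\beta$. A rise in $\mathcal M_1^K$ therefore raises $-\beta$, which shrinks each $u_1^k$ ($k<K$) and hence increases $u_1^{K,*}$. Solving the differentiated stationarity system gives $du_1^{K,*}/d\delta$ as a ratio with positive denominator (the sum of reciprocal curvatures).

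For case (iii) with $u_1^{K,*}>0$, I would substitute this explicit ratio into the differentiated case-(iii) formula. Mirroring the two-class computation in Proposition \ref{thm: 2 class scheduling v_n second der wrt delta}, I would then show that the negative term $-\mu^K(du_1^{K,*}/d\delta)\bigl(\delta+y/(\mu^K\bar\rho-\lambda^K)\bigr)$ dominates $(\lambda^K-\mu^Ku_1^{K,*})\mu^K(\bar\rho-u_1^{K,*})/(\mu^K\bar\rho-\lambda^K)$. This domination inequality is the step I expect to be hardest. I would first try to establish it using $\mathcal M_1^K=-\beta$ to eliminate $y$.
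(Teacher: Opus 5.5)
Your arguments for (i), for both subcases of (ii), and for (iii) with $u_1^{K,*}=0$ are correct. For (ii) with $u_1^{K,*}>0$ you take a valid route that differs from the paper's: the clearing classes' marginal values do not depend on $\delta$, while $\partial_\delta\mathcal M_1^K=h^K\mu^K(T-\delta)>0$, so $du_1^{K,*}/d\delta>0$. The paper instead treats all $u_1^{K,*}>0$ cases at once.

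The gap is (iii) with $u_1^{K,*}>0$, which you leave open. First, the term you want to dominate is mis-transcribed. Set $D:=\bar\rho-u_1^{K,*}$, $B:=\mu^K\bar\rho-\lambda^K$, $A:=\lambda^K-\mu^Ku_1^{K,*}$, and use $dy/d\delta=A-\delta\mu^K\,du_1^{K,*}/d\delta$. Differentiating $h^Ky\mu^KD/B$ then gives
\[
\frac{\partial^2 v_1}{\partial\delta^2}=\frac{h^K\mu^K}{B}\Bigl[AD-\frac{du_1^{K,*}}{d\delta}\,(y+\delta\mu^KD)\Bigr].
\]
So the factor is $\delta+(y+\delta A)/B$, not $\delta+y/B$. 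With your version the target inequality can actually fail when $\delta A$ is large relative to $y+\delta B$.

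More importantly, the elimination you propose, using only $\mathcal M_1^K=-\beta$, cannot give the domination. Let $c_k:=-\partial\mathcal M_1^k/\partial u_1^k$, $S:=\sum_{k<K}1/c_k$ and $p:=\partial_\delta\mathcal M_1^K$. Your Step 4 then gives $du_1^{K,*}/d\delta=Sp/(1+c_KS)$, which tends to $0$ as $S\to0$. In that limit $AD>0$ wins whenever $A>0$.

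The missing idea is to pin down $S$ using the stationarity conditions of the clearing classes themselves. Let $M:=-\beta$ and $y_k:=\mu^ku_1^{k,*}-\lambda^k$. The identity $\mathcal M_1^k=h^k\mu^k(x_1^k)^2/(2y_k^2)=M$ gives $c_k=h^k(\mu^k)^2(x_1^k)^2/y_k^3=2M\mu^k/y_k$, hence $S=D/(2M)$. Combine this with $c_K=h^K(\mu^K\delta)^2/B$, $p=h^K\mu^K(y+\delta\mu^KD)/B$ and $M=\mathcal M_1^K=h^K\mu^K\delta(\delta B+2y)/(2B)$. This yields
\[
\frac{du_1^{K,*}}{d\delta}=\frac{D\,(y+\delta\mu^KD)}{\delta\,(\delta B+2y+\delta\mu^KD)}.
\]
Since $\mu^KD=A+B$, the sign of $\partial^2v_1/\partial\delta^2$ is that of $A\delta(\delta B+2y+\delta\mu^KD)-(y+\delta\mu^KD)^2=-(y+\delta B)^2<0$.

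The paper reaches the same point by aggregation. It minimizes $\sum_{k<K}h^k(x_1^k)^2/(2y_k)$ subject to $\sum_{k<K}y_k/\mu^k=D$ by Cauchy--Schwarz, obtaining one effective class with cost $a/D$. It then reruns the two-class proof of Theorem \ref{prop: Region 3 concavity}, where the first-order condition eliminates $h^1\mu^1/(h^2\mu^2)$. That proof, not Proposition \ref{thm: 2 class scheduling v_n second der wrt delta} (which only supplies the formula), is the template you should mirror. Your identity $S=D/(2M)$ is exactly the disaggregated form of that step: the effective class has marginal value $a/D^2=M$ and curvature $2a/D^3=2M/D$.
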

The result implies that \(v_1\) is strictly increasing on every interval satisfying the assumptions of Proposition \ref{thm: K class scheduling first der}. On such an interval $v_1$ is strictly concave if class \(K\) does not clear by
\(T\); if class \(K\) clears by \(T\), it is strictly convex when class \(K\) receives no first-period capacity and strictly concave when it receives positive first-period capacity. In Appendix \ref{ap:4 class} we provide numerical examples of the value function \(v_1\) versus \(\delta\) for a four class problem, where we observe similar behavior in the monotonicity and differentiability of \(v_1\) with respect to \(\delta\).

\section{Numerical Study}\label{sec:numerics}
In this section, we conduct numerical studies to investigate (1) the impact of different parameters on the structure of the value function as $\delta$ varies, in particular its curvature in Regions 2 and 3; and (2) the generalization of the results for the original stochastic  scheduling problem of Section \ref{sec:prob}.

\subsection{Fluid Model Experiments}\label{section: fluid model experiments}
We investigate the value of more frequent control under different parameter regimes. To quantify this value, we compute the relative cost increase compared to continuous-time control, defined as, 
\begin{equation}
    \frac{v_1(\vecx_1, \delta) - v_1(\vecx_1, 0)}{v_1(\vecx_1 , 0)}.
\end{equation}
We consider a two-class system and a horizon of length \(T = 100\). All systems are initiated with the same initial condition \(\vecx_1 = (8,4)\). We vary the utilization \(\rho := \lambda^1/\mu^1+\lambda^2/\mu^2\); the load ratio \(r := (\lambda^1/\mu^1)/(\lambda^2/\mu^2)\); and the \(c\mu\)-index ratio \(\nu := (h^1\mu^1)/(h^2\mu^2)\). As we are particularly interested in studying the curvature of the value function in Regions 2 and 3, the horizon length was chosen to be long enough that the size of Region 1 across \(\delta\) was minimized and the queues to empty in cases with the lowest ratio between their \(c\mu\) indexes. 

Figure \ref{fig: fluid numerics 1} plots the value function as a function of \(\delta\) for \(\rho \in \{0.7, 0.9\}\), \(r \in \{0.5, 1, 2\}\), and \(\nu \in \{2,5,20\}\), assuming \(\mu^1=\mu^2=1\). Our main observation is that systems whose optimal policy clears class \(1\) as soon as possible, thereby incurring class-\(1\) idleness and class-\(2\) buildup, have greater cost increases as \(\delta\) increases. Systems where the optimal policy divides capacity between the two classes and avoids such idleness are less sensitive to \(\delta\).

In systems with larger \(\nu\) values, the optimal policy clears class \(1\) quickly because its \(c\mu\) index is much higher than that of class \(2\). This can create class-\(1\) idleness within a review period while class \(2\) builds up. The resulting loss of flexibility becomes more costly as \(\delta\) gets larger, producing larger overall costs than in systems with smaller \(\nu\) values. Consequently, systems with a lower value of \(\nu\) empty out the fastest, as their optimal policy involves less class-\(1\) idleness and less class-\(2\) buildup. In these examples, a graph ``flattening out'' to a straight line corresponds to a change in \(\delta\) having no effect on optimal cost, because the first period is long enough to empty the system. %

As predicted by Theorem \ref{prop: Region 3 concavity}, for large \(\delta\) values closer to the no-control system, \(v_1\) is concave with respect to \(\delta\). For \(\delta\) values sufficiently smaller than no control (but larger than \(\tilde{\delta}\), see Section \ref{section: region one}), \(v_1\) may be linear, strictly convex, or strictly concave with respect to \(\delta\), as described by Theorems \ref{prop: region 2 second der}, \ref{prop: Region 3 Linearity},  and \ref{prop: Region 3 concavity}. Systems with \(\nu = 2\) exhibit a region where \(v_1\) is linear in \(\delta\), while systems with \(\nu = 20\) exhibit a region where \(v_1\) is convex in \(\delta\) if class \(2\) can clear by \(T\). %
In these experiments, each system exhibits either a convex region or a linear region before eventually becoming concave, but not both.

By Theorem \ref{prop: region 2 second der}, a convex region is only possible in Region $2$ if both classes clear by time $T$. By definition of Region $2$, if class $1$ can be cleared in the first period, it must be assigned all allocation, regardless of any idling incurred. These are systems where the holding cost of class $1$ is much higher than class $2$ (i.e., large $c \mu$-index ratio $\nu$), and thus it is optimal to clear class $1$ as soon as possible. The linear regions are exhibited for systems where the holding costs incurred by classes $1$ and $2$ are comparable to each other (i.e. small $c \mu$-index ratio $\nu$). Unlike convexity, this linear behavior is seen regardless of whether both classes empty out in time $T$. In such systems, there exist $\delta$ values where, if class $1$ can be cleared in the first period, it is optimal to clear it at the end of the period. These are exactly the regions predicted by \eqref{eq:two_class_endpoint_allocation_derivative}. Indeed, for such a system, Region $2$ cannot exist. Each of the $\nu = 2$ systems in Figure \ref{fig: fluid numerics 1} display this linearity after the initial non-monotone Region $1$, before eventually becoming concave. The $\nu = 20$ systems with a load ratio of $\rho = 0.7$ (in which case both classes empty by time $T$) show convexity after the initial non-monotone region.

Among the three possible curvature patterns, concavity of \(v_1\) with respect to \(\delta\) corresponds to the slowest cost reduction as control becomes more frequent. A convex \(v_1\), on the other hand, corresponds to a faster-growing cost reduction as control becomes more frequent. This difference is apparent in the \(\rho = 0.7\) systems of Figure \ref{fig: fluid numerics 1}. Although the \(\nu = 20\) systems have the highest cost with no control, after sufficient increase of control the relative cost increase drops below that of \(\nu = 5\), even dropping below that of \(\nu = 2\) when \(r = 2\). 

Although systems with a higher \(\nu\) value usually have a higher relative cost increase with decreasing control, if the utilization \(\rho\) and load ratio \(r\) are sufficiently large, such as in the \((\rho, r) = (0.9, 2)\) system of Figure \ref{fig: fluid numerics 1}, the \(\nu = 20\) system has a lower cost increase relative to continuous control than the \(\nu = 5\) system. In such a system, although the optimal policy for \(\nu = 20\) creates more class-\(1\) idleness than the policy for \(\nu = 5\), the high utilization and load ratio mean that class \(1\) effects dominate. As a result, decreasing control has a smaller relative effect on the optimal cost. %

In the above experiments we assumed \(\mu^1=\mu^2\). The same qualitative observations also hold when the service rates are not equal; see Figure \ref{fig: fluid numerics 2} in Appendix \ref{ap:numerics}.

\begin{figure}
    \FIGURE{
    \includegraphics[scale=0.5]{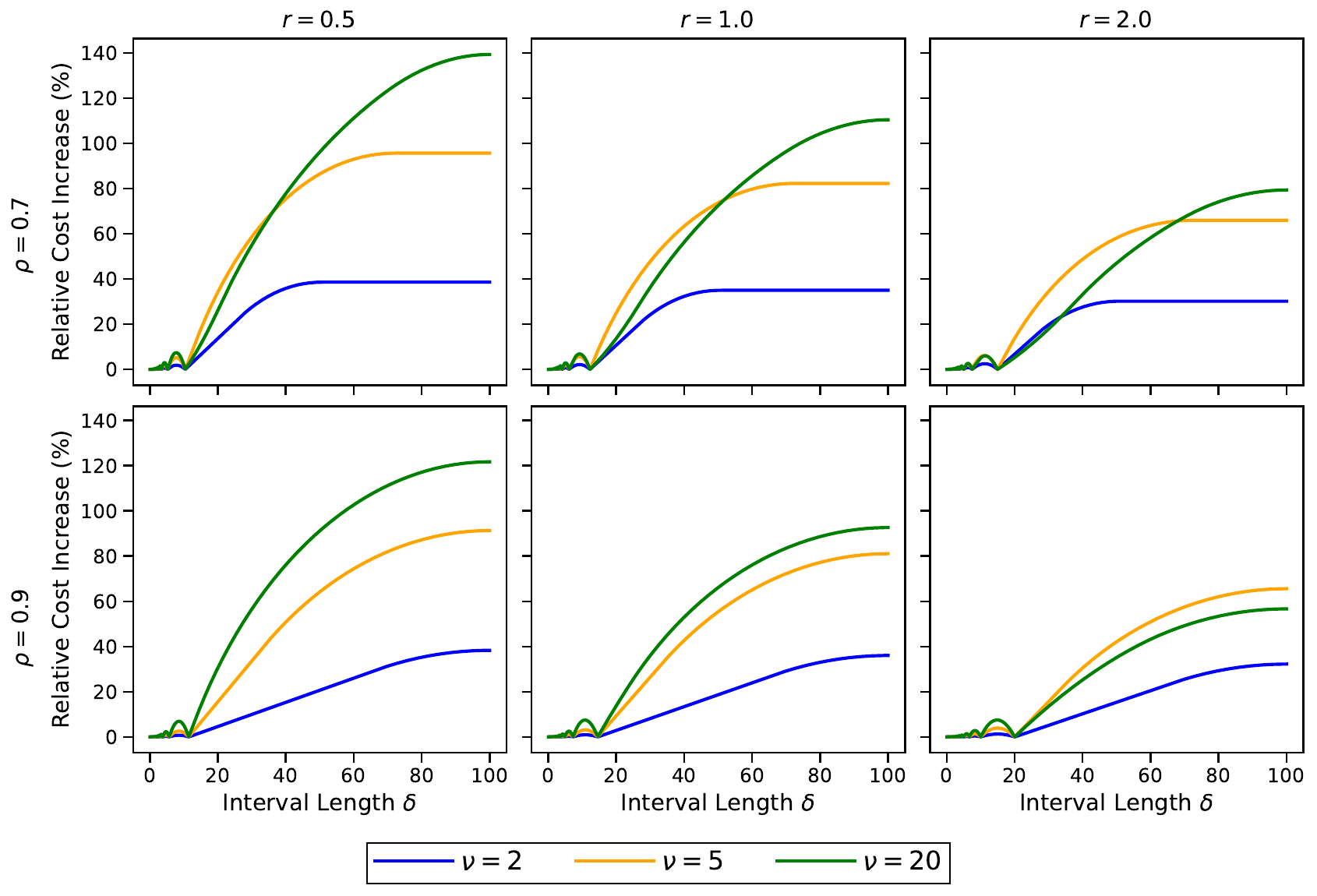}}
    {Relative cost increase compared to continuous-time control versus $\delta$, for various cost ratios $\nu$, load ratios $r$, and utilizations $\rho$. \label{fig: fluid numerics 1}}
    { We keep $\rho$ constant across columns and $r$ constant across rows. We fix $\mu^1 = \mu^2 = 1$, and \(\vecx_1 = (8,4)\) fixed for all systems. The system parameters that vary are as follows: For the $\rho = 0.7$ systems (from left to right) , $(\lamda^1, \lamda^2) = (0.23, 0.47), (0.35, 0.35), (0.47, 0.23)$. For the $\rho = 0.9$ systems, (from left to right), $(\lamda^1, \lamda^2) = (0.3, 0.6),(0.45, 0.45),  (0.6, 0.3)$}
\end{figure}

\subsection{Stochastic System Experiments}\label{sec: stochastic system experiments}
In this section, we numerically investigate the generalization of our results for the stochastic problem. %
To this end, we first formulate the stochastic problem as a discrete-time finite-horizon MDP with a bounded state-space. Recall that $\textbf{X}(t):=(X^1(t),\ldots, X^K(t))$ keep track of the number of customers of each class in system at time $t$. We truncate the state-space by introducing a maximum number-in-system $X_{\max}$ such that the arrivals are rejected once the total number of customers in system reaches $X_{\max}$. Hence, the state-space is given by, 
$
    \mathcal{S}=\{(X^1,X^2)\in \mathbb{Z}^+\times \mathbb{Z}^+;X^1+X^2\leq X_{\max}\}.
$ Let $\textbf{Z}:=(Z^1,Z^2)$ denote the server allocation vector corresponding to the number of servers allocated to each class.  The set of feasible server allocations is, 
$
    \mathcal{Z} = \{(Z^1,Z^2)\in \mathbb{Z}^+ \times \mathbb{Z}^+; Z^1+Z^2=N \}.
$
The value function in period \(i\) starting from state \(\textbf{X}\), when period lengths are equal to \(\delta\), then satisfies the optimality equations,
\begin{align}\label{eq:MDP}
     V_i(\textbf{X},\delta) = \min_{\textbf{Z}\in\mathcal{Z}}  \mathbb{E} \left[\int_0^{\Delta_i(\delta)} \textbf{h}^\intercal \textbf{X}(s)ds | \textbf{X}(0)=\textbf{X}\right] + \nonumber \\ 
     \sum_{\textbf{X}'\in\mathcal{S}} \mathbb{P}(\textbf{X}(\Delta_i(\delta))=\textbf{X}'|\textbf{X}(0)=\textbf{X}) V_{i+1}(\textbf{X}',\delta),
\end{align}
with terminal condition \(V_{n(\delta)+1}(\cdot,\delta)=0\).
Solving this MDP exactly is difficult, because the transition probabilities, as well as the expectation in \eqref{eq:MDP} rely on transient dynamics and hence cannot be obtained explicitly. Instead, we simulate sample paths of the queueing dynamics over each period and estimate the expectations. We then use backward induction and full enumeration of all feasible allocations to compute the optimal value function for different values of $\delta$.

 We examine a sequence of stochastic systems indexed by \(\eta\). For class \(k\) in the \(\eta\)-th system, the arrival and service rates satisfy \(\lambda_\eta^k := \lambda^k \eta\) and \(\mu_\eta^k := \mu^k \eta\), respectively, with initial conditions \((\eta X^1, \eta X^2)\). The scaled value function for the \(\eta\)-th system is defined as \(V_1/\eta\), where \(V_1\) is the value function for the \(\eta\)-th system. For a given sequence of systems, the ``base'' system refers to the system with \(\eta = 1\). The number of servers $N$ remains fixed as the scaling parameter $\eta$ increases. Thus, feasible server allocations remain restricted to fractions of $1/N$. The scaled stochastic control problem therefore need not converge to the continuous-allocation fluid problem. We interpret the experiments instead as a robustness check of whether the qualitative sensitivity patterns predicted by the fluid model persist in the stochastic system.

\textbf{Experiments}. We consider two sequences of systems and present the results for \(\eta \in \{1,5, 10\}\). Throughout all experiments, we keep the total number of servers fixed at \(N = 10\).  For each experiment, the service rates for the base system are \(\mu_1^1=\mu_1^2 = 0.1\). We select horizon lengths large enough that the queues empty with high probability during the horizon. To verify this, we compare the average queue lengths at the end of horizon with the initial conditions. The first base system has parameters \((\lambda^1, \lambda^2) = (0.35, 0.3)\), \((h^1 , h^2) = (3, 1)\), \((X^1, X^2) = (9, 1)\), \(T = 40\). The second base system has parameters \((\lambda^1, \lambda^2) = (0.35, 0.25)\), \((h^1 , h^2) = (20, 1)\), \((X^1, X^2) = (9, 1)\), \(T = 40\). 

In our experiments, we choose a large \(X_{\max}\) such that almost no arrivals are rejected. For the first system we set \(X_{\max} \in \{30, 75 ,130\}\), for \(\eta \in \{1, 5, 10\}\). For the second system we set \(X_{\max} \in \{35, 80, 135\}\), for \(\eta \in \{1, 5, 10\}\). To ensure this is sufficiently large, we repeat the experiments with \(X_{\max}\in\{35, 86, 150\}\) for the first system and \(X_{\max}\in\{40, 92, 155\}\) for the second system (\(15\%\) larger) to make sure the value functions remain approximately the same (see Appendix \ref{ap:numerics}). We use \(400, 250\), and \(100\) samples for \(\eta \in \{1, 5,10\}\), respectively, when estimating the expectations. To make sure the sample sizes do not affect the choice of optimal policies, we repeat the experiments after doubling the number of samples. We use common random numbers (CRN) in all experiments to reduce the variance. We also run additional replications for the selected optimal policy to obtain more accurate estimates of the corresponding costs when presenting the results.

\begin{figure}
    \FIGURE{
    \includegraphics[scale=0.4]{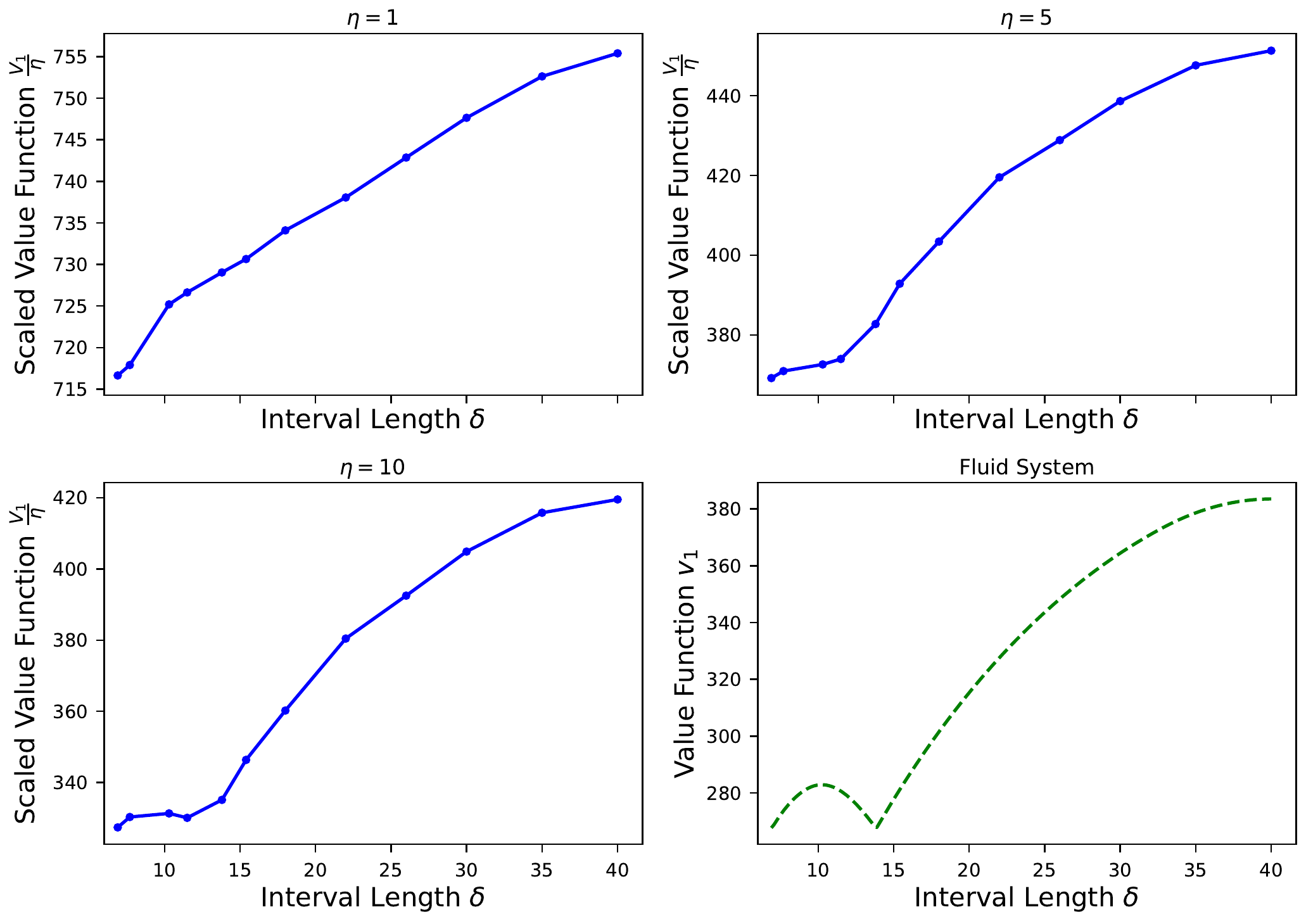}}
    {Scaled value function \(V_1 /\eta\) versus \(\delta\) for a sequence of stochastic systems along with a plot of the fluid value function \(v_1\) with the same parameters as the base stochastic system.\label{fig: stoch sys_1}}
    { The base $\eta=1$ parameters are: $(\lambda^1, \lambda^2) = (0.35, 0.3)$, $(h^1 , h^2) = (3, 1)$, $(X^1, X^2) = (9, 1)$, $T = 40$.}
\end{figure}

\begin{figure}
    \FIGURE{
    \includegraphics[scale=0.4]{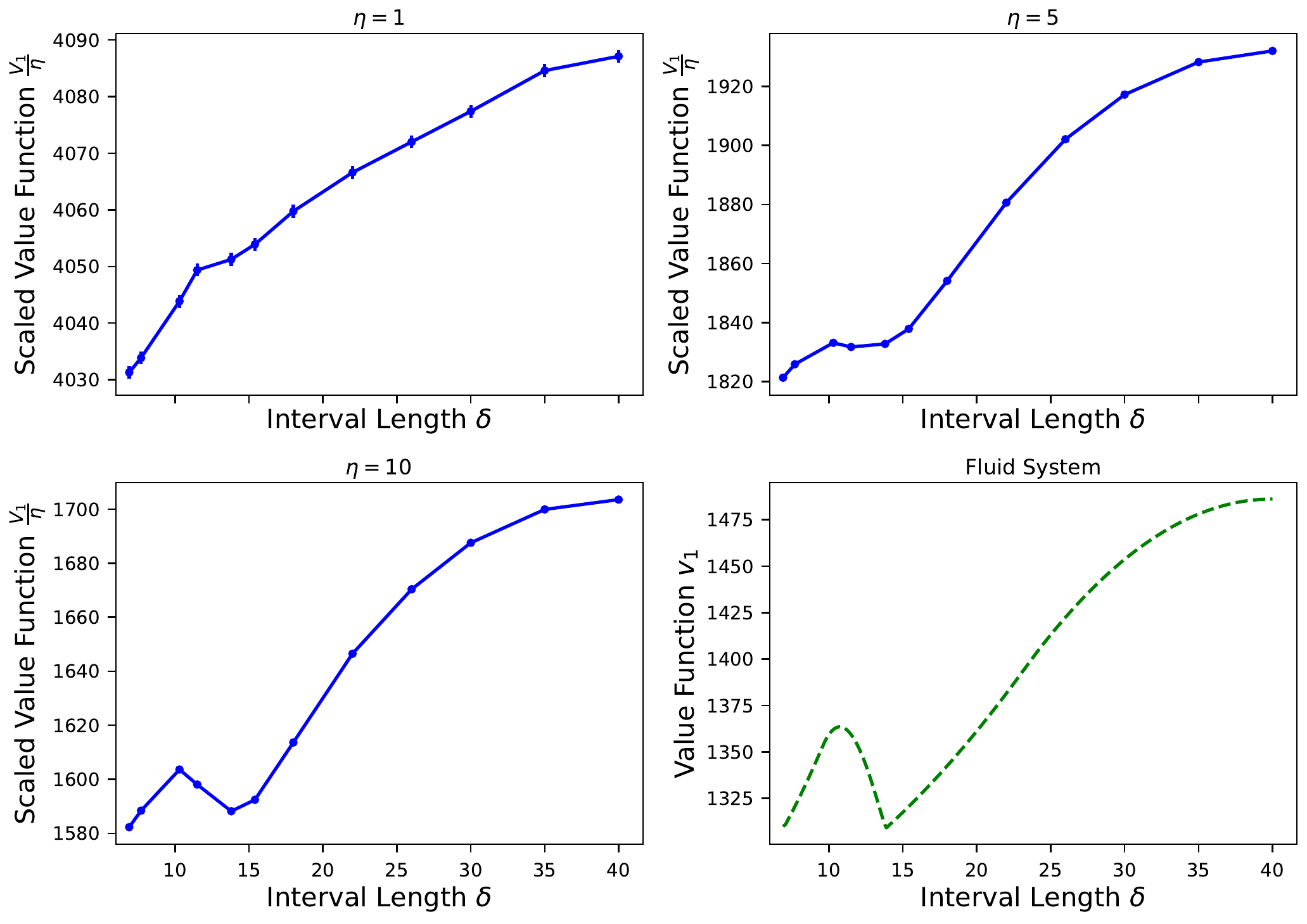}}
    {Scaled value function \(V_1 /\eta\) versus \(\delta\) for a sequence of stochastic systems along with a plot of the fluid value function \(v_1\) with the same parameters as the base stochastic system.\label{fig: stoch sys_2}}
    { The base $\eta=1$ parameters are: $(\lambda^1, \lambda^2) = (0.35, 0.25)$, $(h^1 , h^2) = (20, 1)$, $(X^1, X^2) = (9, 1)$, $T = 40$.}
\end{figure}

\textbf{Results}. Figure \ref{fig: stoch sys_1} presents value function estimates (with 95\% confidence intervals) for the first sequence of systems for \(\eta \in \{1, 5, 10\}\), along with a plot of the associated fluid value function \(v_1\). We observe that the non-monotonicity predicted by the fluid model only begins to appear in the stochastic system as \(\eta\) is increased. When \(\eta = 1\), however, the value function appears monotone with respect to \(\delta\), with the oscillations smoothed out. Intuitively, the oscillations in the fluid model arise from the exact alignment of review epochs with deterministic queue-clearing times. In the stochastic system, clearing times are random, weakening this alignment effect and thereby smoothing the oscillations. For \(\eta = 1\) the curve is concave with respect to \(\delta\) everywhere, and the systems with \(\eta \in \{5, 10\}\) appear concave in the same \(\delta\) regions as in the fluid model. %

Similarly, Figure \ref{fig: stoch sys_2} plots the second sequence of systems for \(\eta \in \{1, 5, 10\}\), with the associated fluid problem. As with the previous system, we observe a monotone stochastic value function when \(\eta = 1\), with non-monotonicity appearing for \(\eta \in \{5, 10\}\) in the same regions predicted by the fluid problem. We also observe that the stochastic systems are initially convex before becoming concave with respect to \(\delta\), consistent with the associated fluid value function. As \(\eta\) increases, the scaled stochastic value functions also become closer in magnitude to the continuous-allocation fluid value function in these instances.

\section{Conclusions}\label{sec:conc}
We investigate the sensitivity of the value function of a multiclass scheduling problem with respect to the review-period \(\delta\). We do so by analyzing the sensitivity of the associated fluid dynamic programming equations. We characterize first- and second-order derivatives of the value function and use them to identify monotonicity and curvature regimes.

We show that the value function need not be monotone in \(\delta\) for small review periods. This arises because the timing of review epochs rather than merely the frequency of control matter in this regime. In the two-class case, once \(\delta\) is large enough that class \(1\) clears in the first period, the value function is increasing in \(\delta\). Its curvature depends on the clearing regime: it can be linear, strictly convex, or strictly concave before eventually becoming concave near the no-control regime.

Idleness is central to these effects. When the high-priority class has a much larger \(c\mu\) index, the optimal policy may clear that class quickly, creating class-\(1\) idleness within a review period while the lower-priority queue builds up. This loss of flexibility increases the value of more frequent control. The stochastic experiments exhibit the same broad structure as the fluid model, but the initial non-monotonicity is smoothed out. Intuitively, deterministic fluid dynamics allow exact alignment of review epochs with clearing times, whereas stochastic dynamics make such exact timing less persistent.

Our study appears to be the first to examine the value of more frequent control in queueing control, and hence suggests several directions for future work. Similar sensitivity questions arise in other queueing control problems where continuous intervention is infeasible or undesirable, such as dynamic service-rate control, staffing, routing, and pricing. More broadly, developing sensitivity theory for general MDPs would be an interesting direction for future work. The finite-horizon stochastic DP is challenging because the transition probability itself depends on \(\delta\); discounted infinite-horizon formulations may offer a more tractable setting for implicit differentiation.  

 \begin{APPENDICES}

 \end{APPENDICES}

\bibliographystyle{informs2014} %
\bibliography{refs} %

\ECSwitch
 \section{Proofs} 
 \begin{proof}{Proof of Proposition \ref{prop: convexity of g and f}.} Fix $\delta>0$. $\textbf{(i)}$ For each $k\in\{1,\dots,K\}$, the map $(x_i^k,u_i^k)\mapsto x_i^k+\delta(\lambda^k-\mu^k u_i^k)$ is affine. Since $r\mapsto r^+=\max\{r,0\}$ is convex and nondecreasing, it follows that,
\[
f^k(x_i^k,u_i^k,\delta)=\bigl(x_i^k+\delta(\lambda^k-\mu^k u_i^k)\bigr)^+,
\]
is jointly convex in $(x_i^k,u_i^k)$. Hence $\vecf(\vecx_i,\vecu_i,\delta)$ is componentwise convex in $(\vecx_i,\vecu_i)$.

\noindent\textbf{(ii)}
Joint convexity follows from standard arguments: for each $k$ and each $s\in[0,\delta]$, the integrand,
\[
(\vecx_i,\vecu_i)\mapsto \bigl(x_i^k+s(\lambda^k-\mu^k u_i^k)\bigr)^+,
\]
is convex (as the positive-part of an affine function). Convexity is preserved under integration over $s$, multiplying by $h^k>0$, and summation over $k$, hence $g(\vecx_i,\vecu_i,\delta)$ is jointly convex in $(\vecx_i,\vecu_i)$.

To prove that $g(\vecx_i,\vecu_i,\delta)$ is continuously differentiable with respect to each component of $(\vecx_i,\vecu_i)$, consider the $k$th summand,
\[
G^k(x_i^k,u_i^k,\delta):=h^k\int_0^\delta \bigl(x_i^k+s(\lambda^k-\mu^k u_i^k)\bigr)^+\,ds,
\qquad 
g(\vecx_i,\vecu_i,\delta)=\sum_{k=1}^K G^k(x_i^k,u_i^k,\delta).
\]
First assume $x^k_i>0$ for all $k$. Using the definition of $\sigma_i^k$ in \eqref{eq:sigma_cleartime} as the clearing time,
\begin{equation}\label{eq:Gk_piecewise_with_sigma}
G^k(x_i^k,u_i^k,\delta)
=
h^k\Bigl(x_i^k(\delta\wedge \sigma_i^k)+\frac{(\delta\wedge \sigma_i^k)^2}{2}\,(\lambda^k-\mu^k u_i^k)\Bigr),
\end{equation}
and in particular the $k$th term admits the piecewise representation,
\begin{equation}\label{eq:Gk_piecewise_cases}
G^k(x_i^k,u_i^k,\delta)=
\begin{cases}
\displaystyle h^k\Bigl(x_i^k\delta+\frac{\delta^2}{2}(\lambda^k-\mu^k u_i^k)\Bigr), & \delta<\sigma_i^k,\\[0.9em]
\displaystyle h^k\,\frac{(x_i^k)^2}{2(\mu^k u_i^k-\lambda^k)}, & \delta>\sigma_i^k.
\end{cases}
\end{equation}
On each open region $\{\delta<\sigma_i^k\}$ and $\{\delta>\sigma_i^k\}$, the expression in \eqref{eq:Gk_piecewise_cases} is smooth in $(x_i^k,u_i^k)$. It remains to verify that the partial derivatives with respect to $x_i^k$ and $u_i^k$ match at boundary points where $\delta=\sigma_i^k$, which establishes differentiability at kink points. For $\delta<\sigma_i^k$,
\[
\frac{\partial G^k}{\partial x_i^k}(x_i^k,u_i^k,\delta)=h^k\delta,
\]
and for $\delta>\sigma_i^k$,
\[
\frac{\partial G^k}{\partial x_i^k}(x_i^k,u_i^k,\delta)
=h^k\frac{x_i^k}{\mu^k u_i^k-\lambda^k}
=h^k\sigma_i^k.
\]
At $\delta=\sigma_i^k$ these agree:
\[
\frac{\partial^- G^k}{\partial x_i^{k}}(x_i^k,u_i^k,\delta)
=h^k\sigma_i^k
=h^k\delta
=\frac{\partial^+ G^k}{\partial x_i^{k}}(x_i^k,u_i^k,\delta).
\]
Similarly, for $\delta<\sigma_i^k$,
\[
\frac{\partial G^k}{\partial u_i^k}(x_i^k,u_i^k,\delta)
=h^k\cdot \frac{\delta^2}{2}\cdot\frac{\partial}{\partial u_i^k}(\lambda^k-\mu^k u_i^k)
=-h^k\mu^k\frac{\delta^2}{2},
\]
and for $\delta>\sigma_i^k$,
\[
\frac{\partial G^k}{\partial u_i^k}(x_i^k,u_i^k,\delta)
=h^k\cdot \frac{(x_i^k)^2}{2}\cdot\frac{\partial}{\partial u_i^k}\bigl((\mu^k u_i^k-\lambda^k)^{-1}\bigr)
=-h^k\mu^k\frac{(x_i^k)^2}{2(\mu^k u_i^k-\lambda^k)^2}
=-h^k\mu^k\frac{(\sigma_i^k)^2}{2}.
\]
At $\delta=\sigma_i^k$ these agree:
\[
\frac{\partial^- G^k}{\partial u_i^{k}}(x_i^k,u_i^k,\delta)
=-h^k\mu^k\frac{(\sigma_i^k)^2}{2}
=-h^k\mu^k\frac{\delta^2}{2}
=\frac{\partial^+ G^k}{\partial u_i^{k}}(x_i^k,u_i^k,\delta).
\]
Therefore $G^k(\cdot,\cdot,\delta)$ is differentiable everywhere in $(x_i^k,u_i^k)$ for $x_i^k>0$. Moreover, on each side of the boundary the derivative is continuous,
and since the left/right derivatives agree on the boundary, $G^k$ is continuously differentiable. Summing over $k$ implies that
$g(\vecx_i,\vecu_i,\delta)=\sum_{k=1}^K G^k(x_i^k,u_i^k,\delta)$ is continuously differentiable with respect to each component
of $(\vecx_i,\vecu_i)$. It remains to consider the case with $x_i^k=0$. In this case,
\[
G^k(0,u_i^k,\delta)
=
\frac{h^k\delta^2}{2}
\left(\lambda^k-\mu^k u_i^k\right)^+.
\]
Hence, $G^k$ is continuously differentiable in $(x_i^k,u_i^k)$ when
$u_i^k\neq\lambda^k/\mu^k$. At the maintained-empty point $
x_i^k=0,
u_i^k=\lambda^k/\mu^k,
$
however, the one-sided derivatives with respect to $u_i^k$ are
$-\frac{h^k\mu^k\delta^2}{2}$ and $0$, respectively, 
so $G^k$ is not differentiable there. Therefore,
$g(\mathbf{x}_i,\mathbf{u}_i,\delta)=\sum_{k=1}^K G^k(x_i^k,u_i^k,\delta)$
is continuously differentiable in $(\mathbf{x}_i,\mathbf{u}_i)$ except at points where
$x_i^k=0$ and $u_i^k=\lambda^k/\mu^k$ for some $k$.

Next, we establish differentiability of $g$ with respect to $\delta$ for fixed $(\vecx_i,\vecu_i)$. For each $k$, the integrand,
\[
s\mapsto \bigl(x_i^k+s(\lambda^k-\mu^k u_i^k)\bigr)^+,
\]
is continuous in $s$. Hence, by the Fundamental Theorem of Calculus, the mapping,
\[
\delta \mapsto G^k(x_i^k,u_i^k,\delta)=h^k\int_0^\delta \bigl(x_i^k+s(\lambda^k-\mu^k u_i^k)\bigr)^+\,ds,
\]
is differentiable for $\delta\ge 0$ with derivative,
\[
\frac{\partial}{\partial \delta}G^k(x_i^k,u_i^k,\delta)
= h^k\bigl(x_i^k+\delta(\lambda^k-\mu^k u_i^k)\bigr)^+.
\]
Summing over $k$ and using linearity of differentiation yields,
\[
\frac{\partial}{\partial \delta} g(\vecx_i,\vecu_i,\delta)
=\sum_{k=1}^K h^k\bigl(x_i^k+\delta(\lambda^k-\mu^k u_i^k)\bigr)^+.
\]
Since each term on the right-hand side is continuous in $\delta$, it follows that $\partial g/\partial \delta$ is continuous, i.e.,
$g(\vecx_i,\vecu_i,\delta)$ is continuously differentiable in $\delta$.

\noindent\textbf{(iii)}
Let $(\vecx_i,\vecu_i,\delta)$ lie in an open set that does not intersect $\mathcal N$. Then for every $k$,
$x_i^k+\delta(\lambda^k-\mu^k u_i^k)\neq 0$ throughout the set. Hence the sign of each
$x_i^k+\delta(\lambda^k-\mu^k u_i^k)$ is locally constant, so each positive-part expression is either always active
or always inactive locally. Consequently, for each $k$, $f^k(x_i^k,u_i^k,\delta)$ reduces locally to either an affine function
or the constant zero function, and thus is twice continuously differentiable on the set. Likewise, on such a set each summand $G^k$
is given locally by one of the smooth expressions in \eqref{eq:Gk_piecewise_cases}, hence is twice continuously differentiable, and therefore
so is $g(\vecx_i,\vecu_i,\delta)$. This completes the proof. \Halmos
\end{proof}
 \begin{proof}{Proof of Proposition \ref{prop: convex and differentiable structure}.}
Fix $\delta>0$, and write \(n=n(\delta)\) and
\(r=r(\delta)=T-(n-1)\delta\in(0,\delta]\). The proof is by backward induction
on the period index.

\noindent\textbf{Base case ($i=n$).}
The last-period objective is,
\[
q_n(\vecx_n,\vecu_n,\delta)=g(\vecx_n,\vecu_n,r).
\]
By Proposition \ref{prop: convexity of g and f}, the function
\((\vecx_n,\vecu_n)\mapsto q_n(\vecx_n,\vecu_n,\delta)\) is jointly convex.
Since \(\mathcal U=\{\vecu\in\mathbb R_+^K:\vecu^\top\mathbf e=1\}\) is convex,
partial minimization preserves convexity, so,
\[
v_n(\vecx_n,\delta)=\min_{\vecu_n\in\mathcal U}q_n(\vecx_n,\vecu_n,\delta),
\]
is convex in \(\vecx_n\). Moreover, for each fixed \(\vecu_n\),
\(g(\cdot,\vecu_n,r)\) is componentwise nondecreasing in \(\vecx_n\) because
increasing an initial queue cannot decrease any integrand in \eqref{eq:g_sched}.
Taking the pointwise minimum over the common feasible set \(\mathcal U\)
preserves componentwise monotonicity, so \(v_n(\cdot,\delta)\) is
componentwise nondecreasing.

\medskip
\noindent\textbf{Induction step.}
Fix \(i<n\) and assume that \(v_{i+1}(\cdot,\delta)\) is convex and
componentwise nondecreasing.
Consider
\[
q_i(\vecx_i,\vecu_i,\delta)
= g(\vecx_i,\vecu_i,\delta)+v_{i+1}\!\bigl(\vecf(\vecx_i,\vecu_i,\delta),\delta\bigr).
\]
By Proposition \ref{prop: convexity of g and f}, $(\vecx_i,\vecu_i)\mapsto g(\vecx_i,\vecu_i,\delta)$ is convex,
and the mapping $(\vecx_i,\vecu_i)\mapsto \vecf(\vecx_i,\vecu_i,\delta)$ is componentwise convex.
Because $v_{i+1}(\cdot,\delta)$ is convex and componentwise nondecreasing and $\vecf(\cdot,\cdot,\delta)$ has convex
components, the monotone composition 
$(\vecx_i,\vecu_i)\mapsto v_{i+1}(\vecf(\vecx_i,\vecu_i,\delta),\delta)$ is convex in $(\vecx_i,\vecu_i)$.
Therefore $q_i(\vecx_i,\vecu_i,\delta)$, as a sum of convex functions, is convex in $(\vecx_i,\vecu_i)$.

Since $\mathcal U$ is convex and partial minimization preserves convexity, $v_i(\vecx_i,\delta)$ is convex in $\vecx_i$. Moreover, for each fixed $\vecu_i$, the map $\vecx_i\mapsto g(\vecx_i,\vecu_i,\delta)$ is componentwise nondecreasing and $\vecx_i\mapsto \vecf(\vecx_i,\vecu_i,\delta)$ is componentwise nondecreasing. Because $v_{i+1}(\cdot,\delta)$ is componentwise nondecreasing by the induction hypothesis, $\vecx_i\mapsto q_i(\vecx_i,\vecu_i,\delta)$ is componentwise nondecreasing. Taking the pointwise minimum over the common feasible set $\mathcal U$ preserves componentwise monotonicity: if $\vecx\le \vecy$, then $q_i(\vecx,\vecu,\delta)\le q_i(\vecy,\vecu,\delta)$ for every $\vecu\in\mathcal U$, and hence $v_i(\vecx,\delta)\le v_i(\vecy,\delta)$. This completes the backward induction. \Halmos \end{proof}
\subsection{Proofs for Theorem \ref{thm:K_class optimal policy}}\label{app:proofs}

\begin{proof}{Proof of Lemma \ref{lem:exchange_improvement}.}
Let \(\pi\) be a feasible policy for the \(n\)-period problem whose allocations
from period \(2\) onward satisfy,
\begin{equation}\label{eq:inductive_hypothesis}
u_i^k
\geq
\min\left(
1-\sum_{l=1}^{k-1}u_i^l,
\frac{x_i^k}{\Delta_i(\delta)\mu^k}+\frac{\lambda^k}{\mu^k}
\right),
\quad
1\leq k\leq K-1,
\end{equation}
for each \(i\in\{2,\dots,n\}\). We show that if the first-period allocation
violates,
\begin{equation}\label{eq:inductive_want_to_show}
u_1^k
\geq
\min\left(
1-\sum_{l=1}^{k-1}u_1^l,
\frac{x_1^k}{\Delta_1(\delta)\mu^k}+\frac{\lambda^k}{\mu^k}
\right),
\quad
1\leq k\leq K-1.
\end{equation}
then \(\pi\) cannot be optimal.

Suppose, for contradiction, that \(\pi\) is optimal and violates
\eqref{eq:inductive_want_to_show} in period \(i=1\) for at least one class. Let
\(j\in\{1,\dots,K-1\}\) be the smallest index for which the bound fails, i.e.,
\begin{equation}\label{eq:violation_1}
u_1^j
<
\min\!\left\{
1-\sum_{l=1}^{j-1}u_1^l,
\frac{x_1^j}{\Delta_1(\delta)\mu^j}+\frac{\lambda^j}{\mu^j}
\right\}.
\end{equation}
Hence, \(1-\sum_{l=1}^{j-1}u_1^l>u_1^j\). Since
\(\sum_{k=1}^K u_1^k=1\), it follows that
\(\sum_{l=j+1}^K u_1^l>0\). Therefore there exists some \(m>j\) such that
\(u_1^m>0\). By assumption, \(h^m\mu^m<h^j\mu^j\).

\medskip
\noindent\emph{Construction of the policy \(\tilde\pi\).}
We proceed by constructing a policy \(\tilde\pi\) that achieves lower cost than
\(\pi\). Write \(\Delta_i=\Delta_i(\delta)\) for brevity. Choose
\(\varepsilon>0\) small enough that,
\begin{equation}\label{eq:eps_small}
0<\varepsilon\leq u_1^m,
\qquad
u_1^j+\varepsilon< 1-\sum_{l=1}^{j-1}u_1^l,
\qquad
u_1^j+\varepsilon
<
\frac{x_1^j}{\Delta_1\mu^j}+\frac{\lambda^j}{\mu^j}.
\end{equation}
Define \(\tilde\pi\) to modify the period-1 allocation only for classes \(j\)
and \(m\), by setting,
\[
\tilde u_1^j:=u_1^j+\varepsilon,\qquad
\tilde u_1^m:=u_1^m-\varepsilon.
\]
This preserves feasibility, so \(\tilde{\vecu}_1\in\mathcal U\), by
\eqref{eq:eps_small}. Let \(\vecx_2\) and \(\tilde{\vecx}_2\) denote the states at the start of period
\(2\) under \(\pi\) and \(\tilde\pi\), respectively, i.e.,
\[
\vecx_2=\vecf(\vecx_1,\vecu_1,\delta),
\qquad
\tilde{\vecx}_2=\vecf(\vecx_1,\tilde{\vecu}_1,\delta).
\]
Only classes \(j\) and \(m\) may differ between \(\vecx_2\) and
\(\tilde{\vecx}_2\). We consider two cases depending on whether class \(j\) clears
within period \(2\) under \(\pi\).

\medskip
\noindent\textbf{Case 1: Under \(\pi\), class \(j\) clears in period \(2\).} See Figure \ref{fig: induction proof case 1} for an illustration. 
Let \(\sigma\in(0,\Delta_2]\) denote the clearing time of class \(j\) under
policy \(\pi\) in period \(2\). Then, by \eqref{eq:sigma_cleartime},
\(u_2^j\) satisfies,
\[
u_2^j
=
\frac{x_2^j}{\sigma\mu^j}
+
\frac{\lambda^j}{\mu^j}.
\]
Because \(\tilde u_1^j>u_1^j\), and because class \(j\) does not clear in period
\(1\) under either policy by \eqref{eq:eps_small}, we have,
\[
\tilde x_2^j=x_2^j-\Delta_1\mu^j\varepsilon.
\]
By choosing \(\varepsilon>0\) sufficiently small, we may assume
\(\tilde x_2^j>0\). Hence policy \(\tilde\pi\) can be constructed to clear
class \(j\) at the same time \(\sigma\). This allocation is given by,
\[
\tilde u_2^j
=
\frac{\tilde x_2^j}{\sigma\mu^j}
+
\frac{\lambda^j}{\mu^j}.
\]
Therefore,
\begin{equation}\label{eq:induction_case1_period2_j_policy_difference}
u_2^j-\tilde u_2^j
=
\frac{x_2^j-\tilde x_2^j}{\sigma\mu^j}
=
\frac{\Delta_1\varepsilon}{\sigma}.
\end{equation}
We assign the freed capacity to class \(m\). Specifically, define,
\begin{equation}\label{eq:repair_case1}
\tilde u_2^j
:=
u_2^j-\frac{\Delta_1}{\sigma}\varepsilon,
\qquad
\tilde u_2^m
:=
u_2^m+\frac{\Delta_1}{\sigma}\varepsilon,
\end{equation}
and set \(\tilde u_2^k:=u_2^k\) for \(k\notin\{j,m\}\). For \(\varepsilon>0\)
small enough, feasibility \(\tilde{\vecu}_2\in\mathcal U\) and nonnegativity are
preserved.

We now compare the costs in this case. Let \(D=J(\pi)-J(\tilde\pi)\), so
that \(D>0\) means that \(\tilde\pi\) achieves a lower cost than \(\pi\).
During period \(1\), class \(j\) receives \(\varepsilon\) more capacity under
\(\tilde\pi\). Since class \(j\) does not clear in period \(1\), the reduction in
class-\(j\) cost is,
\begin{equation}\label{eq:cost_difference_class_j_period1_case1}
D_1^j
=
h^j\int_0^{\Delta_1} \mu^j\varepsilon s\,ds
=
\frac{h^j\mu^j\Delta_1^2\varepsilon}{2}.
\end{equation}
Class \(m\) receives \(\varepsilon\) less capacity under \(\tilde\pi\) in period
\(1\). Hence the class-\(m\) cost increase in period \(1\) is at most,
\begin{equation}\label{eq:cost_difference_class_m_period1_case1}
-D_1^m
\leq
h^m\int_0^{\Delta_1} \mu^m\varepsilon s\,ds
=
\frac{h^m\mu^m\Delta_1^2\varepsilon}{2},
\end{equation}
or equivalently,
\begin{equation}\label{eq:cost_difference_class_m_period1_case1_lower}
D_1^m
\geq
-\frac{h^m\mu^m\Delta_1^2\varepsilon}{2}.
\end{equation}

Next, consider period \(2\). By construction, both policies clear class \(j\) at
time \(\sigma\). At the start of period \(2\), class \(j\) under \(\tilde\pi\)
is lower by \(\Delta_1\mu^j\varepsilon\), and this difference decreases linearly
to zero at time \(\sigma\). Therefore the period-2 class-\(j\) cost reduction is,
\begin{equation}\label{eq:cost_difference_class_j_period2_case1}
D_2^j
=
h^j\frac{\Delta_1\mu^j\varepsilon\cdot\sigma}{2}
=
\frac{h^j\mu^j\Delta_1\sigma\varepsilon}{2}.
\end{equation}
For class \(m\), we have,
\[
0\leq \tilde x_2^m-x_2^m
\leq \Delta_1\mu^m\varepsilon.
\]
Since
\(\tilde u_2^m=u_2^m+\Delta_1\varepsilon/\sigma\), we have, for
\(s\in[0,\sigma]\),
\[
\tilde x_2^m(s)-x_2^m(s)
\leq
\Delta_1\mu^m\varepsilon
\left(1-\frac{s}{\sigma}\right).
\]
In particular, \(\tilde x_2^m(\sigma)\leq x_2^m(\sigma)\). Since
\(\tilde u_2^m\geq u_2^m\), this ordering is preserved on
\([\sigma,\Delta_2]\). Therefore, the class-\(m\) cost increase in period
\(2\) is at most
\[
h^m\int_0^\sigma
\Delta_1\mu^m\varepsilon\left(1-\frac{s}{\sigma}\right)ds
=
\frac{h^m\mu^m\Delta_1\sigma\varepsilon}{2},
\]
and hence
\begin{equation}\label{eq:cost_difference_class_m_period2_case1}
D_2^m
\geq
-\frac{h^m\mu^m\Delta_1\sigma\varepsilon}{2}.
\end{equation}

Finally, since class \(j\) is empty under both policies at the end of period
\(2\), class \(m\) is weakly smaller under \(\tilde\pi\), and all other classes
are unchanged, we have \(\tilde{\vecx}_3\leq \vecx_3\). From period \(3\)
onward, let \(\tilde\pi\) follow an optimal continuation from
\(\tilde{\vecx}_3\). By monotonicity of the value function, this continuation
cost is no larger than the optimal continuation cost from \(\vecx_3\), and hence
no larger than the continuation cost incurred by \(\pi\).
Combining the period-1 and period-2 cost differences with the weakly smaller
continuation cost gives,
\begin{align*}
D
&\geq
\frac{h^j\mu^j\Delta_1^2\varepsilon}{2}
-\frac{h^m\mu^m\Delta_1^2\varepsilon}{2}
+\frac{h^j\mu^j\Delta_1\sigma\varepsilon}{2}
-\frac{h^m\mu^m\Delta_1\sigma\varepsilon}{2} \\
&=
\frac{\varepsilon}{2}(\Delta_1^2+\Delta_1\sigma)
\left(h^j\mu^j-h^m\mu^m\right)
>0.
\end{align*}
The strict inequality follows from \(h^j\mu^j>h^m\mu^m\). Therefore
\(\tilde\pi\) achieves strictly lower cost than \(\pi\), contradicting the
optimality of \(\pi\).

\begin{figure}
    \FIGURE{
    \includegraphics[width=0.65\textwidth]{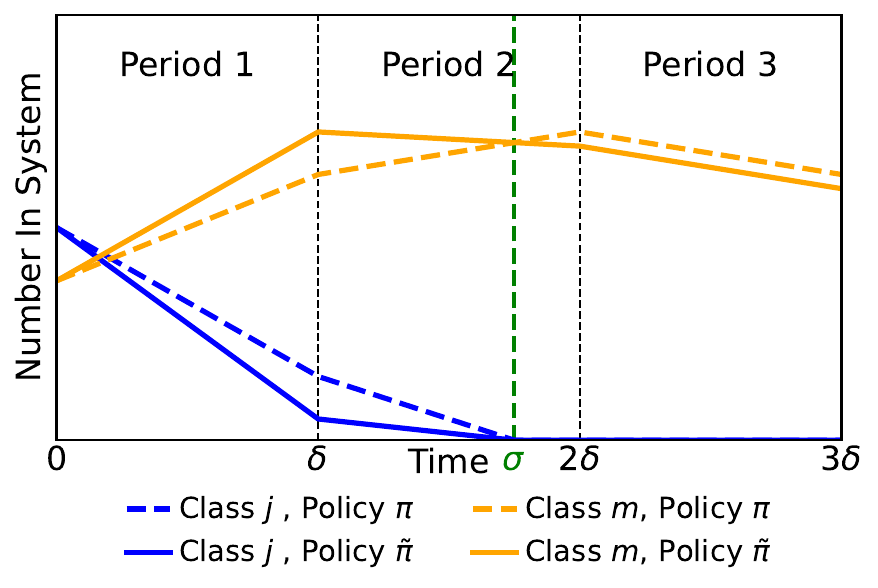}}
    {Illustrative example of policies $\pi$ and $\tPi$ in Case 1. \label{fig: induction proof case 1}}
    { Here, policies $\pi$ and $\tPi$ clear class $j$ at the same time in period $2$. At this same elapsed time, class $m$ is no larger under $\tPi$ than under $\pi$, which lets us bound the cost difference using periods $1$ and $2$ and the monotonicity of the continuation value.}
\end{figure}

\medskip
\noindent\textbf{Case 2: Under \(\pi\), class \(j\) does not clear in period \(2\).}
See Figure \ref{fig: induction proof case 2} for an illustration. By \eqref{eq:inductive_hypothesis} applied to period
\(2\), policy \(\pi\) satisfies the lower-bound condition in period \(2\). Since
class \(j\) does not clear in period \(2\), we have,
\[
u_2^j
<
\frac{x_2^j}{\Delta_2\mu^j}
+
\frac{\lambda^j}{\mu^j}.
\]
Therefore, the lower bound in \eqref{eq:inductive_hypothesis} can hold only if,
\[
u_2^j
\geq
1-\sum_{l=1}^{j-1}u_2^l.
\]
Since feasibility implies \(u_2^j\leq 1-\sum_{l=1}^{j-1}u_2^l\), it follows that,
\[
u_2^j
=
1-\sum_{l=1}^{j-1}u_2^l,
\qquad
u_2^m=0,
\]
where the second equality follows because \(m>j\).

We first show that the period-\(2\) allocation satisfies \(u_2^j>0\). Since class \(m>j\) receives positive capacity in period \(1\), and
\(j\) is the smallest class whose period-\(1\) allocation violates the
lower bound, every \(l<j\) must receive at least its period-\(1\)
clearing allocation. Hence \(x_2^l=0\) for every \(l<j\), and,
\[
\sum_{l=1}^{j-1}\frac{\lambda^l}{\mu^l}
\leq
\sum_{l=1}^{j-1}u_1^l<1.
\]
Suppose instead that \(u_2^j=0\). Since \(x_2^j>0\), applying the
induction hypothesis to class \(j\) in period \(2\) gives,
\[
\sum_{l=1}^{j-1}u_2^l=1,
\]
which is impossible if \(j=1\). Otherwise, some \(l<j\) satisfies
\(u_2^l>\lambda^l/\mu^l\). Shifting a sufficiently small amount of
period-\(2\) capacity from class \(l\) to class \(j\) keeps class \(l\)
empty throughout period \(2\), strictly reduces the period-\(2\) cost,
and weakly decreases the end-of-period-\(2\) state. Monotonicity of the
continuation value then contradicts the optimality of \(\pi\).
Therefore \(u_2^j>0\). Let,
\[
\alpha:=\frac{\Delta_1}{\Delta_2}\varepsilon.
\]
We construct \(\tilde\pi\) in period \(2\) by shifting \(\alpha\) units of
capacity from class \(j\) back to class \(m\). Specifically, define,
\begin{equation}\label{eq:repair_case2}
\tilde u_2^j:=u_2^j-\alpha,
\qquad
\tilde u_2^m:=\alpha,
\end{equation}
and set \(\tilde u_2^k:=u_2^k\) for \(k\notin\{j,m\}\). For \(\varepsilon>0\)
small enough, feasibility is preserved.

We first verify that the perturbed state at the start of period \(3\) is
componentwise no larger. By construction, \(\tilde u_i^k=u_i^k\) for all
\(k\notin\{j,m\}\) and \(i\in\{1,2\}\), so \(\tilde x_3^k=x_3^k\) for all
\(k\notin\{j,m\}\). It remains to check classes \(j\) and \(m\). Since class
\(j\) does not clear in period \(1\) under either policy by \eqref{eq:eps_small},
and does not clear in period \(2\) under \(\pi\), we have, for \(\varepsilon\)
small enough,
\begin{align*}
\tilde x_3^j
&=
x_1^j+\Delta_1(\lambda^j-\mu^j(u_1^j+\varepsilon))
+\Delta_2(\lambda^j-\mu^j(u_2^j-\alpha)) \\
&=
x_1^j+\Delta_1(\lambda^j-\mu^j u_1^j)
+\Delta_2(\lambda^j-\mu^j u_2^j)
=
x_3^j.
\end{align*}
For class \(m\), the period-1 perturbation increases the state at the start of
period \(2\) by at most \(\Delta_1\mu^m\varepsilon\), while the period-2 repair
provides exactly \(\Delta_2\mu^m\alpha=\Delta_1\mu^m\varepsilon\) additional service relative to
\(\pi\) and hence $\tilde x_3^m\leq x_3^m.$
Therefore, \(\tilde{\vecx}_3\leq\vecx_3\). From period \(3\) onward, let
\(\tilde\pi\) follow an optimal continuation from \(\tilde{\vecx}_3\). By
monotonicity of the value function, this continuation cost is no larger than the
optimal continuation cost from \(\vecx_3\), and hence no larger than the
continuation cost incurred by \(\pi\).

We now compare the costs in periods \(1\) and \(2\). Let
\(D=J(\pi)-J(\tilde\pi)\). During period \(1\), class \(j\) receives
\(\varepsilon\) more capacity under \(\tilde\pi\). Since class \(j\) does not
clear in period \(1\), the reduction in class-\(j\) cost is,
\begin{equation}\label{eq:cost_difference_class_j_period1_case2}
D_1^j
=
h^j\int_0^{\Delta_1} \mu^j\varepsilon s\,ds
=
\frac{h^j\mu^j\Delta_1^2\varepsilon}{2}.
\end{equation}
Class \(m\) receives \(\varepsilon\) less capacity under \(\tilde\pi\) in period
\(1\). Hence the class-\(m\) cost increase in period \(1\) is at most
\begin{equation}\label{eq:cost_difference_class_m_period1_case2}
-D_1^m
\leq
h^m\int_0^{\Delta_1} \mu^m\varepsilon s\,ds
=
\frac{h^m\mu^m\Delta_1^2\varepsilon}{2},
\end{equation}
or equivalently,
\[
D_1^m
\geq
-\frac{h^m\mu^m\Delta_1^2\varepsilon}{2}.
\]

Next consider period \(2\). Since the period-2 repair shifts \(\alpha\)
units of capacity from class \(j\) to class \(m\), and since the period-1
perturbation reduced class \(j\)'s period-2 initial state by
\(\Delta_1\mu^j\varepsilon\), the class-\(j\) fluid under \(\tilde\pi\) is lower
than under \(\pi\) by \(\mu^j\Delta_1\varepsilon(1-s/\Delta_2)\) at elapsed time
\(s\in[0,\Delta_2]\). Hence,
\begin{equation}\label{eq:cost_difference_class_j_period2_case2}
D_2^j
=
h^j\int_0^{\Delta_2} \mu^j\Delta_1\varepsilon\left(1-\frac{s}{\Delta_2}\right)\,ds
=
\frac{h^j\mu^j\Delta_1\Delta_2\varepsilon}{2}.
\end{equation}
For class \(m\), the period-1 perturbation increases the period-2 initial state
by \(\Delta_1\mu^m\varepsilon\), while the period-2 repair assigns
\(\alpha\) units of capacity to class \(m\). Therefore the class-\(m\) fluid
under \(\tilde\pi\) is higher than under \(\pi\) by at most
\(\mu^m\Delta_1\varepsilon(1-s/\Delta_2)\) at elapsed time \(s\in[0,\Delta_2]\). Thus,
\begin{equation}\label{eq:cost_difference_class_m_period2_case2}
-D_2^m
\leq
h^m\int_0^{\Delta_2} \mu^m\Delta_1\varepsilon\left(1-\frac{s}{\Delta_2}\right)\,ds
=
\frac{h^m\mu^m\Delta_1\Delta_2\varepsilon}{2},
\end{equation}
or equivalently,
\[
D_2^m
\geq
-\frac{h^m\mu^m\Delta_1\Delta_2\varepsilon}{2}.
\]

Combining the period-1 and period-2 cost differences with the weakly smaller continuation cost, we obtain,
\begin{align*}
D
&\geq
\frac{h^j\mu^j\Delta_1^2\varepsilon}{2}
-\frac{h^m\mu^m\Delta_1^2\varepsilon}{2}
+\frac{h^j\mu^j\Delta_1\Delta_2\varepsilon}{2}
-\frac{h^m\mu^m\Delta_1\Delta_2\varepsilon}{2} \\
&=
\frac{\varepsilon}{2}(\Delta_1^2+\Delta_1\Delta_2)
\left(h^j\mu^j-h^m\mu^m\right)
>0.
\end{align*}
The strict inequality follows from \(h^j\mu^j>h^m\mu^m\). Therefore
\(\tilde\pi\) achieves strictly lower cost than \(\pi\), contradicting the
optimality of \(\pi\).

\begin{figure}
    \FIGURE{
    \includegraphics[width=0.65\textwidth]{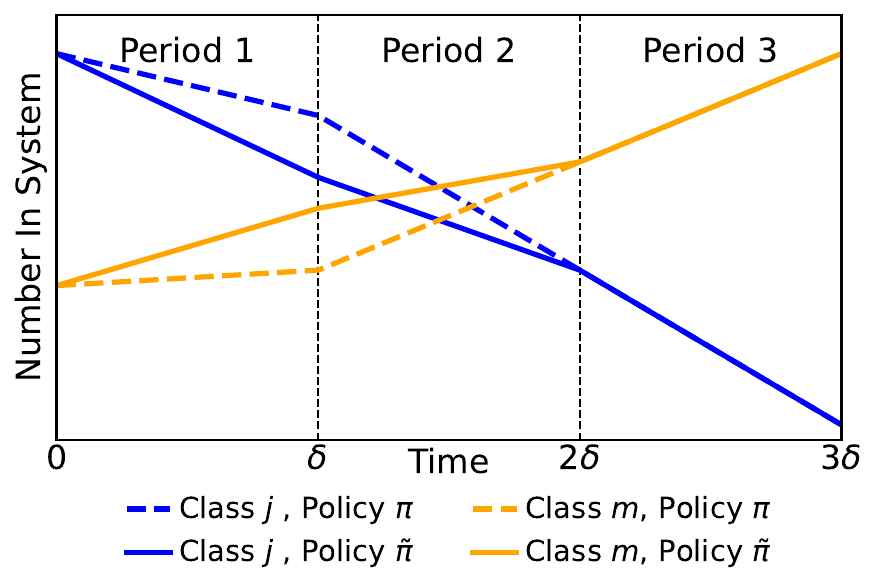}}
    {Illustrative example of policies $\pi$ and $\tPi$ in Case 2. \label{fig: induction proof case 2}}
    { Here, policy $\pi$ does not clear class $j$ in period $2$. The repair makes the states of classes $j$ and $m$ at the start of period $3$ weakly no larger under $\tPi$ than under $\pi$, so the remaining comparison uses periods $1$ and $2$ and the monotonicity of the continuation value.}
\end{figure}
The two cases exhaust all possibilities for period 2. Hence any first-period allocation that violates \eqref{eq:inductive_want_to_show} cannot be optimal. \Halmos \end{proof}

\begin{proof}{Proof of Proposition \ref{prop:regular_point_smoothness}.}
We prove part (i) by backward induction in reduced coordinates.
Suppose that the
reduced objective \(q_i\) is \(C^2\) on the fixed smooth regime and that
the continuation value is convex and \(C^2\), with the continuation term
omitted when \(i=n\). Let \(\mathbf d\) be a nonzero reduced critical
direction. Under SCS, \(d^k=0\) for every
\(k\in A_i^*\setminus E_i\) and
\(\mathbf d^\intercal\mathbf e=0\), so at least two coordinates in
\(B_i^*\setminus E_i\) have nonzero components of \(\mathbf d\).

At an all-empty state, the priority structure and maintained-empty
reduction leave at most one positive free allocation, with the
convention selecting that coordinate in the underloaded case. The
capacity equality therefore makes the reduced critical cone trivial.
Thus a nonzero \(\mathbf d\) can occur only when the state is not all
empty. In that case, an empty class cannot receive excess capacity at
optimality while another queue is positive or growing. Theorem
\ref{thm:K_class optimal policy} also implies that at most one class
receiving positive capacity is nonclearing. Hence, for some \(k\) with
\(d^k\ne0\), we have
\(x_i^{k,*}>0\), and class \(k\) clears strictly before the period
endpoint.

On the fixed smooth piece, \(\vecf\) is affine in \(\vecu_i\).
Therefore, with all derivatives evaluated at the base point,
\[
\begin{aligned}
\mathbf d^\intercal
\nabla_{\vecu_i\vecu_i}^2q_i\mathbf d
&=
\mathbf d^\intercal
\nabla_{\vecu_i\vecu_i}^2g\mathbf d
+
\left(\frac{\partial\vecf}{\partial\vecu_i}\mathbf d\right)^\intercal
\nabla_{\vecx_{i+1}\vecx_{i+1}}^2v_{i+1}
\left(\frac{\partial\vecf}{\partial\vecu_i}\mathbf d\right).
\end{aligned}
\]
The second term is nonnegative by convexity. The first is strictly
positive because \(g\) is separable and convex and
\[
\mathbf d^\intercal
\nabla_{\vecu_i\vecu_i}^2g\mathbf d
\geq
\frac{h^k(x_i^{k,*})^2(\mu^k)^2}
{(\mu^k u_i^{k,*}-\lambda^k)^3}(d^k)^2
>0.
\]
The reduced constraints are affine, so their Lagrangian terms
are zero when taking second derivatives with respect to $\vecu_i$. This proves SOSC.

For \(i=n\), the constant-period-count condition makes
\(r(\delta)\) affine on \(I\). Conditions 1 and 2 of Definition
\ref{def:regular_point}, together with Proposition
\ref{prop: convexity of g and f}, imply that
\(q_n(\vecx_n,\vecu_n,\delta)
=g(\vecx_n,\vecu_n,r(\delta))\) is \(C^2\) near the base point and
convex. The reduced simplex satisfies LICQ, and Condition 3 gives SCS;
the preceding argument gives SOSC. Proposition
\ref{prop:parametric_nlp_sensitivity} therefore yields a locally unique
\(C^1\) optimizer mapping, a locally fixed active
nonnegativity-constraint set, and a \(C^2\)
local value. Convexity makes this optimizer the unique global optimizer
of the reduced problem: a second optimizer would generate a segment of
optimizers contradicting local uniqueness. Condition 2 identifies the
reduced value with \(v_n\).

Now suppose that the conclusions hold for period \(i+1\). Since
\(\Delta_i(\delta)\) is affine on \(I\), Conditions 1 and 2 keep
\(g\) and \(\vecf\) on fixed \(C^2\) pieces. Together with the \(C^2\)
continuation value, this makes
$
q_i
=g(\vecx_i,\vecu_i,\Delta_i(\delta))
+v_{i+1}\bigl(
\vecf(\vecx_i,\vecu_i,\Delta_i(\delta)),\delta
\bigr)
$
\(C^2\) near the base point; it is convex by Proposition
\ref{prop: convex and differentiable structure}. The reduced simplex
satisfies LICQ, Condition 3 gives SCS, and the preceding argument gives
SOSC. Proposition \ref{prop:parametric_nlp_sensitivity}, convexity, and
Condition 2 therefore give all the conclusions in part (i) for period
\(i\). This completes the backward induction.

For part (ii), start from the fixed state \(\vecx_1\) and
recursively compose the optimizer mappings from part (i) with the state
transitions. After intersecting the finitely many local parameter
neighborhoods, this recursion gives the unique canonical \(C^1\) reduced
optimal trajectory, which agrees with the base trajectory at
\(\delta=\delta_0\).
\Halmos
\end{proof}

\begin{proof}{Proof of Lemma \ref{lem:kkt_cancellation_regular_regime}.}
By Proposition \ref{prop:regular_point_smoothness}, \(\vecu_1^*(\delta)\) is differentiable along the local regular regime. For maintained-empty classes \(k\in E_1\), Condition 2 of Definition \ref{def:regular_point} gives \(u_1^{k,*}(\delta)=\lambda^k/\mu^k\), so \(d u_1^{k,*}(\delta)/d\delta=0\). Differentiating the reduced simplex equality constraint, and then embedding the derivative in the full allocation vector, gives \eqref{eq: KKT Derivative of Equality Constraint}. The active nonnegativity-constraint set of the reduced problem is fixed on this regime. If \(u_1^{k,*}>0\), then complementary slackness gives \(\gamma^k=0\). If \(u_1^{k,*}=0\), then \(u_1^{k,*}(\delta)=0\) throughout the local regime, so \(d u_1^{k,*}(\delta)/d\delta=0\) at the base point. Hence \(\gamma^k d u_1^{k,*}(\delta)/d\delta=0\) for every \(k\), which gives \eqref{eq:KKT Complentary Slack with u delta derivatives}. Combining these two identities yields
\[
     \frac{d\vecu_1^*(\delta)}{d\delta}(\beta \mathbf{e} + \vecgamma)=0.
\]
Multiplying the identity \eqref{eq:KKT Stationarity Chain Rule Applied} by \(d\vecu_1^*(\delta)/d\delta\) gives \eqref{eq: Vector FONC}. \Halmos \end{proof}

\begin{proof}{Proof of Lemma \ref{lem:kkt_differentiated_stationarity}.}
By Proposition \ref{prop:regular_point_smoothness}, the functions \(g\), \(\vecf\), \(v_2\), and \(\vecu_1^*(\delta)\) have the required derivatives on the local regular regime after maintained-empty components are fixed. Thus, the second derivatives in \eqref{eq: der_of_fonc_interior} are well defined for the free variables of the reduced problem. Differentiating the reduced stationarity identity with respect to \(\delta\) and evaluating at \(\delta=\delta_0\) gives,
\[
\frac{d\beta(\delta)}{d\delta}\vece+\frac{d\vecgamma(\delta)}{d\delta}
=
\frac{d}{d\delta}\left[\pd{g}{\vecu_1}\right]
+
\frac{d}{d\delta}\left[\pd{\vecf}{\vecu_1}\pd{v_2}{\vecx_2}\right].
\]
The first term is
\[
\frac{d}{d\delta}\left[\pd{g}{\vecu_1}\right]
=\frac{d\vecu_1^*(\delta)}{d\delta}\pdd{g}{\vecu_1}+\mpd{g}{\vecu_1}{\delta}.
\]
For the second term, the product and chain rules give
\[
\frac{d}{d\delta}\left[\pd{\vecf}{\vecu_1}\pd{v_2}{\vecx_2}\right]
=
\mpd{\vecf}{\vecu_1}{\delta}\pd{v_2}{\vecx_2}
+
\pd{\vecf}{\vecu_1}
\left[
\left(\frac{d\vecu_1^*(\delta)}{d\delta}\pd{\vecf}{\vecu_1}+\pd{\vecf}{\delta}\right)\pdd{v_2}{\vecx_2}
+\mpd{v_2}{\delta}{\vecx_2}
\right].
\]
Here the \(\pdd{\vecf}{\vecu_1}\) terms are zero because, from \eqref{eq:f_component_interval}, \(\vecf\) depends on \(\vecu_1\) at most linearly on each smooth piece. Combining the two cases yields \eqref{eq: der_of_fonc_interior}. \Halmos \end{proof}
 
\begin{proof}{Proof of Proposition \ref{thm: 2 class scheduling v_n der wrt delta}.}

    By Theorem \ref{thm:K_class optimal policy}, we know that we always assign enough capacity to clear class $1$ in the first period where this is possible. Thus, because $\delta > \tilde{\delta}$ for regions 2 and 3, we know that class $1$ always clears in the first period. By Theorem \ref{thm:K_class optimal policy}, we also know that the optimal policy for all subsequent periods is to assign just enough capacity  to class $1$ to keep it empty i.e. $u_i^{1,*} = \frac{\lambda^1}{\mu^1}$, giving the rest to class $2$. Thus, when class \(2\) remains positive after period \(1\), we have a closed form expression for \(v_2\):
    \begin{equation}\label{eq: v_n-1 expression delta large enough}
        v_2 = h^2 \int_0^{\sigma_2^2 \wedge (T - \delta)} x_2^2 + t (\lambda^2 - \mu^2 u_2^{2,*})dt.
    \end{equation} This expression can also be written out piecewise, getting rid of the minimum, depending on whether class \(2\) clears by \(T\):
    \begin{equation}\label{eq: v_{n-1} piecewise delta large enough}
        v_2 = \begin{cases}
            h^2 \int_0^{T - \delta} x_2^2 + t (\lambda^2 - \mu^2 u_2^{2,*})dt  & \text{class \(2\) does not clear by \(T\)},\\
            h^2 \int_0^{\sigma_2^2} x_2^2 + t (\lambda^2 - \mu^2 u_2^{2,*})dt & \text{class \(2\) clears by \(T\)}.
        \end{cases}
    \end{equation}

    First, we consider the regular cases where \(\sigma_1^1 \ne \delta\) and no positive queue clears exactly at a period endpoint. On these subregions, we use Theorem \ref{thm:delta_der_value_function} to evaluate this derivative. We deal with the two scenarios in which class \(2\) clears strictly before \(T\) or does not clear by \(T\) separately. Assume class \(2\) clears after period \(1\) and strictly before \(T\). Then using the definition of $\sigma_1^1$ from \eqref{eq:sigma_cleartime} we have,
        \begin{align*}
            g (\vecx_1, \vecu_1^*, \delta) 
            & = h^1\int_0^{\sigma_1^1} x_1^1 + t (\lambda^1 - \mu^1 u_1^{1,*}) dt + h^2\int_0^{\delta} x_1^2 + t (\lambda^2 - \mu^2 u_1^{2,*})dt \\
            & = h^1 \left( \frac{(x_1^1)^2}{2(\mu^1 u_1^{1,*} - \lambda^1) } \right) + h^2 \left( x_1^2 \delta + \frac{\delta^2}{2} (\lambda^2 - \mu^2 u_1^{2,*})\right), \\
            f (\vecx_1, \vecu_1^*, \delta) &= \left(
        0 ,  x_1^2 + \delta (\lambda^2 - \mu^2 u_1^{2,*})
    \right) ^\intercal , \\
    v_2(\vecx_2, \delta) 
         & = h^2 \int_0^{\sigma_2^2} x_2^2 + t (\lambda^2 - \mu^2 u_2^{2,*})dt
         = h^2 \left( \frac{(x_2^2)^2}{2(\mu^2 u_2^{2,*} - \lambda^2)} \right).
        \end{align*}
    Now, evaluating derivatives,
    \begin{align}
        \pd{g}{\delta} & = h^2 \left( x_1^2 + \delta(\lambda^2 - \mu^2 u_1^{2,*}) \right),\label{eq: dg/d delta, delta large enough}\\
        \pd{\vecf}{\delta} &  = \left( 
        0 , \lambda^2 - \mu^2 u_1^{2,*}
    \right)^\intercal, \label{eq: df/d delta, delta large enough} \\
    \pd{v_2}{\vecx_2} & = \left(
        0 ,  \frac{h^2 x_2^2}{\mu^2 u_2^{2,*} - \lambda^2 }
    \right)^\intercal, \label{eq: dv_{n-1}/x_{n-1}, class 2 clear} \\
        \pd{v_2}{\delta} & = 0 , \label{eq: dv_{n-1}/d delta, class 2 clear}
    \end{align}
    and so, \begin{equation*}
        \begin{aligned}
            \pd{}{\delta}v_1 
            & = h^2 \left( x_1^2 + \delta(\lambda^2 - \mu^2 u_1^{2,*}) \right) + \left( \lambda^2 - \mu^2 u_1^{2,*} \right)\left( h^2 \left( \frac{x_2^2}{\mu^2 u_2^{2,*} - \lambda^2 }\right) \right) \\
            & = h^2  x_2^2 + h^2 \left( \frac{x_2^2 \left( \lambda^2 - \mu^2 u_1^{2,*} \right)}{\mu^2 u_2^{2,*} - \lambda^2} \right) = h^2 x_2^2 \left( 1 + \frac{\lambda^2 - \mu^2 u_1^{2,*}}{\mu^2 u_2^{2,*} - \lambda^2}\right).
        \end{aligned}
    \end{equation*}
    Next, we deal with the case where class \(2\) does not clear by \(T\). Then \(g\) and \(\vecf\) remain unchanged, but by \eqref{eq: v_n-1 expression delta large enough}, \(v_2\) is given by,
    \begin{equation*}
        \begin{aligned}
            v_2 
            & = h^2 \int_0^{T - \delta} x_2^2 + t (\lambda^2 - \mu^2 u_2^{2,*})dt
            & = h^2 \left( x_2^2 (T - \delta) + \frac{(T - \delta)^2}{2} (\lambda^2 - \mu^2 u_2^{2,*}) \right).
        \end{aligned}
    \end{equation*}
    Thus, \begin{align}
        \pd{v_2}{\vecx_2} & = \left(
            0 , h^2 ( T - \delta)
        \right) ^ \intercal, \label{eq: dv_{n-1}/x_{n-1}, class 2 no clear} \\
        \pd{v_2}{\delta} & = h^2 \left(- x_2^2 - (T - \delta) (\lambda^2 - \mu^2 u_2^{2,*}) \right), \label{eq: dv_{n-1}/d delta, class 2 no clear} 
    \end{align}
    and so,
    \begin{equation*}
        \begin{aligned}
            \pd{}{\delta}v_1 
            & = h^2 \left( \left( x_1^2 + \delta(\lambda^2 - \mu^2 u_1^{2,*}) \right) + \left( \lambda^2 - \mu^2 u_1^{2,*} \right) \left(   T - \delta \right) +  \left(- x_2^2 - (T - \delta) (\lambda^2 - \mu^2 u_2^{2,*}) \right) \right) \\
            & = h^2 \left( x_2^2 - x_2^2 + (T - \delta)\left( \lambda^2 - \mu^2 u_1^{2,*} - \left( \lambda^2 - \mu^2 u_2^{2,*} \right) \right) \right) \\
            & = h^2 \left( (T - \delta)\mu^2(u_2^{2,*} - u_1^{2,*}) \right)\\
            & = h^2 \mu^2 (T - \delta) (u_1^{1,*} - \frac{\lambda^1}{\mu^1}).
        \end{aligned}
    \end{equation*}
Finally, suppose that \(\sigma_1^1=\delta\). We first show that
endpoint clearing cannot occur only at an isolated review length. By
Theorem \ref{thm:K_class optimal policy},
\begin{equation}\label{eq: u_n^1 delta = sigma1}
  u_1^{1,*}(\delta)
  =
  \bar u_1^1(\delta)
  =
  \frac{x_1^1}{\delta\mu^1}
  +
  \frac{\lambda^1}{\mu^1}.
\end{equation}
Let
\[
U:=1-\frac{\lambda^1}{\mu^1},
\qquad
B:=\mu^2U-\lambda^2.
\]
After class \(1\) clears, its maintenance allocation is
\(\lambda^1/\mu^1\), so class \(2\) receives \(U\). Under the
endpoint-clearing allocation in
\eqref{eq: u_n^1 delta = sigma1},
\[
x_2^2
=
x_1^2-\delta B+\frac{\mu^2}{\mu^1}x_1^1.
\]
Hence, when \(B>0\), the absolute time at which class \(2\) clears is
\[
\tau_2
:=
\delta+\frac{x_2^2}{B}
=
\frac{x_1^2+(\mu^2/\mu^1)x_1^1}{B},
\]
which is independent of \(\delta\). If \(B\leq0\), set
\(\tau_2:=\infty\), and define $\tau:=T\wedge\tau_2$.

Define the first-period Bellman objective along the simplex by
\[
\psi(u,\delta)
:=
q_1\!\left(
\vecx_1,(u,1-u)^\intercal,\delta
\right).
\]
At \(u=\bar u_1^1(\delta)\), a marginal increase in \(u\) makes class
\(1\) clear earlier but removes the same amount of capacity from class
\(2\). The resulting right derivative is
\[
\begin{aligned}
\partial_u^+\psi\!
\left(\bar u_1^1(\delta),\delta\right)
&=
h^2\mu^2
\left(
\int_0^\delta s\,ds
+
\int_\delta^\tau \delta\,ds
\right)
-
h^1\mu^1\int_0^\delta s\,ds\\
&=
h^2\mu^2\delta
\left(\tau-\frac{\delta}{2}\right)
-\frac{h^1\mu^1\delta^2}{2}\\
&=
\delta
\left(
h^2\mu^2\tau
-\frac{h^1\mu^1+h^2\mu^2}{2}\delta
\right).
\end{aligned}
\]
Because \(\psi(\cdot,\delta)\) is convex and every optimal allocation
satisfies \(u_1^{1,*}\geq\bar u_1^1(\delta)\), endpoint clearing is
optimal exactly when this right derivative is nonnegative. Equivalently,
\[
\delta
\leq
\delta_E
:=
\frac{2h^2\mu^2\tau}
{h^1\mu^1+h^2\mu^2}.
\]
Therefore, if endpoint clearing is optimal for any
\(\delta>\tilde{\delta}\), then it is optimal throughout $(\tilde{\delta},\delta_E]$.
In particular, it cannot occur only at an isolated review length.

  Because for \(\delta<\delta_E\) the endpoint-clearing identity holds on a
  neighborhood, we differentiate the allocation directly. At
  \(\delta=\delta_E\), the marginal condition is an equality. The
  endpoint and interior envelope derivatives agree at this cutoff, so the
  same first-derivative formula extends to \(\delta_E\).We have,
    \begin{align}
        u_1^{1,*}(\delta) = \frac{x_1^1}{\delta \mu^1} + \frac{\lambda^1}{\mu^1}.
    \end{align}
    Assume class \(2\) does not clear by \(T\). Then we have, 
    \begin{equation*}
        \begin{aligned}
            g (\vecx_1, \vecu_1^*, \delta) 
            & = h^1\int_0^{\sigma_1^1} x_1^1 + t (\lambda^1 - \mu^1 u_1^{1,*}) dt + h^2\int_0^{\delta} x_1^2 + t (\lambda^2 - \mu^2 u_1^{2,*})dt \\
            & = h^1 \left( \frac{(x_1^1)^2}{2(\mu^1 u_1^{1,*} - \lambda^1) } \right) + h^2 \left( x_1^2 \delta + \frac{\delta^2}{2} (\lambda^2 - \mu^2 u_1^{2,*})\right),
        \end{aligned}
    \end{equation*}
where we can use \eqref{eq: u_n^1 delta = sigma1} along with the fact that $u_1^{1,*} = 1 - u_1^{2,*}$ and $u_2^{2,*} = 1 - \frac{\lambda^1}{\mu^1}$ from Theorem \ref{thm:K_class optimal policy} to rewrite $g$ as,
\begin{equation*}
    \begin{aligned}
        g(\vecx_1 , \vecu_1^*, \delta)
        & = h^1 \left( \frac{x_1^1 \delta}{2} \right) + h^2\left( x_1^2 \delta + \frac{\delta^2}{2}\left(\lambda^2 - \mu^2 (1 - \frac{x_1^1}{\delta \mu^1} - \frac{\lambda^1}{\mu^1})\right) \right) \\
        & = h^1 \left( \frac{x_1^1 \delta}{2} \right) + h^2\left( x_1^2 \delta + \frac{\delta^2}{2}\left(\lambda^2 - \mu^2 u_2^{2,*} \right) + x_1^1\frac{\delta}{2}\left( \frac{ \mu^2}{\mu^1}\right)  \right) .
    \end{aligned}
\end{equation*} Using \eqref{eq: v_n-1 expression delta large enough}, we have that,
\begin{equation*}
        \begin{aligned}
            v_2 (\vecx_2, \delta)
            & = h^2 \int_0^{T - \delta} x_2^2 + t (\lambda^2 - \mu^2 u_2^{2,*})dt \\
            & = h^2 \left( x_2^2 (T - \delta) + \frac{(T - \delta)^2}{2} (\lambda^2 - \mu^2 u_2^{2,*}) \right),
        \end{aligned}
    \end{equation*} i.e.,
    \begin{equation}
        \begin{aligned}
            v_1 = 
            &  h^1 \left( \frac{x_1^1 \delta}{2} \right) + h^2\left( x_1^2 \delta + \frac{\delta^2}{2}\left(\lambda^2 - \mu^2 u_2^{2,*} \right) + x_1^1\frac{\delta}{2}\left( \frac{ \mu^2}{\mu^1}\right)  \right) \\
            & + h^2 \left( x_2^2 (T - \delta) + \frac{(T - \delta)^2}{2} (\lambda^2 - \mu^2 u_2^{2,*}) \right).
        \end{aligned}
    \end{equation} We also note that by \eqref{eq:f_component_+} that $x_2^2 = x_1^2  + \delta (\lambda^2 - \mu^2 u_1^{2,*}) $, and so similar to above, using \eqref{eq: u_n^1 delta = sigma1} along with the fact that $u_1^{1,*} = 1 - u_1^{2,*}$ and $u_2^{2,*} = 1 - \frac{\lambda^1}{\mu^1}$, we see that  $x_2^2 = x_1^2 + \delta \left( \lambda^2 + \frac{x_1^1}{\delta}\frac{\mu^2}{\mu^1} - \mu^2 u_2^{2,*} \right) $. We also observe that,
    \begin{equation}\label{eq: d/d delta x_n-1 delta = sigma}
        \frac{d x_2^2(\delta)}{d\delta} = \left( \lambda^2 + \frac{x_1^1}{\delta}\frac{\mu^2}{\mu^1} - \mu^2 u_2^{2,*} \right) + \delta \left( -\frac{x_1^1}{\delta^2}\frac{\mu^2}{\mu^1} \right) = \lambda^2 - \mu^2 u_2^{2,*}.
    \end{equation} 
    Thus, differentiating $v_1$ with respect to $\delta$ we see,
\begin{equation*}
    \begin{aligned}
        \pd{}{\delta}v_1 =
        & \frac{h^1 x_1^1}{2} + h^2 \left( x_1^2 + \delta(\lambda^2 - \mu^2 u_2^{2,*}) + \frac{x_1^1}{2}\frac{\mu^2}{\mu^1} \right)\\
        & + h^2 \left( \left(\lambda^2 - \mu^2 u_2^{2,*} \right) (T - \delta) - x_2^2 - (T - \delta)(\lambda^2 - \mu^2 u_2^{2,*}) \right) \\
        & = \frac{h^1 x_1^1}{2} + h^2\left( x_1^2 + \delta(\lambda^2 - \mu^2 u_2^{2,*}) + \frac{x_1^1}{2} \frac{\mu^2}{\mu^1} - x_2^2 \right)\\
        & = \frac{h^1 x_1^1}{2} + h^2\left( x_1^2 + \delta(\lambda^2 - \mu^2 u_2^{2,*}) + \frac{x_1^1}{2} \frac{\mu^2}{\mu^1} - x_1^2 - \delta \left( \lambda^2 + \frac{x_1^1}{\delta}\frac{\mu^2}{\mu^1} - \mu^2 u_2^{2,*} \right) \right)\\
        & = \frac{h^1 x_1^1}{2} + h^2\left(   \frac{x_1^1}{2} \frac{\mu^2}{\mu^1}  - \left(  x_1^1\frac{\mu^2}{\mu^1} \right) \right) = \frac{h^1 x_1^1}{2} - \frac{h^2 x_1^1}{2} \frac{\mu^2}{\mu^1}\\
          & = \frac{x_1^1(h^1 \mu^1 - h^2\mu^2)}{2\mu^1}.
    \end{aligned} 
\end{equation*}Next, we consider the case where class \(2\) clears after period \(1\) and by \(T\). \(g\) is unchanged, with \(v_2\) given by \eqref{eq: v_n-1 expression delta large enough} as,
\begin{equation*}
            v_2(\vecx_2, \delta) 
          = h^2 \int_0^{\sigma_2^2} x_2^2 + t (\lambda^2 - \mu^2 u_2^{2,*})dt
          = h^2 \left( \frac{(x_2^2)^2}{2(\mu^2 u_2^{2,*} - \lambda^2)} \right),
    \end{equation*} and so we have,
    \begin{equation*}
        \begin{aligned}
            v_1 = 
            &  h^1 \left( \frac{x_1^1 \delta}{2} \right) + h^2\left( x_1^2 \delta + \frac{\delta^2}{2}\left(\lambda^2 - \mu^2 u_2^{2,*} \right) + x_1^1\frac{\delta}{2}\left( \frac{ \mu^2}{\mu^1}\right)  \right) 
            & + h^2 \left( \frac{(x_2^2)^2}{2(\mu^2 u_2^{2,*} - \lambda^2)} \right).
        \end{aligned}
    \end{equation*}  Analogous to above, using \eqref{eq: u_n^1 delta = sigma1} along with the fact that $u_1^{1,*} = 1 - u_1^{2,*}$, $u_2^{2,*} = 1 - \frac{\lambda^1}{\mu^1}$, $x_2^2 = x_1^2 + \delta \left( \lambda^2 + \frac{x_1^1}{\delta}\frac{\mu^2}{\mu^1} - \mu^2 u_2^{2,*} \right) $ and \eqref{eq: d/d delta x_n-1 delta = sigma}, we can differentiate $v_1$ with respect to $\delta$ to get,
    \begin{equation*}
        \begin{aligned}
            \pd{}{\delta}v_1 = & \frac{h^1 x_1^1}{2} + h^2 \left( x_1^2 + \delta(\lambda^2 - \mu^2 u_2^{2,*}) + \frac{x_1^1}{2}\frac{\mu^2}{\mu^1} + \frac{x_2^2}{(\mu^2 u_2^{2,*} - \lambda^2)} ( \lambda^2 - \mu^2 u_2^{2,*} )\right)\\
            & =  \frac{h^1 x_1^1}{2} + h^2 \left( x_1^2 + \delta(\lambda^2 - \mu^2 u_2^{2,*}) + \frac{x_1^1}{2}\frac{\mu^2}{\mu^1} - \left( x_1^2 + \delta \left( \lambda^2 + \frac{x_1^1}{\delta}\frac{\mu^2}{\mu^1} - \mu^2 u_2^{2,*} \right) \right) \right) \\
            & =  \frac{h^1 x_1^1}{2} + h^2 \left( \frac{x_1^1}{2}\frac{\mu^2}{\mu^1} -  x_1^1\frac{\mu^2}{\mu^1}   \right) = \frac{h^1 x_1^1 }{2} - \frac{h^2 x_1^1}{2}\frac{\mu^2}{\mu^1}\\
            &  = \frac{x_1^1 (h^1 \mu^1 - h^2 \mu^2)}{2\mu^1}.
        \end{aligned}
    \end{equation*} The proof is complete. \Halmos \end{proof}

\begin{proof}{Proof of Theorem \ref{prop: 2 class first derivative non-negative}.}
    Using Proposition \ref{thm: 2 class scheduling v_n der wrt delta}, we first consider the case where class \(2\) does not clear by \(T\), with \(\sigma_1^1 \ne \delta\), i.e.,
     \begin{equation}
     \pd{}{\delta}v_1 = h^2 \mu^2 (T - \delta)\left(u_1^{1,*} - \frac{\lambda^1}{\mu^1}\right) .
     \end{equation} 
     Because $\delta > \tilde{\delta}$, we know that $u_1^{1,*}$ is enough to empty class $1$ before the end of the first period, i.e.,
    \begin{equation}
      x_1^1 + \delta(\lambda^1 - \mu^1 u_1^{1,*}) \leq 0 \Longrightarrow  \mu^1 u_1^{1,*} -\lambda^1    \geq  \frac{x_1^1}{\delta} \Longrightarrow u_1^{1,*} - \frac{\lambda^1}{\mu^1} \geq \frac{x_1^1}{\delta \mu^1} \geq 0,
   \end{equation}
     and, \begin{equation}
     h^2 \mu^2 (T - \delta)\left(u_1^{1,*} - \frac{\lambda^1}{\mu^1}\right) > 0. 
     \end{equation}
      We now consider the case where class \(2\) clears by \(T\), with \(\sigma_1^1 \ne \delta\), i.e.,
      \begin{equation}
          \pd{}{\delta}v_1 = h^2 x_2^2 \left(1 + \frac{\lambda^2 - \mu^2 u_1^{2,*}}{\mu^2 u_2^{2,*} - \lambda^2} \right).
      \end{equation}
     Because \(u_1^{1,*}>\lambda^1/\mu^1\), we know,
     \begin{equation}
        u_2^{2,*}=1-\frac{\lambda^1}{\mu^1}>1-u_1^{1,*}=u_1^{2,*}.
     \end{equation}
    Since class \(2\) clears after period \(1\), \(\mu^2u_2^{2,*}-\lambda^2>0\). Thus,
    \begin{equation}
        1 + \frac{\lambda^2-\mu^2u_1^{2,*}}{\mu^2u_2^{2,*}-\lambda^2}
        =
        \frac{\mu^2(u_2^{2,*}-u_1^{2,*})}{\mu^2u_2^{2,*}-\lambda^2}>0,
    \end{equation}
    i.e., \begin{equation}
          h^2 x_2^2 \left(1 + \frac{\lambda^2 - \mu^2 u_1^{2,*}}{\mu^2 u_2^{2,*} - \lambda^2} \right) > 0 .
    \end{equation}
    Finally if $\delta = \sigma_1^1$, by Proposition \ref{thm: 2 class scheduling v_n der wrt delta}, $\pd{}{\delta}v_1 = \frac{x_1^1(h^1 \mu^1 - h^2\mu^2)}{2\mu^1}$; which is a constant. By assumption $h^1\mu^1 > h^2 \mu^2$ and so this constant term is also positive. \Halmos \end{proof} %

\begin{proof}{Proof of Proposition \ref{thm: 2 class scheduling v_n second der wrt delta}.}
At an interior point of an interval on which \(\delta=\sigma_1^1\),
Proposition \ref{thm: 2 class scheduling v_n der wrt delta} gives a constant
first derivative, and hence \(\pdd{}{\delta}v_1=0\). We thus only consider regular subregions where \(\delta \ne \sigma_1^1\), \(\sigma_1^2>\delta\), \(\delta\ne\hat{\delta}\), and class \(2\) does not clear exactly at the terminal time. In such cases, we can evaluate \(\pdd{}{\delta}v_1\) using Theorem \ref{thm:delta_second_der_value_function}, which tells us that this second derivative is given by,
\begin{equation*}
    \begin{aligned}
        \pdd{}{\delta} v_1 = 
        &  \frac{d\vecu_1^*(\delta)}{d\delta} \mpd{g}{\delta}{\vecu_1} + \pdd{g}{\delta} + \left( \frac{d\vecu_1^*(\delta)}{d\delta} \mpd{\vecf}{\delta}{\vecu_1} + \pdd{\vecf}{\delta} \right) \pd{v_2}{\vecx_2}  \\
        & + \pd{\vecf}{\delta} \cdot \left(  \left( \frac{d\vecu_1^*(\delta)}{d\delta} \pd{\vecf}{\vecu_1} + \pd{\vecf}{\delta}\right)\pdd{v_2}{\vecx_2} + \mpd{v_2}{\vecx_2}{\delta} \right) \\
        & + \left( \frac{d\vecu_1^*(\delta)}{d\delta} \pd{\vecf}{\vecu_1} + \pd{\vecf}{\delta} \right) \mpd{v_2}{\delta}{\vecx_2}  + \pdd{v_2}{\delta}.
    \end{aligned}
\end{equation*}
We thus solve for each of these quantities. Analogous to the first derivatives, because \(\delta > \tilde{\delta}\), we know that class \(1\) empties during period \(1\), and so we have exact expressions for \(g\), \(\vecf\) and \(v_2\), where the expression for \(v_2\) depends on whether class \(2\) clears by \(T\); see \eqref{eq: v_{n-1} piecewise delta large enough}. First we deal with the case where class \(2\) clears after period \(1\) and strictly before \(T\). From \eqref{eq: dg/d delta, delta large enough}, \eqref{eq: df/d delta, delta large enough}, \eqref{eq: dv_{n-1}/x_{n-1}, class 2 clear} and \eqref{eq: dv_{n-1}/d delta, class 2 clear} in the proof of Proposition \ref{thm: 2 class scheduling v_n der wrt delta} we know,
\[ \pd{g}{\delta} = h^2 \left( x_1^2 + \delta(\lambda^2 - \mu^2 u_1^{2,*}) \right), \]
\[ \pd{\vecf}{\delta} = \left(
        0 , \lambda^2 - \mu^2 u_1^{2,*}
    \right), \]
\[ \pd{v_2}{\vecx_2} = \left(
        0 ,   \frac{h^2 x_2^2}{\mu^2 u_2^{2,*} - \lambda^2 }
    \right)^\intercal,  \] 
\[ \pd{v_2}{\delta} = 0 . \] We then observe that, 

\begin{equation}
    \mpd{g}{\delta}{\vecu_1} = \left(
        0 , -h^2 \delta \mu^2 
    \right) ^\intercal ,
\end{equation}
\begin{equation}
    \pdd{g}{\delta} = h^2 (\lambda^2 - \mu^2 u_1^{2,*})  ,
\end{equation}
\begin{equation}
    \mpd{\vecf}{\delta}{\vecu_1} = \begin{bmatrix}
        0 & 0 \\
        0 & -\mu^2 
    \end{bmatrix} ,
\end{equation}
\begin{equation}
    \pdd{\vecf}{\delta} = \left(
        0 , 0 
    \right) ,
\end{equation}
\begin{equation}
    \pdd{v_2}{\vecx_2} = \begin{bmatrix}
        0 & 0 \\
        0 & \frac{h^2}{\mu^2 u_2^{2,*} - \lambda^2}
    \end{bmatrix},
\end{equation}
\begin{equation}
    \mpd{v_2}{\vecx_2}{\delta} = \left(
        0 , 0 
    \right) ,
\end{equation}
\begin{equation}
    \mpd{v_2}{\delta} {\vecx_2}= \left(
        0 , 0 
    \right) ^\intercal,
\end{equation} 
\begin{equation}
    \pdd{v_2}{\delta} = 0 .
\end{equation}
We thus have,
\begin{align}
    \pdd{}{\delta} v_1 
    & = -h^2 \delta \mu^2 \frac{d u_1^{2,*}(\delta)}{d\delta} + h^2 (\lambda^2 - \mu^2 u_1^{2,*}) -h^2 \mu^2 \frac{d u_1^{2,*}(\delta)}{d\delta} \left( \frac{x_2^2}{\mu^2 u_2^{2,*} - \lambda^2} \right) \nonumber \\
    & \quad  + h^2(\lambda^2 - \mu^2 u_1^{2,*} ) \left( - \frac{d u_1^{2,*}(\delta)}{d\delta} \delta \mu^2 \frac{1}{\mu^2 u_2^{2,*} - \lambda^2} + (\lambda^2 - \mu^2 u_1^{2,*})\frac{1}{\mu^2 u_2^{2,*} - \lambda^2}\right) \nonumber  \\
    & = h^2 \frac{d u_1^{2,*}(\delta)}{d\delta} \left( -\delta \mu^2 - \mu^2 \left( \frac{x_2^2}{\mu^2 u_2^{2,*} - \lambda^2}\right) - \delta \mu^2 (\lambda^2 - \mu^2 u_1^{2,*}) \frac{1}{\mu^2 u_2^{2,*} - \lambda^2}\right) \nonumber  \\
    & \quad + h^2 \left( (\lambda^2 - \mu^2 u_1^{2,*}) + \frac{(\lambda^2 - \mu^2 u_1^{2,*})^2}{\mu^2 u_2^{2,*} - \lambda^2} \right) \nonumber  \\
    & = - h^2 \frac{d u_1^{2,*}(\delta)}{d\delta} \mu^2 \left( \delta  +   \frac{x_2^2 + \delta  (\lambda^2 - \mu^2 u_1^{2,*})}{\mu^2 u_2^{2,*} - \lambda^2} \right)
     + h^2 (\lambda^2 - \mu^2 u_1^{2,*}) \left( 1 + \frac{\lambda^2 - \mu^2 u_1^{2,*}}{\mu^2 u_2^{2,*} - \lambda^2} \right), 
    \end{align}
as needed. Next, we consider the case where class \(2\) does not clear by \(T\). Although \(g\) and \(\vecf\) are the same as above, now \(v_2\) explicitly depends on \(\delta\). Once again, from the proof of Proposition \ref{thm: 2 class scheduling v_n der wrt delta} we have \eqref{eq: dv_{n-1}/x_{n-1}, class 2 no clear} and \eqref{eq: dv_{n-1}/d delta, class 2 no clear},
\[ \pd{v_2}{\vecx_2} = \left(
            0 , h^2 ( T - \delta)
        \right) ^ \intercal, \]
\[ \pd{v_2}{\delta} = h^2 \left(- x_2^2 - (T - \delta) (\lambda^2 - \mu^2 u_2^{2,*}) \right),\] 
and so we have,
\begin{equation}
    \pdd{v_2}{\vecx_2} = \begin{bmatrix}
        0 & 0 \\
        0 & 0
    \end{bmatrix},
\end{equation}

\begin{align}
\frac{\partial^2 v_2}{\partial \delta\,\partial \mathbf{x}_2}
&=
\begin{bmatrix}
0 & -h^2
\end{bmatrix},\\
\frac{\partial^2 v_2}{\partial \mathbf{x}_2\,\partial \delta}
&=
\begin{bmatrix}
0\\
-h^2
\end{bmatrix},\\
\pdd{v_2}{\delta} &= h^2 (\lambda^2 - \mu^2 u_2^{2,*}) .
\end{align}
 Once again plugging everything into \eqref{eq:delta_second_der_value_function} we have,
\begin{align}
        \pdd{}{\delta}v_1 
        & = -h^2 \delta \mu^2 \frac{d u_1^{2,*}(\delta)}{d\delta}
        + h^2 (\lambda^2 - \mu^2 u_1^{2,*}) \nonumber  \\
        & \quad + \left(- \frac{d u_1^{2,*}(\delta)}{d\delta} \mu^2 \right)
        \left( h^2 (T - \delta)\right)
        - h^2 (\lambda^2 - \mu^2 u_1^{2,*})  \nonumber  \\
        & \quad + \left( -\frac{d u_1^{2,*}(\delta)}{d\delta}  \delta \mu^2
        + (\lambda^2 - \mu^2 u_1^{2,*}) \right)(-h^2) \nonumber  \\
        & \quad + h^2 (\lambda^2 - \mu^2 u_2^{2,*}) \nonumber  \\
        & = h^2 \frac{d u_1^{2,*}(\delta)}{d\delta} \left( -\delta \mu^2 -\mu^2 (T - \delta) + \delta \mu^2  \right) \nonumber  \\
        & \quad + h^2 \left( \lambda^2 - \mu^2 u_1^{2,*}
        - (\lambda^2 - \mu^2 u_1^{2,*}) \right) \nonumber \\
        & \quad + h^2 \left( -(\lambda^2 - \mu^2 u_1^{2,*})
        + (\lambda^2 - \mu^2 u_2^{2,*})\right) \nonumber \\
        & = -h^2\mu^2\frac{d u_1^{2,*}(\delta)}{d\delta} (T - \delta) + h^2 (\mu^2 u_1^{2,*} - \mu^2 u_2^{2,*}),
    \end{align} as claimed. \Halmos \end{proof}

\begin{proof}{Proof of Theorem \ref{prop: Region 3 concavity}.}
    The general expression for $\pdd{}{\delta}v_1$ in Proposition \ref{thm: 2 class scheduling v_n second der wrt delta} depends explicitly on $d u_1^{2,*}(\delta)/d\delta$. Because $\delta > \hat{\delta}$, by definition of this region we have \(u_1^{1,*},u_1^{2,*}\in(0,1)\). Hence \(d u_1^{1,*}(\delta)/d\delta+d u_1^{2,*}(\delta)/d\delta=0\), and we can use Lemma \ref{lem:kkt_differentiated_stationarity}, specifically \eqref{eq: der_of_fonc_interior}, to solve for $d u_1^{2,*}(\delta)/d\delta$. Because we are only working in regimes with $u_1^{1,*}, u_1^{2,*} \in (0, 1)$, by \eqref{eq:KKT Complementary Slackness} we know $\vecgamma = \veczero$, for all $\delta$ values considered. Thus, \eqref{eq: der_of_fonc_interior} reduces to,
    \[ \frac{d\vecu_1^*(\delta)}{d\delta} \pdd{g}{\vecu_1} + \mpd{g}{\vecu_1}{\delta} + \mpd{\vecf}{\vecu_1}{\delta}\pd{v_2}{\vecx_2} + \pd{\vecf}{\vecu_1} \left( \left( \frac{d\vecu_1^*(\delta)}{d\delta} \pd{\vecf}{\vecu_1} + \pd{\vecf}{\delta}\right)\pdd{v_2}{\vecx_2} + \mpd{v_2}{\delta}{\vecx_2} \right) = \frac{d\beta(\delta)}{d\delta}\mathbf{e}.
    \] Analogous to the proofs of Propositions \ref{thm: 2 class scheduling v_n der wrt delta} and \ref{thm: 2 class scheduling v_n second der wrt delta}, we consider the two cases of class \(2\) clearing separately. First we deal with the case where class \(2\) clears after period \(1\) and strictly before \(T\). 
    From the proof of Proposition \ref{thm: 2 class scheduling v_n second der wrt delta}, we have,
    \begin{align}
        \frac{d\vecu_1^{*}(\delta)}{d\delta} & = \left(
            \frac{d u_1^{1,*}(\delta)}{d\delta} , \frac{d u_1^{2,*}(\delta)}{d\delta}
        \right) ^\intercal, \\
        \pdd{g}{\vecu_1} & = \begin{bmatrix}
\dfrac{
h^1 (x_1^1)^2 (\mu^1)^2
}{
\left(\mu^1 u_1^{1,*}-\lambda^1\right)^3
}
& 0
\\
0 & 0
\end{bmatrix}, \\
        \mpd{g}{\vecu_1}{\delta} & = \left(
            0 , -h^2 \delta \mu^2
        \right)^\intercal, \\
        \pd{\vecf}{\vecu_1} & = \begin{bmatrix}
            0 & 0 \\ 0 & -\mu^2 \delta
        \end{bmatrix} , \\
        \mpd{\vecf}{\vecu_1}{\delta} & = \begin{bmatrix}
            0 & 0 \\ 0 & -\mu^2
        \end{bmatrix}, \\
        \pd{v_2}{\vecx_2} & = \begin{bmatrix}
            0 & h^2 \frac{x_2^2}{\mu^2 u_2^{2,*} - \lambda^2}
        \end{bmatrix},\\
        \pdd{v_2}{\vecx_2} & = \begin{bmatrix}
            0 & 0 \\ 0 & h^2\frac{1}{\mu^2 u_2^{2,*} - \lambda^2}
        \end{bmatrix}, \\
        \mpd{v_2}{\delta}{\vecx_2} & = \left(
            0 , 0
        \right) ^\intercal ,
    \end{align}
     Thus,
     \small
    \begin{align}\label{eq: two class u derivative unsimplified}
            \frac{d\beta(\delta)}{d\delta}\mathbf{e}
            = \left( 
                \frac{h^1 (x_1^1)^2 (\mu^1)^2 \left(\frac{d u_1^{1,*}(\delta)}{d\delta}\right)}{(\mu^1 u_1^{1,*} - \lambda^1)^3} , -h^2 \delta \mu^2 - \frac{h^2 \mu^2}{\mu^2 u_2^{2,*} - \lambda^2}\left( x_2^2 + \delta \left( \lambda^2 - \mu^2 u_1^{2,*} - \delta \mu^2 \frac{d u_1^{2,*}(\delta)}{d\delta}\right) \right)
            \right) ^ \intercal .
    \end{align}
    \normalsize
    Because,
    \(\frac{d\beta(\delta)}{d\delta}\mathbf{e}
    = \left(
        \frac{d\beta(\delta)}{d\delta},\,
        \frac{d\beta(\delta)}{d\delta}
    \right)^\intercal\), the two components of \eqref{eq: two class u derivative unsimplified} must be equal. Thus,
    \small
    \begin{align}
            \frac{h^1 (x_1^1)^2 (\mu^1)^2 \left(\frac{d u_1^{1,*}(\delta)}{d\delta}\right)}{(\mu^1 u_1^{1,*} - \lambda^1)^3}
            & = -h^2 \delta \mu^2 - \frac{h^2 \mu^2}{\mu^2 u_2^{2,*} - \lambda^2}\left( x_2^2 + \delta \left( \lambda^2 - \mu^2 u_1^{2,*} - \delta \mu^2 \frac{d u_1^{2,*}(\delta)}{d\delta}\right) \right)\nonumber  \\
            & = -h^2 \mu^2 \left( \delta + \frac{x_2^2 + \delta(\lambda^2 - \mu^2 u_1^{2,*}) - \delta^2 \mu^2 \frac{d u_1^{2,*}(\delta)}{d\delta}}{\mu^2 u_2^{2,*} - \lambda^2}
            \right) \nonumber  \\
            & = -h^2 \mu^2 \left( \frac{\delta(\mu^2 u_2^{2,*} - \lambda^2) + x_2^2 + \delta (\lambda^2 - \mu^2 u_1^{2,*})}{\mu^2 u_2^{2,*} - \lambda^2} \right) + h^2 \left(\frac{\delta^2 (\mu^2)^2 \frac{d u_1^{2,*}(\delta)}{d\delta}}{\mu^2 u_2^{2,*} - \lambda^2}\right).
         \end{align}
    \normalsize
         Rearranging we see,
    \small
    \begin{equation}
        \frac{h^1 (x_1^1)^2 (\mu^1)^2 \left(\frac{d u_1^{1,*}(\delta)}{d\delta}\right)}{(\mu^1 u_1^{1,*} - \lambda^1)^3}  - h^2 \left(\frac{\delta^2 (\mu^2)^2 \frac{d u_1^{2,*}(\delta)}{d\delta}}{\mu^2 u_2^{2,*} - \lambda^2}\right) = -h^2 \mu^2 \left( \frac{\delta(\mu^2 u_2^{2,*} - \lambda^2) + x_2^2 + \delta (\lambda^2 - \mu^2 u_1^{2,*})}{\mu^2 u_2^{2,*} - \lambda^2} \right).
    \end{equation} 
    \normalsize Using $u_1^{2,*} = 1 - u_1^{1,*}$ and $d u_1^{2,*}(\delta)/d\delta=-d u_1^{1,*}(\delta)/d\delta$, we can simplify our expression,
    \begin{equation}
        \frac{h^1 (x_1^1)^2 (\mu^1)^2 \left(\frac{d u_1^{1,*}(\delta)}{d\delta}\right)}{(\mu^1 u_1^{1,*} - \lambda^1)^3}  + h^2 \left(\frac{\delta^2 (\mu^2)^2 \frac{d u_1^{1,*}(\delta)}{d\delta}}{\mu^2 u_2^{2,*}- \lambda^2}\right) = -h^2 \mu^2 \left( \frac{ \delta \mu^2 (u_2^{2,*} - u_1^{2,*}) + x_2^2}{\mu^2 u_2^{2,*} - \lambda^2} \right),
    \end{equation}
    where we recognize that the left hand side can be written out as,
    \[\left( \frac{h^1 (x_1^1)^2 (\mu^1)^2 (\mu^2 u_2^{2,*} - \lambda^2) + h^2 \delta^2 \mu^2 (\mu^1 u_1^{1,*} - \lambda^1)^3  }{(\mu^1 u_1^{1,*} - \lambda^1)^3 (\mu^2 u_2^{2,*} - \lambda^2)} \right) \left(\frac{d u_1^{1,*}(\delta)}{d\delta}\right) , \] i.e. our full expression can be written as,
    \begin{equation}\label{eq: 2 class; du / d delta}
        \frac{d u_1^{1,*}(\delta)}{d\delta} = \frac{-h^2 \mu^2 (\mu^1 u_1^{1,*} - \lamda^1)^3 \left( \delta \mu^2 (u_2^{2,*} - u_1^{2,*})+ x_2^2 \right)}{h^1 (x_1^1)^2 (\mu^1)^2 (\mu^2 u_2^{2,*} - \lambda^2 ) + h^2 \delta^2 (\mu^2)^2 (\mu^1 u_1^{1,*} - \lambda^1)^3} = -\frac{d u_1^{2,*}(\delta)}{d\delta}.
    \end{equation}

    The expressions above depend on the rates of changes of classes $1$ and $2$ in the first period, as well as the rate of change of class $2$ from the second period onwards. We introduce some temporary notation to simplify some of the algebra, and to make use of known information about how these rates vary. We define $A$ as the rate of change of class $2$ during the first period, $B$ the rate of change of class $2$ for all subsequent periods, and $C$ as the rate of change for class $1$ during the first period. We then have,
    \begin{equation}\label{eq: A def}
        A = \mu^2 u_1^{2,*} - \lambda^2 ,
    \end{equation}
    \begin{equation}\label{eq: B def}
        B = \mu^2 u_2^{2,*} - \lambda^2 ,
    \end{equation}
    \begin{equation}\label{eq: C def}
        C = \mu^1 u_1^{1,*} - \lambda^1 .
    \end{equation}
We can now derive inequalities for $A$, $B$ and $C$. First, as $\delta > \hat{\delta}$, we know class $1$ clears out in the first period and so $C = \mu^1 u_1^{1,*} - \lambda^1 > 0$. By Theorem \ref{thm:K_class optimal policy}, we also know $u_1^{2,*} = 1 - u_1^{1,*}$ and $u_2^{2,*} = 1 - \frac{\lambda^1}{\mu^1}$. Thus, we have,
\begin{equation}\label{eq: B - A}
    B - A = \mu^2 \left(1 - \frac{\lambda^1}{\mu^1}\right) - \mu^2\left( 1 - u_1^{1,*}\right) = \mu^2\left(u_1^{1,*}-\frac{\lambda^1}{\mu^1}\right) = \frac{\mu^2}{\mu^1} C > 0.
\end{equation} 
Because class $2$ eventually clears, we know $\mu^2 u_2^{2,*} - \lambda^2 > 0$.

We can now rewrite our expressions for \(\pdd{}{\delta}v_1\) and \(d u_1^{2,*}(\delta)/d\delta\), for the case where class \(2\) clears after period \(1\) and strictly before \(T\) as,
\begin{equation}\label{eq: pdd v_n delta A,B,C}
    \begin{aligned}
    \pdd{}{\delta}v_1
    & = -h^2 \frac{d u_1^{2,*}(\delta)}{d\delta}\mu^2 \left( \delta + \frac{x_2^2 - A\delta}{B} \right) - h^2 A \left(1 -\frac{A}{B}\right) \\
    & = -h^2 \frac{d u_1^{2,*}(\delta)}{d\delta} \mu^2 \left(\frac{x_2^2 + (B - A)\delta}{B} \right) - h^2 A \left( \frac{B - A}{B}\right),
    \end{aligned}
\end{equation}
\begin{equation}\label{eq: pd u_n^2 delta A,B,C}
    \begin{aligned}
    \frac{d u_1^{2,*}(\delta)}{d\delta}
    & = \frac{h^2 \mu^2 C^3 \left( \delta (B - A) + x_2^2 \right)}{h^1 (x_1^1)^2 (\mu^1)^2 B + h^2 \delta^2 (\mu^2)^2 C^3} \\
    & =    \frac{C^3 \left( \delta (B - A) + x_2^2 \right)}{\frac{h^1 \mu^1}{h^2 \mu^2}  (x_1^1)^2 \mu^1 B + \delta^2 \mu^2 C^3}.
    \end{aligned}
\end{equation}
Because $u_1^{1,*}, u_1^{2,*} \in (0, 1)$, we can use the cancellation identity \eqref{eq: Vector FONC} from Lemma \ref{lem:kkt_cancellation_regular_regime} to solve for $\frac{h^1 \mu^1}{h^2 \mu^2}$ and find,
\begin{equation}\label{eq: h1mu1 h2mu2 2 class interior sol}
    \frac{h^1 \mu^1}{h^2 \mu^2} = \frac{\delta C^2 (\delta B + 2 x_2^2)}{(x_1^1)^2 B},    
\end{equation} which lets us further rewrite \eqref{eq: pd u_n^2 delta A,B,C} as,
\begin{equation} \label{eq: eq: pd u_n^2 delta A,B,C h1mu1 h2mu2}
    \begin{aligned}
    \frac{d u_1^{2,*}(\delta)}{d\delta}
    & = \frac{C^3 \left( \delta (B - A) + x_2^2 \right)}{\delta C^2 (\delta B + 2 x_2^2)\mu^1 + \delta^2 \mu^2 C^3} \\
    & = \frac{C \left( \delta (B - A) + x_2^2 \right)}{\delta (\delta B + 2 x_2^2)\mu^1 + \delta^2  C \mu^2 }    . 
    \end{aligned}
\end{equation}
By \eqref{eq: B def}, \eqref{eq: C def} and \eqref{eq: B - A}, $d u_1^{2,*}(\delta)/d\delta$ is always positive. We notice that $A$ can be both non-negative, or negative. $A$ being non-negative means that the optimal allocation leads to class $2$ either decreasing, or staying the same during the first period, while $A$ negative means that class $2$ builds up during the first period. We deal with these two cases separately. First, we consider $A \geq 0$. Because $d u_1^{2,*}(\delta)/d\delta$, $B - A$ and $B$ are all positive, it follows that \eqref{eq: pdd v_n delta A,B,C} is negative. We now consider the case with $A < 0$. We thus seek to prove,
\begin{equation*}
    -h^2 \frac{d u_1^{2,*}(\delta)}{d\delta} \mu^2 \left(\frac{x_2^2 + (B - A)\delta}{B} \right) - h^2 A \left( \frac{B - A}{B}\right) < 0 .
\end{equation*} 
Rearranging, dividing both sides by $h^2$ and multiplying both sides by $B$ (which maintains the direction of the inequality as $h^2$ and $B$ are positive), we equivalently need to show,
\begin{equation*}
     -A \left( B - A \right) < \frac{d u_1^{2,*}(\delta)}{d\delta} \mu^2 \left( x_2^2 + (B - A)\delta \right) .
\end{equation*} Plugging in \eqref{eq: eq: pd u_n^2 delta A,B,C h1mu1 h2mu2}, the inequality can be written as,
\begin{equation*}
    -A \left( B - A \right) < \frac{C\mu^2 \left( x_2^2 + (B - A)\delta \right)^2}{\delta (\delta B + 2 x_2^2)\mu^1 + \delta^2 C \mu^2} =  \frac{ C\mu^2 \delta^2 \left( \frac{x_2^2}{\delta} + B - A\right)^2 }{\delta^2 \left( B+ 2\frac{x_2^2}{\delta} \right) \mu^1 + \delta^2 C \mu^2},
\end{equation*} i.e., we need to show,
\begin{equation*}
    -A \left( B - A \right) <\frac{ C\mu^2  \left( \frac{x_2^2}{\delta} + B - A\right)^2 }{\left( B+ 2\frac{x_2^2}{\delta} \right) \mu^1 + C \mu^2}.
\end{equation*} Using \eqref{eq: B - A}, we can rewrite the left hand side as,
\begin{equation*}
    C \frac{\mu^2}{\mu^1} (-A) < \frac{ C\mu^2  \left( \frac{x_2^2}{\delta} + B - A\right)^2 }{\left( B+ 2\frac{x_2^2}{\delta} \right) \mu^1 + C\mu^2},
\end{equation*} i.e. we need to show,
\begin{equation*}
    -A < \frac{ \left( \frac{x_2^2}{\delta} + B - A\right)^2 }{\left( B+ 2\frac{x_2^2}{\delta} \right)  + C \frac{\mu^2}{\mu^1}} = \frac{ \left( \frac{x_2^2}{\delta} + B - A\right)^2 }{\left( B+ 2\frac{x_2^2}{\delta} \right)  + B - A} =   \frac{ \left( \frac{x_2^2}{\delta} + B - A\right)^2 }{2B - A +  2\frac{x_2^2}{\delta} },
\end{equation*} where we once again used \eqref{eq: B - A} on the right hand side. Now, because the denominator is a positive quantity, we can multiply through while maintaining the direction of the inequality to see that we need to prove,
\begin{equation*}
     \left( \frac{x_2^2}{\delta} + B - A\right)^2 + A \left( 2B - A + 2 \frac{x_2^2}{\delta} \right) > 0,
\end{equation*} which is equivalent to,
\begin{equation*}
    \left( \frac{x_2^2}{\delta} \right)^2 + B^2 + A^2 + 2 \frac{x_2^2}{\delta} B - 2 \frac{x_2^2}{\delta} A - 2AB + 2AB - A^2 + 2 \frac{x_2^2}{\delta} A > 0,
\end{equation*} which is the same as showing,
\begin{equation*}
     \left( \frac{x_2^2}{\delta} \right)^2 + B^2 +  2 \frac{x_2^2}{\delta} B  = \left( B + \frac{x_2^2}{\delta} \right)^2 > 0,
\end{equation*} which clearly holds as both quantities are non-zero by assumption.

We can now deal with the case where class \(2\) does not clear by \(T\). We once again use \eqref{eq: der_of_fonc_interior} from Lemma \ref{lem:kkt_differentiated_stationarity} to solve for \(d u_1^{2,*}(\delta)/d\delta\). We have then that,
\begin{equation*}
    \frac{d u_1^{2,*}(\delta)}{d\delta} = \frac{h^2 \mu^2 (T - \delta) (\mu^1 u_1^{1,*} - \lambda^1)^3}{h^1 (x_1^1)^2 (\mu^1)^2}.
\end{equation*} As $T > \delta$ and class $1$ clears in the first period i.e. $\mu^1 u_1^{1,*} - \lambda^1 > 0$, we see that $d u_1^{2,*}(\delta)/d\delta>0$. It follows then that,
\begin{equation*}
    \pdd{}{\delta}v_1 = -h^2\mu^2 \left( (T - \delta) \frac{d u_1^{2,*}(\delta)}{d\delta} + \left( u_1^{1,*} - \frac{\lambda^1}{\mu^1} \right) \right)<0.
\end{equation*} The proof is complete. \Halmos\end{proof}

\subsection{Proofs for Theorem \ref{prop:K_class_eventual_shape}}

If \(\delta\) is large enough such that classes \(1,\dots,K-1\) empty in the first period, by Theorem \ref{thm:K_class optimal policy} we know that for all future periods \(i\geq2\),
\[
    u_i^{k,*}=\frac{\lambda^k}{\mu^k},\quad k\in\{1,\dots,K-1\},
    \qquad
    u_i^{K,*}=1-\sum_{k=1}^{K-1}\frac{\lambda^k}{\mu^k}.
\]
Assuming that class \(K\) does not clear in the first period, we have,
\begin{equation}\label{eq: g K class}
    g(\vecx_1, \vecu_1^*, \delta) = \sum_{k = 1}^{K-1} \left( h^k\int_0^{\sigma_{1}^{k}}x_1^k + s(\lambda^k - \mu^k u_1^{k,*})ds \right) + h^K \int_0^\delta x_1^{K} + s(\lambda^K - \mu^K u_1^{K,*})ds,
\end{equation}
\begin{equation}\label{eq: f K class}
    \vecf(\vecx_1, \vecu_1^*, \delta) = \left(
        0 , \dots ,0, x_1^K + \delta (\lambda^K - \mu^K u_1^{K,*})
   \right) ^\intercal ,
\end{equation}
with,
\begin{equation}\label{eq v n - 1 K class}
    v_2(\vecx_2, \delta) = 
    \begin{cases}
        h^K \int_0^{T - \delta} x_2^K + s(\lambda^K - \mu^K u_2^{K,*})ds, & \text{class \(K\) does not clear by \(T\)},\\
        h^K \int_0^{\sigma_2^K} x_2^K + s(\lambda^K - \mu^K u_2^{K,*})ds, & \text{class \(K\) clears by \(T\)}.
    \end{cases}
\end{equation}

The expressions for \(g\), \(\vecf\), and \(v_2\) are analogous to the two-class case with \(\delta>\tilde{\delta}\). Indeed, for \(K=2\), the condition that classes \(1,\dots,K-1\) clear during the first period is exactly the condition that class \(1\) clears during the first period. The next result presents the corresponding first- and second-derivative formulas in this multiclass regime. %

\begin{proposition}\label{thm: K class scheduling first der}
For the \(K\)-class scheduling problem, assume that
$\sigma_1^k<\delta$ for $k \in \{1,\dots,K-1\}$ and 
$\sigma_1^K>\delta$. We have,
\begin{equation}
    \pd{}{\delta}v_1 = 
\begin{cases}
    h^K \mu^K (T - \delta) (u_2^{K,*} - u_1^{K,*}), & \sigma_2^K>T-\delta,\\
        h^K x_2^K \left(1 + \frac{\lambda^K - \mu^K u_1^{K,*}}{\mu^K u_2^{K,*} - \lambda^K }\right), & \sigma_2^K\le T-\delta.
\end{cases}    
\end{equation}
If, in addition, \(\sigma_2^K\ne T-\delta\), then
\begin{equation}
  \begin{aligned}
    \pdd{}{\delta}v_1
    &=
    \begin{cases}
        h^K \mu^K \left( (\delta - T)\frac{d u_1^{K,*}(\delta)}{d\delta}
        - \left(u_2^{K,*} - u_1^{K,*} \right) \right),
        & \sigma_2^K>T-\delta, \\
        \begin{aligned}
        h^K\Bigg[&
        (\lambda^K - \mu^K u_1^{K,*})
        \left( 1 + \dfrac{\lambda^K - \mu^K u_1^{K,*}}{\mu^K u_2^{K,*} - \lambda^K} \right) \\
        &-\frac{d u_1^{K,*}(\delta)}{d\delta} \mu^K
        \left( \delta+\dfrac{x_2^K+\delta(\lambda^K - \mu^K u_1^{K,*})}
        {\mu^K u_2^{K,*} - \lambda^K} \right)
        \Bigg],
        \end{aligned}
        & \sigma_2^K<T-\delta.
    \end{cases}
  \end{aligned}
\end{equation}
\end{proposition}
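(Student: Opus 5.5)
We mirror the two-class computations in the proofs of Propositions \ref{thm: 2 class scheduling v_n der wrt delta} and \ref{thm: 2 class scheduling v_n second der wrt delta}, replacing class \(2\) by class \(K\) and classes \(1\) by the block \(\{1,\dots,K-1\}\). Under the stated assumptions, the first period is regular: each class \(k\le K-1\) clears strictly inside the period, so \(R_1^{k,*}<0\), and class \(K\) stays positive, so \(R_1^{K,*}>0\). In period \(2\), classes \(1,\dots,K-1\) are maintained empty, and Theorem \ref{thm:K_class optimal policy} gives the continuation allocation \(u_2^{K,*}=1-\sum_{k<K}\lambda^k/\mu^k\). Hence \eqref{eq: g K class}, \eqref{eq: f K class} and \eqref{eq v n - 1 K class} hold, and \(v_2\) depends only on \((x_2^K,\delta)\).

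\textbf{First derivative.} We apply Theorem \ref{thm:delta_der_value_function}. The key observation is that \(\partial g/\partial\delta=h^K x_2^K\), because the cleared classes contribute \(h^k\int_0^{\sigma_1^k}\), which does not depend on \(\delta\). Similarly, \(\partial\vecf/\partial\delta=(0,\dots,0,\lambda^K-\mu^Ku_1^{K,*})^\intercal\).

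\begin{itemize}
\item \emph{Class \(K\) does not clear by \(T\).} Here \(\partial v_2/\partial x_2^K=h^K(T-\delta)\) and \(\partial v_2/\partial\delta=-h^K\bigl(x_2^K+(T-\delta)(\lambda^K-\mu^Ku_2^{K,*})\bigr)\). Summing the terms cancels the \(x_2^K\) contributions and leaves \(h^K\mu^K(T-\delta)(u_2^{K,*}-u_1^{K,*})\).
\item \emph{Class \(K\) clears by \(T\).} Here \(v_2=h^K(x_2^K)^2/(2B)\) with \(B=\mu^Ku_2^{K,*}-\lambda^K\), and \(\partial v_2/\partial\delta=0\). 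This gives the second formula.
\end{itemize}

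The boundary value \(\sigma_2^K=T-\delta\) is included in the second case. At that point both expressions for \(v_2\) agree to first order in \((x_2^K,\delta)\), since \(h^K\sigma_2^K=h^K(T-\delta)\) and the \(\delta\)-partial vanishes there. So the one-sided derivatives coincide, in the same way that \(g\) is \(C^1\) at clearing boundaries in Proposition \ref{prop: convexity of g and f}.

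\textbf{Second derivative.} For \(\sigma_2^K\ne T-\delta\) we use Theorem \ref{thm:delta_second_der_value_function}. All mixed partials involving classes \(k<K\) vanish in the relevant products, for three reasons: \(\partial^2 g/\partial\delta\partial u^k=0\) for \(k<K\), because those terms are \(\delta\)-free; \(\partial\vecf/\partial\delta\) is supported on coordinate \(K\); and \(v_2\) depends only on \(x_2^K\). So only the \(K\)-th component of \(d\vecu_1^*/d\delta\) survives. The remaining quantities are
\[
\partial^2g/\partial\delta\partial u^K=-h^K\delta\mu^K,\qquad
\partial^2g/\partial\delta^2=h^K(\lambda^K-\mu^Ku_1^{K,*}),\qquad
\partial^2 f^K/\partial\delta\partial u^K=-\mu^K,\qquad
\partial f^K/\partial u^K=-\mu^K\delta,
\]
together with the second derivatives of \(v_2\) in each case: \(\partial^2 v_2/\partial x^2=0\), the mixed partial equal to \(-h^K\), and \(\partial^2 v_2/\partial\delta^2=h^K(\lambda^K-\mu^Ku_2^{K,*})\) in the non-clearing case; \(\partial^2 v_2/\partial x^2=h^K/B\) with the other partials zero in the clearing case. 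Substituting these exactly as in the proof of Proposition \ref{thm: 2 class scheduling v_n second der wrt delta} reproduces the stated formulas.

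\textbf{Main obstacle.} The algebra is routine. The step needing care is justifying that Theorems \ref{thm:delta_der_value_function} and \ref{thm:delta_second_der_value_function} apply. This requires checking the regular-point conditions and that \(d u_1^{K,*}/d\delta\) exists, which is covered by the hypothesis that \(\delta\) is regular. It also requires confirming that the reduction to \(v_2(x_2^K,\delta)\), with all maintained-empty classes fixed, is consistent with the reduced problem in Definition \ref{def:regular_point}.
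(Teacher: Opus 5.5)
Your proposal is correct. The first-derivative part is the paper's argument: apply Theorem \ref{thm:delta_der_value_function} with the explicit $g$ and $\vecf$, and the two closed forms of $v_2$ in \eqref{eq v n - 1 K class}.

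For the second derivative you take a different, if modest, route. You substitute into the general formula of Theorem \ref{thm:delta_second_der_value_function}, as the paper does for the two-class Proposition \ref{thm: 2 class scheduling v_n second der wrt delta}, and then show that all contributions from classes $k<K$ drop out. The paper's proof of this $K$-class proposition instead differentiates the closed-form first derivatives directly in $\delta$. It uses only two facts: $u_2^{K,*}=1-\sum_{k<K}\lambda^k/\mu^k$ does not depend on $\delta$, and $x_2^K=x_1^K+\delta(\lambda^K-\mu^K u_1^{K,*})$, so that $dx_2^K/d\delta=\lambda^K-\mu^K u_1^{K,*}-\delta\mu^K\,du_1^{K,*}/d\delta$. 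One product-rule step then gives both cases.

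The two routes trade off as follows.
\begin{itemize}
\item The paper's route is shorter. It needs only that the first-derivative formula holds on a neighborhood and that $du_1^{K,*}/d\delta$ exists, with no bookkeeping of the $K$-dimensional cross terms.
\item Your route needs the $C^2$ hypotheses behind Theorem \ref{thm:delta_second_der_value_function} and the check that the $k<K$ components of $d\vecu_1^*/d\delta$ are annihilated. In return it shows that the $K$-class formula is a direct specialization of the general sensitivity theorem. Your listed partials do reproduce both stated expressions.
\end{itemize}

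Your explicit check that the two forms of $v_2$ agree to first order at $\sigma_2^K=T-\delta$ is a bit more careful than the paper, which only asserts that this boundary case follows from the explicit $v_2$. Finally, the regularity of $\delta$ that you invoke is not a hypothesis of the statement as written. The paper's proof, like yours, tacitly restricts to regular subregions.
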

The condition \(\sigma_1^k<\delta\), \(k\in\{1,\dots,K-1\}\), implies that the first \(K-1\) classes clear before the end of the first period, while \(\sigma_1^K>\delta\) implies that class \(K\) remains positive at the start of period \(2\). %

\begin{proof}{Proof of Proposition \ref{thm: K class scheduling first der}.}
    On regular subregions, we use Theorem \ref{thm:delta_der_value_function} i.e.
    \[
        \pd{}{\delta}v_1
        =
        \pd{g}{\delta}
        +
        \pd{\vecf}{\delta}\pd{v_2}{\vecx_2}
        +
        \pd{v_2}{\delta}
    \]
    to solve for \(\pd{}{\delta}v_1\). The terminal-clearing boundary for class \(K\) follows from the same explicit expression for \(v_2\). From \eqref{eq: g K class} and \eqref{eq: f K class} we have,
    \begin{equation*}
        \pd{g}{\delta} = h^K \left( x_1^K + \delta (\lambda^K - \mu^K u_1^{K,*})\right),
    \end{equation*}
    \begin{equation*}
        \pd{\vecf}{\delta} = \begin{bmatrix}
            0 & \quad 0 & \quad \dots & \quad \lambda^K - \mu^K u_1^{K,*} 
        \end{bmatrix}^\intercal.
    \end{equation*}
    If \(\sigma_2^K\le T-\delta\), i.e., class \(K\) clears after period \(1\) and by the terminal time, we have
    \begin{equation*}
        \pd{v_2}{\vecx_2} = \begin{bmatrix}
            0 & \quad 0 & \quad \dots & \quad h^K \left( \frac{x_2^K}{\mu^K u_2^{K,*} - \lambda^K} \right)
        \end{bmatrix}^\intercal,
    \end{equation*}
    \begin{equation*}
        \pd{v_2}{\delta} =  0,
    \end{equation*} and so using \eqref{eq:delta_der_value_function} we have,
    \begin{equation*}
        \begin{aligned}
            \pd{}{\delta}v_1 
            & =h^K \left( x_1^K + \delta (\lambda^K - \mu^K u_1^{K,*})\right) + h^K(\lambda^K - \mu^K u_1^{K,*})\left(\frac{x_2^K}{\mu^K u_2^{K,*} - \lambda^K}  \right) \\
            & = h^K x_2^K + h^K(\lambda^K - \mu^K u_1^{K,*})\left(\frac{x_2^K}{\mu^K u_2^{K,*} - \lambda^K}  \right) \\
            & = h^K x_2^K \left( 1 + \frac{\lambda^K - \mu^K u_1^{K,*}}{\mu^K u_2^{K,*} - \lambda^K} \right).
        \end{aligned}
    \end{equation*}
    If instead \(\sigma_2^K>T-\delta\), i.e., class \(K\) does not clear by the terminal time, we have
    \begin{equation*}
        \pd{v_2}{\vecx_2} = \begin{bmatrix}
            0 & \quad 0 & \quad \dots & \quad h^K (T - \delta)
        \end{bmatrix} ^\intercal,
    \end{equation*}
    \begin{equation*}
        \pd{v_2}{\delta} = h^K \left( -x_2^K - (T - \delta)(\lambda^K - \mu^K u_2^{K,*}) \right),
    \end{equation*}
    and so 
    \begin{equation*}
        \begin{aligned}
            \pd{}{\delta}v_1 
            & = h^K \left(  x_1^K + \delta (\lambda^K - \mu^K u_1^{K,*}) + (\lambda^K - \mu^K u_1^{K,*})(T - \delta)   -x_2^K - (T - \delta)(\lambda^K - \mu^K u_2^{K,*})  \right) \\
            & = h^K \left( x_2^K - x_2^{K}+ (T - \delta)\left( \lambda^K - \mu^K u_1^{K,*} - \lambda^K + \mu^K u_2^{K,*} \right) \right)\\
            & = h^K \mu^K(T - \delta)( u_2^{K,*} -u_1^{K,*}),
        \end{aligned} 
    \end{equation*}
    This proves the first-derivative formula. We now differentiate these expressions on the strict clearing regimes \(\sigma_2^K>T-\delta\) and \(\sigma_2^K<T-\delta\) to obtain the displayed second derivatives. In this regime, for periods \(i\geq 2\),
    \[
        u_i^{K,*}
        =
        1-\sum_{k=1}^{K-1}\frac{\lambda^k}{\mu^k},
    \]
    so \(u_2^{K,*}\) is constant with respect to \(\delta\).

    First consider the case \(\sigma_2^K>T-\delta\), where class \(K\) does not clear by the terminal time. Differentiating
    \[
        \pd{}{\delta}v_1
        =
        h^K\mu^K(T-\delta)(u_2^{K,*}-u_1^{K,*})
    \]
    gives
    \begin{align*}
        \pdd{}{\delta}v_1
        &=
        h^K\mu^K\left(
            -(u_2^{K,*}-u_1^{K,*})
            -(T-\delta)\frac{d u_1^{K,*}(\delta)}{d\delta}
        \right)\\
        &=
        h^K\mu^K\left(
            (\delta-T)\frac{d u_1^{K,*}(\delta)}{d\delta}
            -(u_2^{K,*}-u_1^{K,*})
        \right).
    \end{align*}

    Next consider the case \(\sigma_2^K<T-\delta\), where class \(K\) clears after period \(1\) and strictly before the terminal time. Let
    \[
        A:=\lambda^K-\mu^K u_1^{K,*},
        \qquad
        B:=\mu^K u_2^{K,*}-\lambda^K .
    \]
    Then \(B\) is constant in \(\delta\),
    \(dA(\delta)/d\delta=-\mu^K d u_1^{K,*}(\delta)/d\delta\), and,
    \[
        x_2^K=x_1^K+\delta A
        \quad\Longrightarrow\quad
        \frac{d x_2^K(\delta)}{d\delta}
        =
        A-\delta\mu^K\frac{d u_1^{K,*}(\delta)}{d\delta}.
    \]
    Differentiating,
    \[
        \pd{}{\delta}v_1
        =
        h^K x_2^K\left(1+\frac{A}{B}\right),
    \]
    yields,
    \begin{align*}
        \pdd{}{\delta}v_1
        &=
        h^K
        \left[
            \left(A-\delta\mu^K\frac{d u_1^{K,*}(\delta)}{d\delta}\right)
            \left(1+\frac{A}{B}\right)
            -
            x_2^K\frac{\mu^K\frac{d u_1^{K,*}(\delta)}{d\delta}}{B}
        \right]\\
        &=
        h^K
        \left[
            A\left(1+\frac{A}{B}\right)
            -
            \frac{d u_1^{K,*}(\delta)}{d\delta}\mu^K
            \left(
                \delta+\frac{x_2^K+\delta A}{B}
            \right)
        \right]\\
        &=
        h^K(\lambda^K-\mu^K u_1^{K,*})
        \left(
            1+\frac{\lambda^K-\mu^K u_1^{K,*}}
            {\mu^K u_2^{K,*}-\lambda^K}
        \right) \\
        &\quad -
        h^K\frac{d u_1^{K,*}(\delta)}{d\delta}\mu^K
        \left(
            \delta+
            \frac{x_2^K+\delta(\lambda^K-\mu^K u_1^{K,*})}
            {\mu^K u_2^{K,*}-\lambda^K}
        \right),
    \end{align*}
    which is the second part of the claimed expression. \Halmos \end{proof}

We are now ready to present the proof of Theorem \ref{prop:K_class_eventual_shape}.

\proof{Proof of Theorem \ref{prop:K_class_eventual_shape}.}
  Since classes \(1,\ldots,K-1\) clear strictly before the end of period
  \(1\),
  \[
  D
  :=
  u_2^{K,*}-u_1^{K,*}
  =
  \sum_{k=1}^{K-1}
  \left(
  u_1^{k,*}-\frac{\lambda^k}{\mu^k}
  \right)
  >0.
  \]
  The first expression in Proposition
  \ref{thm: K class scheduling first der} is therefore positive. For the
  second expression,
  \[
  1+
  \frac{\lambda^K-\mu^K u_1^{K,*}}
  {\mu^K u_2^{K,*}-\lambda^K}
  =
  \frac{\mu^K D}
  {\mu^K u_2^{K,*}-\lambda^K}
  >0.
  \]
  This proves part (i).

  If \(u_1^{K,*}=0\), regularity gives
  \(d u_1^{K,*}(\delta)/d\delta=0\). Substitution into Proposition
  \ref{thm: K class scheduling first der} gives the signs in parts (ii)
  and (iii) as in the proof of Proposition
  \ref{prop: region 2 second der}. Now suppose \(u_1^{K,*}>0\), and set
  \[
  y_k:=\mu^k u_1^{k,*}-\lambda^k,
  \qquad
  D=\sum_{k=1}^{K-1}\frac{y_k}{\mu^k}.
  \]
  By the Cauchy--Schwarz inequality,
  \[
  \min_{\substack{y_k>0\\
  \sum_{k=1}^{K-1}y_k/\mu^k=D}}
  \sum_{k=1}^{K-1}
  \frac{h^k(x_1^k)^2}{2y_k}
  =
  \frac{a}{D},
  \qquad
  a:=
  \left(
  \sum_{k=1}^{K-1}
  x_1^k\sqrt{\frac{h^k}{2\mu^k}}
  \right)^2.
  \]
  Thus, after optimizing the allocation among classes
  \(1,\ldots,K-1\), the remaining problem is the same single-dimensional
  problem as in the proof of Theorem
  \ref{prop: Region 3 concavity}, with \(a/D\) replacing the class-\(1\)
  cost and \(u_1^{K,*}=u_2^{K,*}-D\). The same argument gives
  \(\pdd{}{\delta}v_1<0\), whether or not class \(K\) clears by \(T\).
  This proves parts (ii) and (iii).
\Halmos
 \endproof 

\section{Dynamic Cut-off Points in the Single-stage Cost}\label{ap:single-stage-example}
Consider the single-stage problem at the base review length \(\delta_0\), where there is no continuation value. The dynamic marginal value reduces to
\[
M^k(u^k):=-\frac{\partial g}{\partial u^k}
=h^k\mu^k\frac{(\delta_0\wedge\sigma^k(u^k))^2}{2}.
\]
If the positive-allocation set is the priority prefix
\(B=\{1,\ldots,m\}\), then SCS is vacuous when \(m=K\). When \(m<K\),
if the lowest-priority class receiving positive capacity, class \(m\),
does not clear during the period, then
\[
M^m=h^m\mu^m\frac{\delta_0^2}{2}
>
h^{m+1}\mu^{m+1}\frac{\delta_0^2}{2}
=M^{m+1}(0),
\]
and SCS follows. If class \(m\) clears, then SCS holds provided
\[
h^m\mu^m\bigl(\sigma^m(u^{m,*})\bigr)^2
>
h^{m+1}\mu^{m+1}\delta_0^2.
\]
Equivalently, with \(c_k=h^k\mu^k\),
\[
\rho_m:=1-\sum_{r=1}^m\frac{\lambda^r}{\mu^r},
\qquad
\Theta_m:=\sum_{r=1}^m\frac{x^r}{\mu^r}\sqrt{\frac{c_r}{c_m}},
\]
the clearing case satisfies SCS exactly when
\[
\frac{\Theta_m}{\rho_m}
>
\delta_0\sqrt{\frac{c_{m+1}}{c_m}},
\]
with the clearing regime itself requiring
\(\Theta_m/\rho_m<\delta_0\). Thus strict priority proves SCS in the
nonclearing marginal case. In the clearing marginal case, SCS can fail
only at a cutoff tie between class \(m\) and the first zero-allocation
class.

\section{A Four-class Numerical Example}\label{ap:4 class}
Figure \ref{fig: four class example} plots \(v_1\) versus \(\delta\) for a four class system. We once again observe non-differentiability in the value function only at divisors of \(\tilde{\delta}^k\) (see \eqref{eq: tilde delta general def}), with the function being monotone after \(\delta\) is large enough such that it is optimal for \(K - 1 = 3\) classes to clear out in the first period (i.e. for \(\delta > \tilde{\delta}^3\)).
\begin{figure}
    \FIGURE{
    \includegraphics[width=0.7\textwidth]{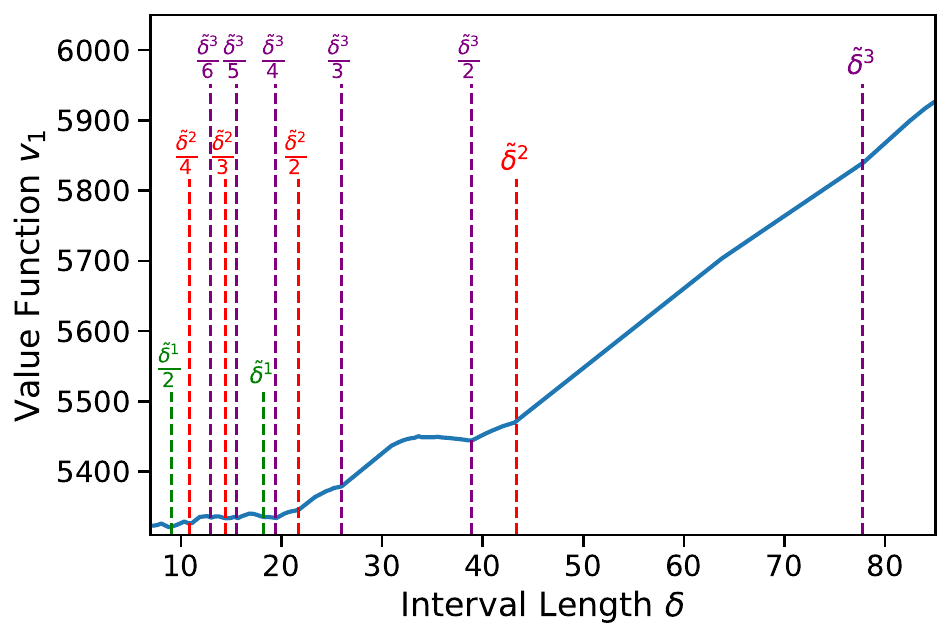}}
    {Value function \(v_1\) versus interval length \(\delta\) for the \(4\) class fluid scheduling problem. \label{fig: four class example}}
    { The green, red and purple dotted lines show the first few divisors of \(\tilde{\delta}^1\), \(\tilde{\delta}^2\) and \(\tilde{\delta}^3\) respectively (see \eqref{eq: tilde delta general def}). \(v_1\) is not differentiable at these countably infinite values. The system parameters are \((\lambda^1, \lamda^2, \lambda^3, \lamda^4) = (0.45, 0.25, 0.12, 0.1)\), \((\mu^1, \mu^2, \mu^3, \mu^4) = (1, 1, 1, 1)\), \((h^1, h^2, h^3, h^4) = (7, 6, 5, 4)\), \((x_1^1, x_1^2, x_1^3,x_1^4 ) = (4, 3, 1, 1)\), \(T = 100\).}
\end{figure}

\section{Additional Numerical Results}\label{ap:numerics}
\subsection{Additional Fluid Model Experiments}
We expand upon the numerical analysis of the two-class fluid scheduling problem from Section \ref{section: fluid model experiments}. Here, we study systems where $\mu^1 \ne \mu^2$. The initial conditions are set to $(8, 4)$ for all systems, with the horizon length $T = 100$, chosen to be sufficiently large enough to at least empty systems where the two classes have the lowest ratio between their $c\mu$ indexes. Figure \ref{fig: fluid numerics 2} plots the value function as a function of $\delta$ for $\rho\in \{0.7, 0.9\}$, $r\in \{0.62, 1.33, 2.86\}$ and $\nu \in \{2, 5, 20\}$.

We observe similar behavior as in the $\mu^1 = \mu^2$ systems of Section \ref{section: fluid model experiments}. Systems with the lowest $\nu$ value generally have the smallest magnitude of relative cost increase with increasing $\delta$, though the convex structure of $\nu = 20$ systems can once again cause their relative cost increase to decrease below that of $\nu = 2$ systems, for sufficiently small $\delta$. As in Section \ref{section: fluid model experiments} and Theorem \ref{prop: Region 3 concavity}, this convex structure only appears when the horizon is long enough for class $2$ to empty. This is most apparent when $r = 2.86$ and $\rho = 0.7$. We also observe that in high utilization systems dominated by class $1$ numbers (i.e. $r \in \{1.33, 2.86\}$, $\nu = 20$ suffers lesser cost increases with decreased control, as compared to $\nu = 5$ systems, analogous to Section \ref{section: fluid model experiments}. All systems also exhibit similar curvature as seen for the $\mu^1 = \mu^2$ systems of Section \ref{section: fluid model experiments}; $\nu = 2$ systems are initially linear before becoming concave, $\nu = 5$ systems are generally always concave while $\nu = 20$ systems are mostly concave, with a convex region appearing for systems where class $2$ is able to clear out in the horizon.

\begin{figure}
    \FIGURE{
    \includegraphics[width=\textwidth]{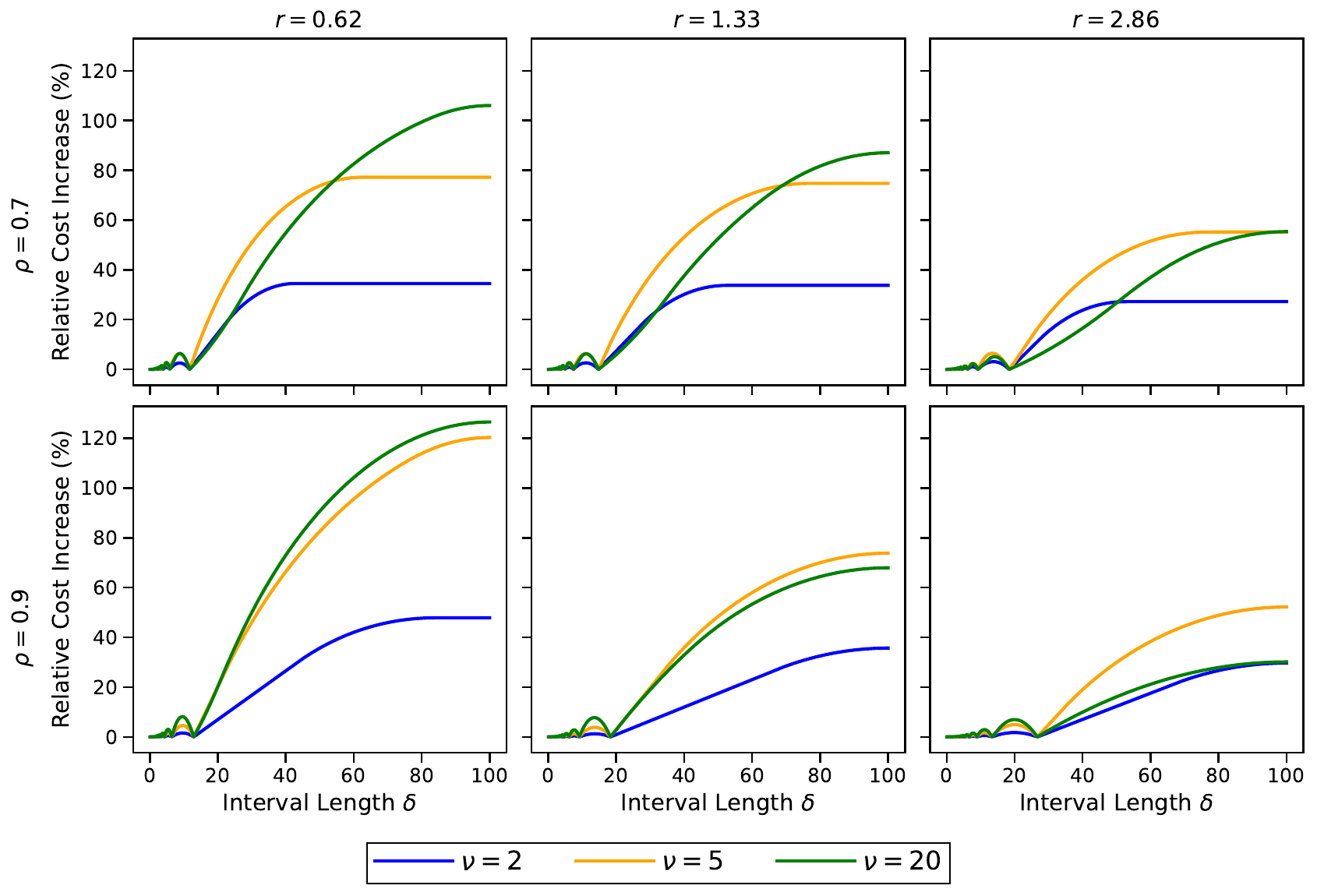}}
    {Additional Fluid System Experiments for Relative cost increase compared to continuous-time control versus $\delta$, for various cost ratios $\nu$, load ratios $r$, and utilizations $\rho$. \label{fig: fluid numerics 2}}
    { Constant $\rho$ across columns, constant Load Ratio across rows. Initial conditions fixed to \(\vecx_1 = (8,4)\) for all systems with \(( \mu^1, \mu^2) = (0.9 , 1.2)\) for all systems. The system parameters that vary are as follows: For the $\rho = 0.7$ systems (from left to right) , $(\lamda^1, \lamda^2) = (0.217, 0.465), (0.359, 0.359), (0.465, 0.217)$. For the $\rho = 0.9$ systems, (from left to right), $(\lamda^1, \lamda^2) = (0.28, 0.6), (0.462, 0.462),(0.6, 0.28)$.}
\end{figure}

\subsection{Additional Stochastic System Experiments}
We verify the validity of the results of our stochastic experiments of Section \ref{sec: stochastic system experiments} by rerunning each system, first with an increased number of samples and next with an increased maximum number in system. Figure \ref{fig: stoch sys_1 bigger samples} plots the first system of Section \ref{sec: stochastic system experiments} (see Figure \ref{fig: stoch sys_1}) with an increased number of samples. Figure  \ref{fig: stoch sys_1 bigger space} plots the same system with an increased maximum number in system. Similarly, Figures \ref{fig: stoch sys_2 bigger samples} and \ref{fig: stoch sys_2 bigger space} plot the second system of Section \ref{sec: stochastic system experiments} (see Figure \ref{fig: stoch sys_2}) with an increased number of samples and maximum number in system respectively. 
These figures appear almost identical to those of Section \ref{sec: stochastic system experiments}, with minimal changes in the scaled value functions as we vary interval length $\delta$. This suggests that the 
choices of maximum number in system for each system in Section \ref{sec: stochastic system experiments} are large enough that almost no arrivals are rejected, leading to minimal effect on system dynamics. Similarly, the sample sizes chosen for each system in Section \ref{sec: stochastic system experiments} are large enough that the choices of optimal policies for each systems is unaffected by sample numbers.

\begin{figure}
    \FIGURE{
    \includegraphics[width=\textwidth]{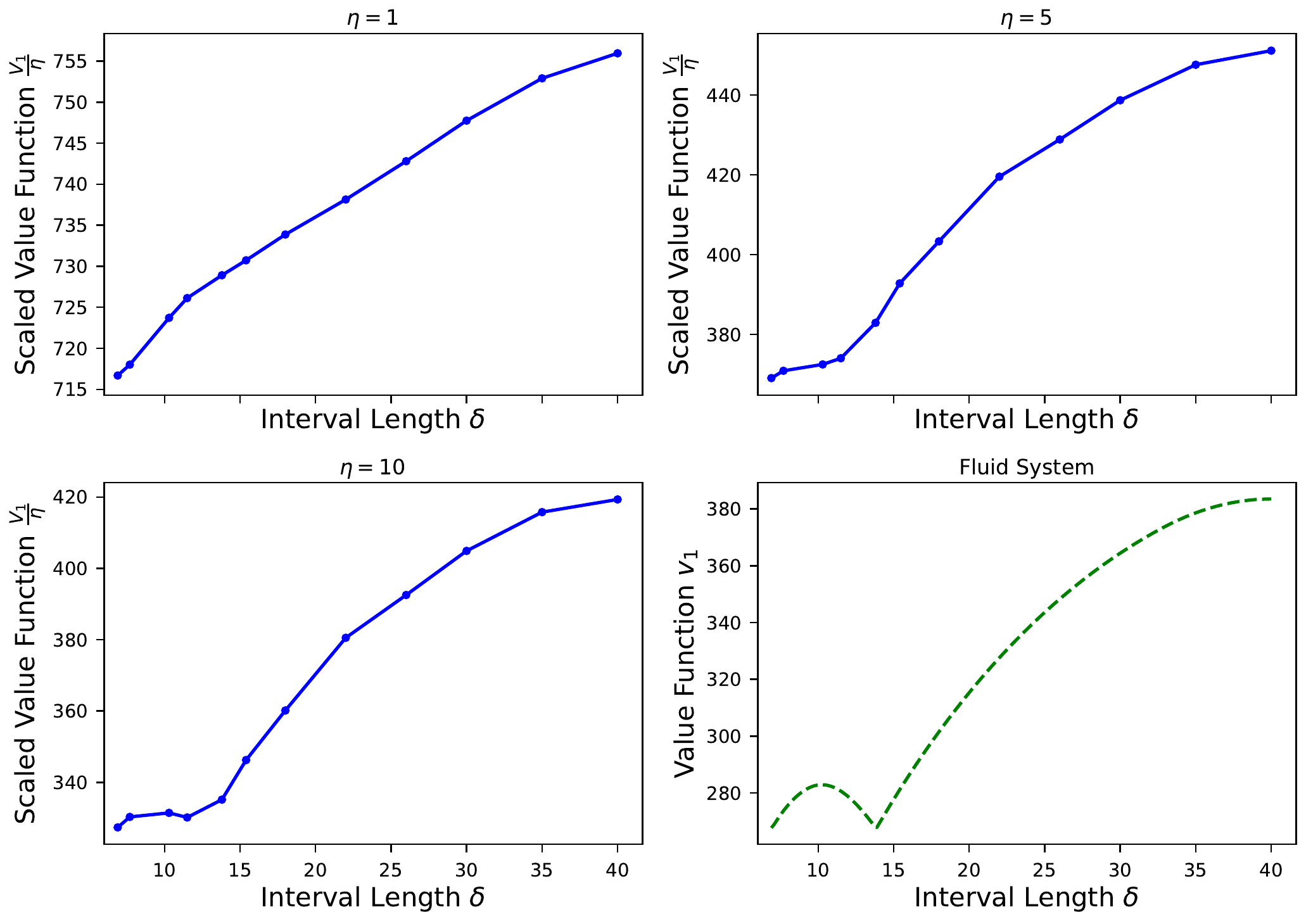}}
    {Scaled value function \(V_1 /\eta\) versus \(\delta\) for a sequence of stochastic systems along with a plot of the fluid value function \(v_1\) with the same parameters as the base stochastic system.\label{fig: stoch sys_1 bigger samples}}
    { The base stochastic system parameters are: $(\lambda^1, \lambda^2) = (0.35, 0.3)$, $(h^1 , h^2) = (3, 1)$, $(X^1, X^2) = (9, 1)$, $T = 40$. The fluid system has the same parameters and initial conditions as the base stochastic system, with $\mu^1 = \mu^2 = 1$. This is the same system as Figure \ref{fig: stoch sys_1}, with the sample numbers of each system doubled. }
\end{figure}

\begin{figure}
    \FIGURE{
    \includegraphics[width=\textwidth]{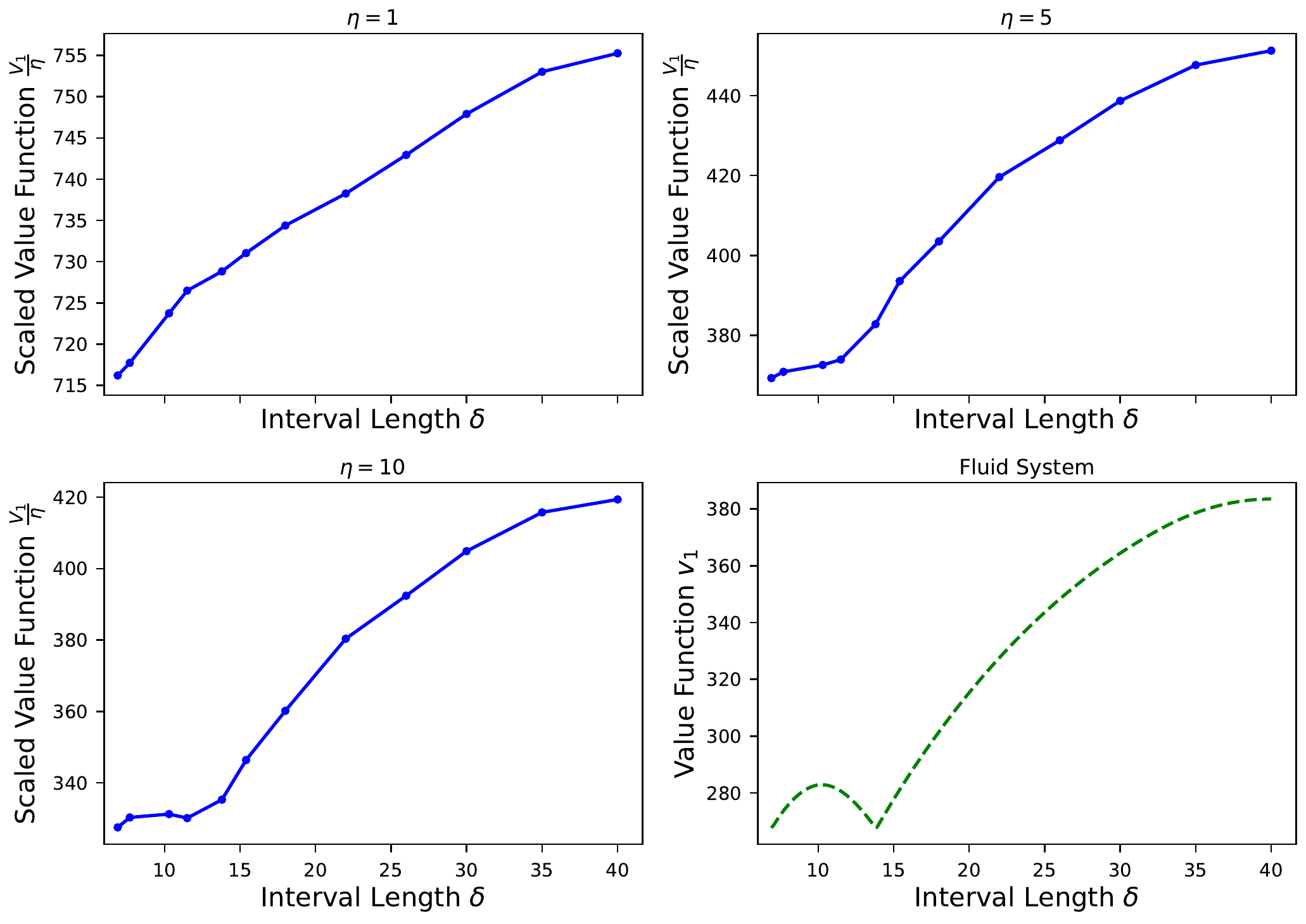}}
    {Scaled value function \(V_1 /\eta\) versus \(\delta\) for a sequence of stochastic systems along with a plot of the fluid value function \(v_1\) with the same parameters as the base stochastic system.\label{fig: stoch sys_1 bigger space}}
    { The base stochastic system parameters are: $(\lambda^1, \lambda^2) = (0.35, 0.3)$, $(h^1 , h^2) = (3, 1)$, $(X^1, X^2) = (9, 1)$, $T = 40$. The fluid system has the same parameters and initial conditions as the base stochastic system, with $\mu^1 = \mu^2 = 1$. This is the same system as Figure \ref{fig: stoch sys_1}, with the maximum number in system $X_{\max}$ increased to $X_{\max} \in \{35, 86, 150\}$, for $\eta \in \{1, 5, 10\}$. }
\end{figure}

\begin{figure}
    \FIGURE{
    \includegraphics[width=\textwidth]{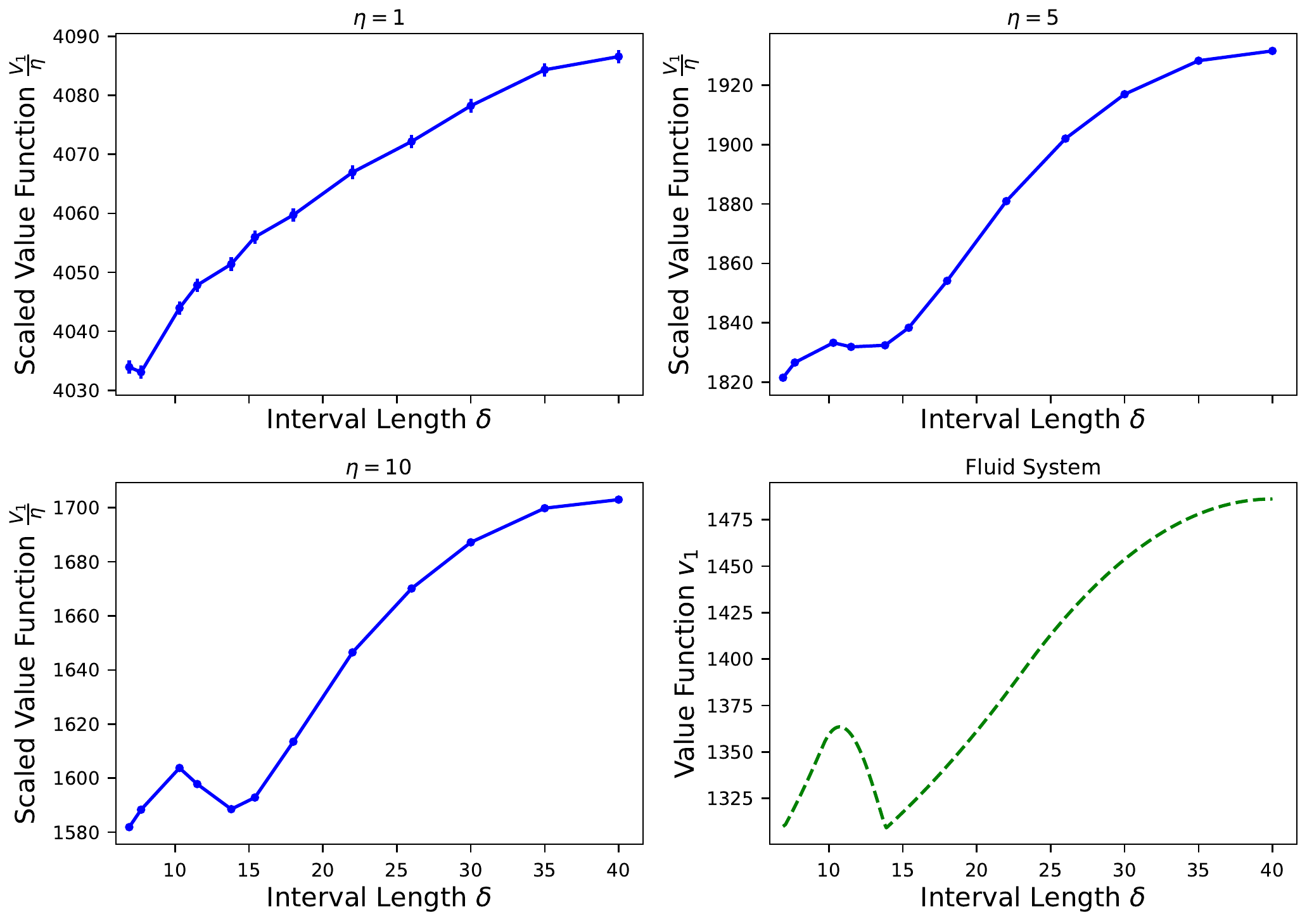}}
    {Scaled value function \(V_1 /\eta\) versus \(\delta\) for a sequence of stochastic systems along with a plot of the fluid value function \(v_1\) with the same parameters as the base stochastic system.\label{fig: stoch sys_2 bigger samples}}
    { The base stochastic system parameters are: $(\lambda^1, \lambda^2) = (0.35, 0.25)$, $(h^1 , h^2) = (20, 1)$, $(X^1, X^2) = (9, 1)$, $T = 40$. The fluid system has the same parameters and initial conditions as the base stochastic system, with $\mu^1 = \mu^2 = 1$. This is the same system as Figure \ref{fig: stoch sys_1},  with the sample numbers of each system doubled. }
\end{figure}

\begin{figure}
    \FIGURE{
    \includegraphics[width=\textwidth]{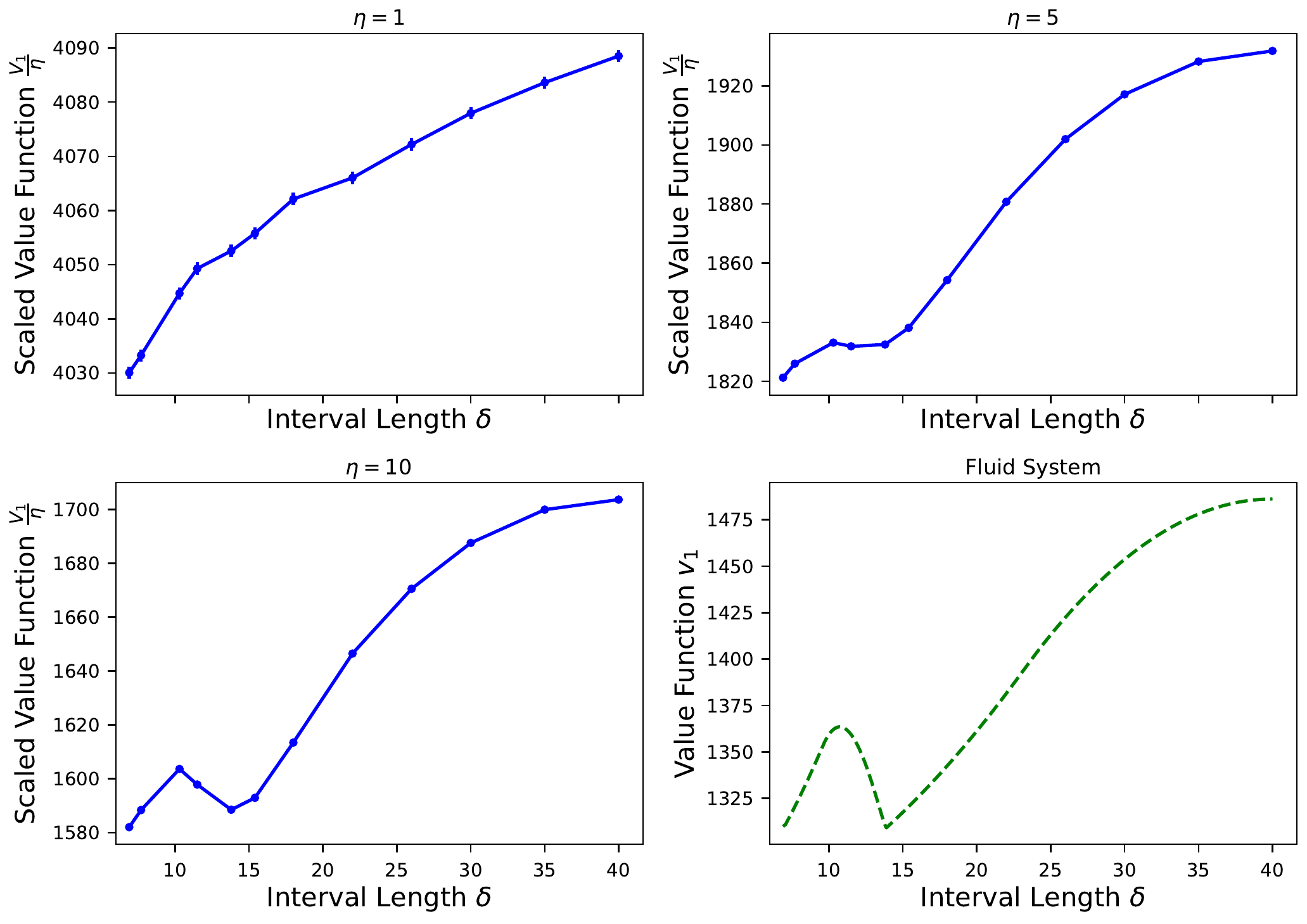}}
    {Scaled value function \(V_1 /\eta\) versus \(\delta\) for a sequence of stochastic systems along with a plot of the fluid value function \(v_1\) with the same parameters as the base stochastic system.\label{fig: stoch sys_2 bigger space}}
    { The base stochastic system parameters are: $(\lambda^1, \lambda^2) = (0.35, 0.25)$, $(h^1 , h^2) = (20, 1)$, $(X^1, X^2) = (9, 1)$, $T = 40$. The fluid system has the same parameters and initial conditions as the base stochastic system, with $\mu^1 = \mu^2 = 1$. This is the same system as Figure \ref{fig: stoch sys_2}, with the maximum number in system $X_{\max}$ increased to $X_{\max} \in \{40, 92, 155\}$, for $\eta \in \{1, 5, 10\}$. }
\end{figure}
\end{document}